\pgfplotsset{compat=newest}
\tikzset{
    hatch distance/.store in=\hatchdistance,
    hatch distance=100pt,
    hatch thickness/.store in=\hatchthickness,
    hatch thickness=0.3pt
}
\tikzstyle{interrupt}=[
\newcommand\footnoteref[1]{\protected@xdef\@thefnmark{\ref{#1}}\@footnotemark}
\newtheorem{theorem}{Theorem}
\newtheorem{lemma}[theorem]{Lemma}
\newtheorem{corollary}[theorem]{Corollary}
\newtheorem{proposition}[theorem]{Proposition}
\newtheorem{problem}[theorem]{Problem}
\newtheorem{remark}[theorem]{Remark}
\newtheorem{definition}{Definition}
\newenvironment{mymatrix}{\begin{bmatrix}} {\end{bmatrix} }
\def\ve#1{{\mathchoice{\mbox{\boldmath$\displaystyle #1$}}%
              {\mbox{\boldmath$\textstyle #1$}}%
              {\mbox{\boldmath$\scriptstyle #1$}}%
              {\mbox{\boldmath$\scriptscriptstyle #1$}}}}
\newcommand{\rev}[1]{{#1}}
\definecolor{darkgreen}{rgb}{0,0.6,0}
\newcommand{\Fq}{\ensuremath{\mathbb{F}_q}}
\newcommand{\Fqm}{\ensuremath{\mathbb{F}_{q^m}}}
\DeclareMathOperator{\diag}{diag}
\newcommand{\Code}{\mathcal{C}}
\newcommand{\0}{\ve{0}}
\renewcommand{\H}{\ve{H}}
\renewcommand{\a}{\ve{a}}
\renewcommand{\b}{\ve{b}}
\newcommand{\e}{\ve{e}}
\renewcommand{\c}{\ve{c}}
\renewcommand{\d}{\ve{d}}
\renewcommand{\e}{\ve{e}}
\newcommand{\G}{\ve{G}}
\newcommand{\A}{\ve{A}}
\newcommand{\B}{\ve{B}}
\newcommand{\h}{\ve{h}}
\newcommand{\wtR}{\mathrm{wt}_\mathrm{R}}
\newcommand{\ZZ}{\mathbb{Z}}
\newcommand{\rk}{\mathrm{rk}}
\newcommand{\x}{\ve{x}}
\newcommand{\s}{\ve{s}}
\renewcommand{\t}{\ve{t}}
\renewcommand{\r}{\ve{r}}
\newcommand{\Eset}{\mathcal{E}}
\newcommand{\Fset}{\mathcal{F}}
\newcommand{\wtSR}{\mathrm{wt}_{\mathrm{SR},\ell}}
\newcommand{\dSR}{\mathrm{d}_{\mathrm{SR},\ell}}
\newcommand{\qbinomial}[2]{\genfrac{[}{]}{0pt}{}{#1}{#2}_q}
\newcommand{\Iset}{\mathcal{I}}
\newcommand{\wtH}{\mathrm{wt}_\mathrm{H}}
\newcommand{\Jset}{\mathcal{J}}
\newcommand{\NM}[2]{\mathrm{NM}_{#1}(#2)}
\newcommand{\Wcode}{W_{\Code}}
\newcommand{\Werror}{W_{\mathrm{errors}}}
\newcommand{\Wnaive}{W_\mathrm{GRS}}
\newcommand{\Wnew}{W_\mathrm{new}}
\newcommand{\WnewLB}{W_\mathrm{new}^{\mathrm{(LB)}}}
\newcommand{\WnewUB}{W_\mathrm{new}^{\mathrm{(UB)}}}
\newcommand{\WnewUBsimple}{W_\mathrm{new}^{\mathrm{(UB, simple)}}}
\newcommand{\Wperiteration}{W_\mathrm{iter}}
\newcommand{\Woptimal}{W_\mathrm{optimal}}
\newcommand{\idxh}{h}
\newcommand{\idxg}{g}
\newcommand{\idxf}{f}
\newcommand{\srem}{s_{\text{\normalfont rem}}}
\newcommand{\lsr}{\ell}
\newcommand{\npr}{\eta}
\newcommand{\nmmin}{\mu}
\newcommand{\scomp}{\mathsf{scomp}_{\ambdim}}
\newcommand{\DHSD}{\ensuremath{\mathsf{SynDec}_{\mathsf{H}}} \xspace}
\newcommand{\DSRSD}{\ensuremath{\mathsf{SynDec}_{\ell{-}\mathsf{SR}}} \xspace}
\newcommand{\DSRSDnoell}[1]{\ensuremath{\mathsf{SynDec}_{#1{-}\mathsf{SR}}} \xspace }
\newcommand{\ZPP}{\mathsf{ZPP}}
\newcommand{\Pclass}{\mathsf{P}}
\newcommand{\RP}{\mathsf{RP}}
\newcommand{\coRP}{\mathsf{coRP}}
\newcommand{\NP}{\mathsf{NP}}
\newcommand{\PPT}{\mathsf{PPT}}
\newcommand{\tH}{t_{\text{H}}}
\newcommand{\tHtilde}{\tilde{t}_{\text{H}}}
\newcommand{\TsetWITHARGUMENTS}[1]{\mathcal{T}_{#1}}
\newcommand{\Tset}{\TsetWITHARGUMENTS{t,\ell,\nmmin}}
\newcommand{\Tsets}{\TsetWITHARGUMENTS{s,\ell,\nmmin}}
\newcommand{\Tsett}{\Tset}
\newcommand{\Tsettord}{\Tset^{(\mathsf{ord})}}
\newcommand{\TsetordWITHARGUMENTS}[1]{\mathcal{T}_{#1}^{(\mathsf{ord})}}
\newcommand{\Ahco}{\mathcal{A}_{\textrm{H}}^{\coRP}}
\newcommand{\Arco}{\mathcal{A}_{\textrm{SR}}^{\coRP}}
\newcommand{\Ahrp}{\mathcal{A}_{\textrm{H}}^{\RP}}
\newcommand{\Arrp}{\mathcal{A}_{\textrm{SR}}^{\RP}}
\newcommand{\Ap}{\mathcal{A}^{\Pclass}}
\newcommand{\Arp}{\mathcal{A}^{\RP}}
\newcommand{\Acorp}{\mathcal{A}^{\coRP}}
\newcommand{\Azpp}{\mathcal{A}^{\ZPP}}
\newcommand{\Anp}{\mathcal{A}^{\NP}}
\newcommand{\tSR}{{t}_{\textrm{SR}}}
\newcommand{\tSRtilde}{\tilde{t}_{\text{SR}}}
\newcommand{\sample}{\xleftarrow{\$}}
\newcommand{\col}{{\tt KeepCols} }
\newcommand{\SupportSet}{\Xi_{q,\ambdim}}
\newcommand{\Msum}{\mathsf{M}}
\newcommand{\te}{\t_{\e}}
\newcommand{\Esete}{\Eset_{\e}}
\newcommand{\NumItWorstCase}{\max_{\substack{\e \in \Fqm^n \, : \\ \wtSR(\e) = t}} \mathbb{E}[\text{\#iterations}]}
\newcommand{\True}{\mathsf{true}}
\newcommand{\False}{\mathsf{false}}
\definecolor{constructionAcolor}{rgb}{1,0.4,0}  %
\definecolor{constructionBcolor}{rgb}{0.8,0,1}	%
\definecolor{constructionCcolor}{rgb}{0,0,1}	%
\definecolor{constructionDcolor}{rgb}{0,0,0}	%
\definecolor{constructionEcolor}{rgb}{0,0.7,0}	%
\newcommand{\ambdim}{\zeta}
\newcommand{\ext}{\mathrm{ext}}
\newcommand{\extFqmFqVector}[1]{\ext_{q, m}^{#1}}
\newcommand{\rhoprob}{\varrho_{q,\ambdim}}
\newcommand{\rhoprobt}{\varrho_{q,\ambdim,s}}
\newcommand{\const}{\Delta}
\newcommand{\Q}{Q_{t,\ell,\mu}}
\newcommand{\ps}{\tilde{p}_{\s}}
\newcommand{\psnumber}{\tilde{p}_{\s_i}}
\title{Generic Decoding in the Sum-Rank Metric}
\author{\IEEEauthorblockN{Sven Puchinger, \emph{IEEE Member}, Julian Renner \emph{IEEE Student Member}, Johan Rosenkilde}
\thanks{%
This work was partly presented at the IEEE International Symposium on Information Theory (ISIT), 2020 \cite{puchinger2020generic}. %
S.~Puchinger and J.~Rosenkilde are with the Department of Applied Mathematics and Computer Science, Technical University of Denmark (DTU), Denmark.
J.~Renner is with the Institute for Communications Engineering, Technical University of Munich (TUM), Germany. %
S.~Puchinger has been supported by the European Union's Horizon 2020 research and innovation programme under the Marie Sklodowska-Curie grant agreement no.~713683.
J.~Renner has been supported by the European Union's Horizon 2020 research and innovation programme under the European Research Council (ERC) grant agreement no.~801434.
}}
\begin{document}

\maketitle

\begin{abstract}
We propose the first non-trivial generic decoding algorithm for codes in the sum-rank metric.
The new method combines ideas of well-known generic decoders in the Hamming and rank metric.
For the same code parameters and number of errors, the new generic decoder has a larger expected complexity than the known generic decoders for the Hamming metric and smaller than the known rank-metric decoders.
Furthermore, we give a formal hardness reduction, providing evidence that generic sum-rank decoding is computationally hard.
As a by-product of the above, we solve some fundamental coding problems in the sum-rank metric:
we give an algorithm to compute the exact size of a sphere of a given sum-rank radius, and also give an upper bound as a closed formula; and we study erasure decoding with respect to two different notions of support.
\end{abstract}

\begin{IEEEkeywords}
Decisional Sum-Rank Syndrome Decoding Problem, Erasure Decoding, Generic Decoding, Probabilistic Hardness Reduction, Sum-Rank-Metric Codes
\end{IEEEkeywords}

\section{Introduction}

The sum-rank \rev{metric} is a family of metrics which contains both Hamming and rank metric as special cases and in general can be seen as a mix of the two.
It was introduced under the name ``extended rank metric'' as a suitable distance measure for multi-shot network coding in 2010 \cite{nobrega2010multishot}.
Since then, several code constructions and efficient decoders have been proposed for the metric \cite{wachter2011partial,wachter2012rank,wachter2015convolutional,napp2017mrd,napp2018faster,martinez2018skew,boucher2019algorithm,martinez2019reliable,caruso2019residues,bartz2020fast,martinezpenas2020sumrank}.
The codes have also been studied in the context of distributed storage \cite{martinez2019universal}, further aspects of network coding \cite{martinez2019reliable}, and space-time codes \cite{shehadeh2020rate}.
Recently, the authors of \cite{byrne2020fundamental} derived several fundamental results on sum-rank-metric codes, including various bounds, MacWilliams identities, and new code constructions.

A generic decoder is an algorithm that takes a code and a received word as input and outputs a codeword that is close to the received word, without any restriction on or knowledge about the structure of the code.
Designing such algorithms has a long tradition in coding theory, both for theoretical and practical reasons: studying the complexity of generic decoding is essential to evaluate the practical security level of code-based cryptosystems such as the McEliece \cite{mceliece1978}, Niederreiter \cite{niederreiter1986knapsack} and Gabidulin--Paramonov--Tretjakov \cite{gabidulin1991ideals} cryptosystems, or the numerous variants thereof.
A trivial generic decoding algorithm is to simply tabulate the input code and compare each codeword with the received word, but there are much more efficient approaches.
For the Hamming metric, the related decision problem is NP-hard \cite{berlekamp1978inherent}, and there is also a hardness reduction for the rank metric \cite{gaborit2016hard}, so it is not surprising that all known generic decoding algorithms have exponential running time in the code parameters.

Prange \cite{prange1962use} presented in 1962 a generic decoder for the Hamming metric whose type is now \rev{known} as information-set decoding.
The basic idea is to repeatedly choose $n-\rev{k}$ random positions, where $n$ is the length and \rev{$k$ the dimension of the code}, until the chosen positions contain all the errors \rev{and the complementary positions form an information set}.
This event can be detected by re-encoding on the remaining \rev{$k$} positions, obtaining a codeword, and seeing that this is close to the received word.
There have been at least $27$ papers improving Prange's algorithm (see the list in \cite[Section~4.1]{bernstein2020classicmceliece}), which have significantly reduced the exponent of the exponential in the complexity expression.

In the rank metric, the first generic decoder was proposed in 1996 \cite{chabaud1996rsd} and since then, there have also been several improvements \cite{ourivski2002rsd, gaborit2016rsd, aragon2018new, bardet2020algebraic}.
One idea here is to repeatedly choose a sub row space (or column space) of the received word until this contains the error row space (resp.\ column space), and when it does use rank-erasure decoding techniques to decode using linear algebra.
The complexity of generic decoding in the rank metric remains significantly higher than in the Hamming metric, which results in a substantial advantage of rank-metric-based cryptosystems over their Hamming-metric analogs.

\subsection{Contributions}
In this paper, we propose the first non-trivial generic decoding algorithm \rev{for arbitrary $\Fqm$-linear codes in the sum-rank metric, where $\Fqm$ denotes the field over which the code is defined. The algorithm takes as input parameters which specify the metric, a parity-check matrix of the code, the received word, and the sum-rank weight of the additive error $t$. The algorithm outputs a vector with weight at most $t$ such that the difference of this vector and the received word is a codeword. If $t$ is at most half the minimum distance of the code, the obtained vector is equal to the error of the received word}.
For this purpose, the algorithm combines the sketched ideas for the Hamming and rank metric: we first randomly choose a rank in each block according to a carefully crafted distribution, and then for each block choose a random row or column space of the given rank.
The process succeeds when the error row or column space in each block is covered, whence decoding is performed using sum-rank erasure decoding using linear algebra.

The most involved part is to design a suitable distribution from which to draw random vectors of a given sum-rank.
In fact, we first observe that even counting the number of such vectors is non-trivial, and so drawing uniformly at random is also non-trivial.
Our distribution is more involved than this, since it turns out that the probability of successful decoding depends on how the rank errors are distributed across blocks.
Roughly, the complexity of our decoding algorithm smoothly interpolates between the basic generic decoders in the two ``extremal'' cases of the sum-rank metric: Hamming and rank metric.

Our work can be seen as a proof-of-concept that known methods of generic decoding can be adapted to the sum-rank metric.
Though out of scope of this paper, it seems reasonable that many improvements for generic decoding in Hamming and rank metric can also be applied, which might further reduce the complexity.

As related results, we study several fundamental problems related to the sum-rank metric:
\begin{itemize}
\item We propose an efficient algorithm to compute the number of vectors of a given sum-rank weight.
Apart from the use in our work, this can e.g.\ be used to efficiently compute the sphere-packing and Gilbert--Varshamov bounds in \cite{byrne2020fundamental}.
\item We give a simple upper bound on the size of a sum-rank-metric sphere.
\item Besides the existing notion of row support \cite{martinez2019theory} and an associated row-erasure decoder \cite{martinez2019universal}, we introduce \rev{a} ``transposed'' notion of column support and an associated column-erasure decoder.
We analyze the computational complexity of both erasure decoders.

\end{itemize}
Finally, we generalize the formal hardness proof of \cite{gaborit2016hard} from the rank metric to the sum-rank metric.
We show that if, for sufficiently large base field, the decisional sum-rank syndrome decoding problem is in the complexity class $\ZPP$, then $\NP = \ZPP$.
Loosely, $\ZPP$ is the set of problems which are computationally easy if one is allowed to use randomness, and includes the problems which are easy to solve deterministically, i.e.~$\mathsf{P}$.
Our result means that sum-rank syndrome decoding is either hard (i.e.~not in $\ZPP$), or that \emph{all} $\mathsf{NP}$ problems are easy.

\subsection{Reader's Guide}

After giving some preliminaires in \cref{sec:preliminaries}, we study the problem of counting vectors of a given sum-rank weight in \cref{sec:enumerating_sr_vectors}. This gives a first comparative line for the generic decoder and is also required for the formal hardness proof.
In \cref{sec:support_and_erasure_decoding}, we introduce two notions of support in the sum-rank metric and show how to efficiently erasure-decode w.r.t.\ these types of support.
Erasure decoding is an essential ingredient of the new generic decoder.
\cref{sec:super_support_finding} presents the generic decoder. We explain how to randomly find a super-support of the error and show how to efficiently implement and bound the complexity of the proposed algorithm.
In \cref{sec:comparison}, we compare the generic decoder to other (naive) generic decoders, as well as existing algorithms for the Hamming and rank metric.
\cref{sec:hardness} presents the formal hardness proof.

\section{Preliminaries}\label{sec:preliminaries}

\subsection{Notation}

Let $q$ be a prime power and $m$ be a positive integer: the codes we consider are over $\Fqm$, the finite field with $q^m$ elements, whose elements we often expand into $\Fq^m$ vectors.
For $r \in \ZZ_{> 0}$ and a fixed basis of $\Fqm$ over $\Fq$, we define the mapping
\begin{align*}
\extFqmFqVector{r} \, : \, \Fqm^r &\mapsto \Fq^{m \times r}, \\
\x &\mapsto \ve{X},
\end{align*}
where the $i$-th column of $\ve{X}$ is the expansion of $x_i$ in the fixed basis over $\Fq$.
We use the big-O notation family to state asymptotic costs of algorithms, and $O^\sim(\cdot)$, which neglects logarithmic terms in the input parameter.
For a finite set $\mathcal{S}$, we denote by $s \sample \mathcal{S}$ the operation of drawing uniformly at random an element $s$ from $\mathcal{S}$.

\subsection{Sum-Rank Metric}

Throughout the paper, $n$ is the length of the studied codes, and $\ell$ is a blocking parameter satisfying $\ell \mid n$.
The length of each block is $\npr := n/\ell$, and we let $\nmmin := \min\{\npr,m\}$.
For a vector $\x \in \Fqm^{\npr}$, we define $\rk_{\Fq}(\x) := \dim_{\Fq} \langle x_1,\dots,x_{\npr}\rangle_{\Fq} = \rk_{\Fq}(\extFqmFqVector{\npr}(\x))$.
Obviously, $\rk_{\Fq}(\x) \leq \nmmin$.
The sum-rank metric is defined as follows.

\begin{definition}
The ($\ell$-)sum-rank weight is defined as
\begin{align*}
  \wtSR \, : \, \Fqm^n &\to \ZZ_{\geq 0}, \\
  \x  &\mapsto \textstyle\sum_{i=1}^{\ell} \rk_{\Fq}(\x_i),
\end{align*}
where we write $\x = \big[ \x_1 | \x_2 | \dots | \x_\ell \big]$ with $\x_i \in \Fqm^{\npr}$.
We call
\begin{equation*}
[\rk_{\Fq}(\x_1),\dots,\rk_{\Fq}(\x_\ell)] \in \{0,\dots,\nmmin\}^\ell
\end{equation*}
the \emph{weight decomposition} of $\x$.
Furthermore, the ($\ell$-)sum-rank distance is defined as
\begin{align*}
\dSR \, : \, \Fqm^n \times \Fqm^n \to \ZZ_{\geq 0}, \quad [\x,\x'] \mapsto \wtSR(\x-\x').
\end{align*}
\end{definition}

The family of sum-rank metrics includes two well-known metrics as extremal cases: For $\ell=1$, it coincides with the rank metric, $\wtR$, and for $\ell=n$, it is the Hamming metric, $\wtH$.
In between, %
we have $\wtR(\x) \leq \wtSR(\x) \leq \min\big\{\nmmin \ell,\ \wtH(\x)\big\}$  for $\x \in \Fqm^n$.

\begin{remark}
Some results in this paper can be generalized in a relatively straightforward way to the sum-rank metric with varying block size (i.e., subblocks of $\x$ are of the form $\x_i \in \Fqm^{\eta_i}$ for positive integers $\eta_1,\dots,\eta_\ell$ with $\sum_{i=1}^{\ell} \eta_i = n$).
We decided to present only the constant block size case ($\eta_i=\eta$ for all $i$) to avoid an even more technical presentation.
\end{remark}

\subsection{Gaussian Binomial and Number of Matrices}

For non-negative integers $a$ and $b$, the Gaussian binomial $\qbinomial{a}{b}$ is defined by the number of $b$-dimensional subspaces of $\Fq^a$.
We have
\begin{equation*}
\qbinomial{a}{b} = \prod_{i=1}^{b} \frac{q^{a-b+i}-1}{q^i-1}
\end{equation*}
and the bounds~\cite{koetter2008coding}
\begin{equation}
q^{(a-b)b} \leq \textstyle\qbinomial{a}{b} \leq \gamma_q q^{(a-b)b}, \label{eq:bounds_gaussian_binomial}
\end{equation}
where
\begin{equation}
\gamma_q := \prod_{i=1}^{\infty} (1-q^{-i})^{-1}. \label{eq:gamma_q}
\end{equation}
Note that $\gamma_q$ is monotonically decreasing in $q$ with a limit of $1$, and e.g.~$\gamma_2 \approx 3.463$, $\gamma_3 \approx 1.785$, and $\gamma_4 \approx 1.452$.
We let $\NM{q}{a,b,i}$ denote the number of $a \times b$ matrices over $\Fq$ of rank exactly $i$, for $0\leq i \leq \min\{a,b\}$.
We have~\cite{migler2004weight}:
\begin{align}
\NM{q}{a,b,i} = \prod_{j=0}^{i-1} \tfrac{(q^a-q^j)(q^{b}-q^j)}{q^i-q^j} \leq 4 q^{i(a+b)-i^2}. \label{eq:num_matrices_of_given_rank}
\end{align}

\subsection{Weight Decompositions and Partitions}

\noindent
For a non-negative integer $t \leq \ell \mu$, we define the set
\begin{align*}
\Tsett := \left\{ \t \in \{0,\dots,\mu\}^\ell \, : \, \sum_{i=1}^{\ell} t_i = t \right\},
\end{align*}
which contains all possible weight decompositions of a vector with $\ell$-sum-rank weight $t$.

The set $\Tsett$ has also a combinatorial interpretation: its elements correspond exactly to the ordered partitions of the integer $t$ with part size at most $\mu$ and number of parts at most $\ell$.
Hence, its cardinality is the $t$-th coefficient of the generating polynomial\footnote{We would like to thank Cornelia Ott for deriving this closed-form expression for $|\Tsett|$.} %
\begin{align*}
p^{(\ell,\mu)}(X) = \left(\sum_{i=0}^{\mu}X^i\right)^\ell,
\end{align*}
i.e.,
\begin{align*}
|\Tsett| = p_t^{(\ell,\mu)} = \sum_{i=0}^{\lfloor \frac{t}{\mu+1} \rfloor} (-1)^i \binom{\ell}{i} \binom{t+\ell-1-(\mu+1)i}{\ell-1}.
\end{align*}
In particular, $|\Tsett|$ can be computed efficiently, and we have the upper bound
\begin{align*}
|\Tsett| \leq \binom{\ell+t-1}{\ell-1}.
\end{align*}
Depending on the relative size of $\ell$ and $\mu$, the cardinality $|\Tsett|$ may grow super-polynomially in $t$.

\subsection{Linear Codes}
\rev{
Throughout this paper, we consider $\Fqm$-linear codes. An $\Fqm$-linear code $\Code$ over $\Fqm$ of dimension $k$ and length $n$ is an $\Fqm$-linear $k$-dimensional subspace of $\Fqm^n$, and we write $\Code[n,k]_{\Fqm}$. The minimum ($\ell$-)sum-rank distance of $\mathcal{C}$ is given by 
\begin{equation*}
d = \min_{\substack{\c,\d \in \mathcal{C} \\ \c \ne \d}} \{\dSR(\c,\d)\} .
\end{equation*}
If $d$ is known, we call the code $\mathcal{C}$ an $[n,k,d]_{\Fqm}$ code. A matrix $\G \in \Fqm^{k\times n}$ is a generator matrix of $\mathcal{C}$ if and only if its rows form a basis of $\mathcal{C}$. Furthermore, a parity-check matrix $\H \in \Fqm^{(n-k)\times n}$ of $\mathcal{C}$ is matrix whose rows form a basis of the right kernel of $\G$.
}

In this paper, we aim at solving the following problem for any given code $\Code$:
\rev{
\begin{problem}[Generic Sum-Rank-Metric Decoding]\label{prob:generic_sr_decoding}\hfill
	\begin{description}
		\item \textbf{Given:}
		\begin{itemize}
			\item Parameters $q,m,k,n,\ell,t$ with $\ell\mid n$ and $0 \leq t \leq \min\{n,m\}\ell$
			\item Parity-check matrix $\H \in \Fqm^{(n-k)\times n}$ of an $\Fqm$-linear $[n,k]_{\Fqm}$ code $\Code$ 
			\item Received vector $\r = \c + \e \in \Fqm^n$, where $\c \in \Code$ and $\wtSR(\e) = t$
		\end{itemize}
              \item \textbf{Objective:} Find a vector $\e'$ with $\wtSR(\e')\leq t$ such that $\r-\e' \in \Code$.
	\end{description}
\end{problem}
}

\rev{
\begin{remark}
	We formulate \cref{prob:generic_sr_decoding} such that the sum-rank weight of the additive error is known and at least one solution to the problem exists. This results from the fact that this is true for most of the applications of generic decoding algorithms. For instance in the code-based encryption schemes BIKE~\cite{aragon2019bike}, HQC~\cite{aguilar2019hqc3rd}, ROLLO~\cite{aguilar2019rollo}, RQC~\cite{aguilar2019rqc}, and ClassicMcEliece~\cite{albrecht2019classicmceliece}, whose security relies on generic decoding in either the Hamming or the rank metric (all systems reached at least the second round of the NIST post-quantum standardization process~\cite{NIST2017post}).
\end{remark}
}

\section{Counting Error Vectors}\label{sec:enumerating_sr_vectors}

As the generic decoding problem can be solved by brute-forcing through all vectors of a given sum-rank weight, we are interested in finding the number of such vectors.
The question of counting is also related to explicitly writing down a list of such vectors (hence, how to realize this naive generic decoder) and provides a comparative line for the complexity of our new generic decoder that we present in the remainder of the paper.
In the extreme cases of the Hamming and rank metric, simple closed-form expressions are easy to obtain.
The question seems more involved for the general sum-rank metric.

We denote by $\mathcal{N}_{q,\npr,m}(t,\ell)$ the number of vectors in $\Fqm^{\npr\ell}$ of $\ell$-sum- rank weight exactly $t\leq \nmmin\ell$.
It is easy to see that we have
\begin{align*}
\mathcal{N}_{q,\npr,m}(t,\ell) = \sum_{\t \in \Tsett} \prod_{i=1}^{\ell} \NM{q}{m,\npr,t_i}.
\end{align*}
However, the number of terms in this formula is $|\Tsett|$ and it is not obvious how the sum can be computed efficiently.
For this reason, we propose an efficient dynamic programming routine to compute the number.
The method is based on the following lemma and outlined in Algorithm~\ref{alg:number_sum-rank-words} below;
note that $q$, $\npr$, and $m$ remain constant throughout the recursion.

\begin{lemma}\label{lem:number_fixed-weight-words_recursion}
  $\mathcal{N}_{q,\npr,m}(t,\ell) \!= 0$ for $t > \nmmin\ell$.
  Otherwise:
\begin{align*}
\mathcal{N}_{q,\npr,m}(t,\ell) =
\begin{cases}
\NM{q}{m,\npr,t}, &\text{if } \ell=1, \\
 \displaystyle \sum_{t'=0}^{\min\{\npr,m,t\}} \hspace{-0.4cm}\NM{q}{m,\npr,t'} \cdot \mathcal{N}_{q,\npr,m}(\rev{t-t'},\ell-1) , &\text{if } \ell>1, \\
\end{cases} \ .
\end{align*}
\end{lemma}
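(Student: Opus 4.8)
The plan is to prove the two cases separately, with the recursive case being a straightforward consequence of the closed-form summation formula already stated in the text, namely
\begin{align*}
\mathcal{N}_{q,\npr,m}(t,\ell) = \sum_{\t \in \Tsett} \prod_{i=1}^{\ell} \NM{q}{m,\npr,t_i}.
\end{align*}

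First, for the base case $\ell = 1$: the set $\TsetWITHARGUMENTS{t,1,\nmmin}$ consists of the single vector $\t = [t]$ whenever $t \leq \nmmin$, so the product collapses to $\NM{q}{m,\npr,t}$. (The condition $t \leq \nmmin = \min\{\npr,m\}$ is exactly what makes $\NM{q}{m,\npr,t}$ well-defined and nonzero, and when $t > \nmmin$ the set is empty, consistent with the claimed value $0$.)

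Second, the vanishing claim for $t > \nmmin\ell$: no vector in $\Fqm^{\npr\ell}$ can have a block of rank exceeding $\nmmin = \min\{\npr,m\}$, so the sum-rank weight is at most $\nmmin\ell$; hence there are no such vectors and the count is $0$. Equivalently, $\Tsett = \emptyset$ in this regime, so the empty sum is $0$.

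For the recursive case $\ell > 1$, I would partition the index set $\Tsett$ according to the value $t_1 = t'$ of the first coordinate. For each fixed $t' \in \{0,\dots,\min\{\npr,m,t\}\}$, the tails $[t_2,\dots,t_\ell]$ range exactly over $\TsetWITHARGUMENTS{t-t',\ell-1,\nmmin}$, and the product factors as $\NM{q}{m,\npr,t'} \cdot \prod_{i=2}^{\ell}\NM{q}{m,\npr,t_i}$. Summing over the tails reproduces $\mathcal{N}_{q,\npr,m}(t-t',\ell-1)$ by the same closed-form identity, and then summing over $t'$ yields the stated recursion. The upper limit $\min\{\npr,m,t\}$ on $t'$ is justified because $\rk_{\Fq}$ of a block cannot exceed $\nmmin$, and obviously cannot exceed $t$; terms with larger $t'$ contribute zero anyway. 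I do not anticipate a genuine obstacle here — the only thing to be slightly careful about is making the well-definedness of $\NM{q}{m,\npr,t'}$ explicit (it requires $t' \leq \min\{\npr,m\}$, which the summation range enforces) and observing that the recursion bottoms out correctly because each step decreases $\ell$ by one.
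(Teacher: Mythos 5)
Your proof is correct and follows essentially the same route as the paper: the paper also argues the vanishing case from the bound $\rk_{\Fq}(\x_i)\leq\nmmin$ on each block, treats $\ell=1$ as the matrix-counting base case, and obtains the recursion by conditioning on the rank weight $t'$ of the first block and counting the remaining $\ell-1$ blocks. Your version merely spells out the partition of $\Tsett$ and the factorization of the product more explicitly, which is a fine (slightly more formal) presentation of the same argument.
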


\begin{proof}
  The first claim is obvious since each of the $\ell$ blocks can have at most rank weight $\nmmin$.
  For $\ell=1$, the formula is simply the number of $m \times \npr$ matrices of rank $t$. For larger $\ell$, we sum up over the number of possibilities to choose the rank weight $t'$ of the first block multiplied with the number of sum-rank weight words in the remaining $\ell-1$ blocks.
\end{proof}

We also give a simple upper bound on $\mathcal{N}_{q,\npr,m}(t,\ell)$, which we use for bounding the complexity of \cref{alg:number_sum-rank-words}, as well as for proving the formal hardness of generic decoding in \cref{sec:hardness}.

\begin{theorem}\label{thm:bound_number_fixed-weight-words}
  For $\ell>1$ and $t \leq \nmmin\ell$, the number of vectors in $\Fqm^{\npr\ell}$ of $\ell$-sum rank weight $t$ can be bounded by
  \begin{equation*}
 \mathcal{N}_{q,\npr,m}(t,\ell)\leq \gamma_q^{\ell} \binom{\ell+t-1}{\ell-1} q^{t(m+\npr-\frac{t}{\ell})},
  \end{equation*}
  where $\gamma_q \leq 3.5$ is given in \eqref{eq:gamma_q}.
\end{theorem}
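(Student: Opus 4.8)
The plan is to start from the exact formula
$\mathcal{N}_{q,\npr,m}(t,\ell) = \sum_{\t \in \Tsett} \prod_{i=1}^{\ell} \NM{q}{m,\npr,t_i}$
and bound each factor using the inequality $\NM{q}{a,b,i} \leq 4 q^{i(a+b)-i^2}$ from \eqref{eq:num_matrices_of_given_rank}. Actually, since $4 = \gamma_q^{\text{something}}$ is not clean, I would instead prefer to use a bound of the form $\NM{q}{m,\npr,i} \leq \gamma_q \, q^{i(m+\npr)-i^2}$, which follows from writing $\NM{q}{m,\npr,i} = \qbinomial{m}{i} \prod_{j=0}^{i-1}(q^{\npr}-q^j)$ or a similar factorization, then applying the upper bound $\qbinomial{a}{b} \leq \gamma_q q^{(a-b)b}$ from \eqref{eq:bounds_gaussian_binomial} to the Gaussian binomial factor and bounding the remaining product by $q^{i\npr}$ (or symmetrically). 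Summing over the $\ell$ blocks, each term of the product contributes one factor of $\gamma_q$, giving $\prod_{i=1}^{\ell} \NM{q}{m,\npr,t_i} \leq \gamma_q^{\ell} \, q^{\sum_i [t_i(m+\npr) - t_i^2]} = \gamma_q^{\ell} \, q^{t(m+\npr)} q^{-\sum_i t_i^2}$.

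The next step is to control the exponent $\sum_{i=1}^{\ell} t_i^2$ from below, uniformly over $\t \in \Tsett$. Here I would invoke the power-mean (Cauchy--Schwarz) inequality: since $\sum_{i=1}^\ell t_i = t$, we have $\sum_{i=1}^{\ell} t_i^2 \geq \frac{1}{\ell}\left(\sum_{i=1}^\ell t_i\right)^2 = \frac{t^2}{\ell}$. This gives $q^{-\sum_i t_i^2} \leq q^{-t^2/\ell}$, hence each term in the outer sum is at most $\gamma_q^{\ell} \, q^{t(m+\npr) - t^2/\ell} = \gamma_q^{\ell} \, q^{t(m+\npr - t/\ell)}$, which is exactly the per-term bound we want. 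Finally, the number of terms in the sum is $|\Tsett|$, and by the upper bound on $|\Tsett|$ recorded in the preliminaries we have $|\Tsett| \leq \binom{\ell+t-1}{\ell-1}$. Multiplying the number of terms by the uniform per-term bound yields $\mathcal{N}_{q,\npr,m}(t,\ell) \leq \binom{\ell+t-1}{\ell-1} \gamma_q^{\ell} \, q^{t(m+\npr - t/\ell)}$, which is the claimed inequality; the remark $\gamma_q \leq 3.5$ follows since $\gamma_q$ is decreasing in $q$ with $\gamma_2 \approx 3.463$.

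The main obstacle is getting the single-matrix bound $\NM{q}{m,\npr,i} \leq \gamma_q \, q^{i(m+\npr)-i^2}$ with precisely the constant $\gamma_q$ rather than the looser constant $4$ appearing in \eqref{eq:num_matrices_of_given_rank}; this requires peeling off a Gaussian-binomial factor from the product formula for $\NM{q}{m,\npr,i}$ and bounding the complementary factor by a pure power of $q$ without picking up extra multiplicative slack. Everything else — the Cauchy--Schwarz step and the counting of terms — is routine. (One should also double-check the edge behavior: the hypothesis $t \leq \nmmin \ell$ ensures every $\t \in \Tsett$ is a genuine weight decomposition so that $\Tsett \neq \emptyset$ and the matrix-count bounds apply with $t_i \leq \nmmin = \min\{\npr,m\}$.)
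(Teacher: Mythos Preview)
Your proposal is correct and essentially identical to the paper's proof: the paper also bounds the sum by $|\Tsett|$ times the maximum term, uses $\NM{q}{m,\npr,t_i}\leq\gamma_q q^{t_i(m+\npr-t_i)}$ (which indeed follows exactly as you indicate, via $\NM{q}{a,b,i}=\qbinomial{a}{i}\prod_{j=0}^{i-1}(q^{b}-q^{j})\leq \gamma_q q^{(a-i)i}\cdot q^{ib}$), and then minimizes $\sum_i t_i^2$ subject to $\sum_i t_i=t$ by Jensen's inequality, which is the same bound you obtain via Cauchy--Schwarz. The only cosmetic difference is the name given to the convexity step.
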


\begin{proof}
  By definition,
  \begin{align*}
  \mathcal{N}_{q,\npr,m}(t,\ell) &= \sum_{\t \in \Tsett} \prod_{i=1}^{\ell} \NM{q}{m,\npr,t_i} \\
                     &\leq |\Tsett| \max_{\t \in \Tsett} \Bigg \{  \prod_{i=1}^{\ell} \NM{q}{m,\npr,t_i} \Bigg \} \\
                     &\leq \binom{\ell+t-1}{\ell-1} \gamma_q^{\ell} q^{\max_{\t \in \Tsett} \big \{ \sum_{i=1}^{\ell} t_i (m+\eta-t_i) \big\} },
  \end{align*}
  where the latter inequality follows from $|\Tsett| \leq \binom{\ell+t-1}{\ell-1}$ and  $\NM{q}{m,\npr,t_i} \leq \gamma_q q^{t_i(m+\npr-t_i)}$.
  Thus we should upper-bound $\max_{\t \in \Tsett} \big \{ \sum_{i=1}^{\ell} t_i (m+\eta-t_i) \big\}$ subject to $\sum_{i=1}^{\ell} t_i = t$, which simplifies to maximising
  \[
    t(m+\eta) - \sum_{i=1}^{\ell} t_i^2 \ .
  \]
  By Jensen's inequality, this is upper-bounded by choosing $t_i = t/\ell$ for all $i$, i.e.
  \[
    \max_{\t \in \Tsett} \left \{ \sum_{i=1}^{\ell} t_i (m+\eta-t_i) \right\} \leq t(m+\eta) - t^2/\ell \ .
  \]
\end{proof}

\cref{fig:number_of_errors_of_given_weight_example} shows example values of $\mathcal{N}_{q,\npr,m}(t,\ell)$ and the bound in \cref{thm:bound_number_fixed-weight-words} for different divisors $\ell$ of a fixed length $n$.
It seems that the bound is quite tight for most values of $\ell$, and only significantly differs for $\ell$ close to $n$.
This deviation is due to the factor $\gamma_q^{\ell}$, which is large for these values of $\ell$, and which is due to a relatively bad bound on the number of matrices.
Note that for $\ell=n$, we know better bounds on $\mathcal{N}_{q,\npr,m}(t,\ell)$ from the Hamming metric.

\begin{figure}[ht!]
\begin{center}
\includegraphics{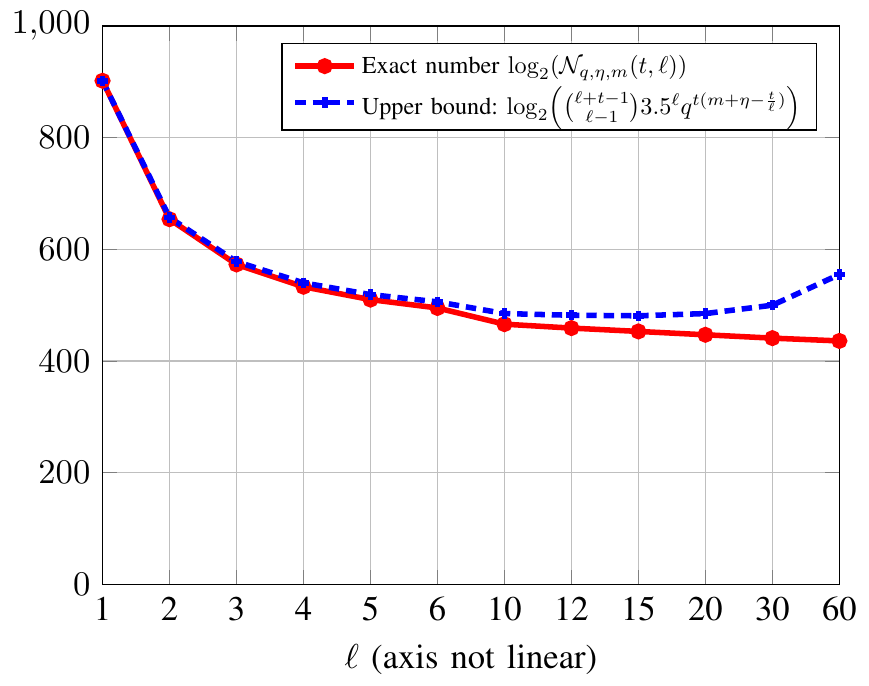}
\end{center}
\vspace{-0.5cm}
\caption{Comparison of the exact number of vectors of sum-rank weight $t=10$ and the derived upper bound for $q=2$, $m=40$, $n=60$ as a function of $\ell$.}
\label{fig:number_of_errors_of_given_weight_example}
\end{figure}

\begin{algorithm}
	\caption{$\textsf{Compute } \mathcal{N}_{q,\npr,m}(t,\ell)$}\label{alg:number_sum-rank-words}
	\SetKwInOut{Input}{Input}\SetKwInOut{Output}{Output}
	\Input{Prime power $q$ and $\npr, m, \ell, t \in \ZZ_{\geq 0}$ such that $0 < t \leq \nmmin \ell$ and $\nmmin := \min\{\npr,m\}$}
	\Output{Number $\mathcal{N}_{q,\npr,m}(t,\ell)$ of vectors in $\Fqm^{\npr \ell}$ of $\ell$-sum-rank weight $t$} %
	Initialize table of integers $\{\mathsf{N}(t',\ell') = 0 \}_{t'=0,\dots,t}^{\ell'=1,\dots,\ell}$ \\
	\For{$t'=0,\dots,t$}{
		$\mathsf{N}(t',1) \gets \NM{q}{m,\npr,t'}$
	}
	\For{$\ell'=2,\dots,\ell$}{
		\For{$t'=0,\dots,t$}{
			$\mathsf{N}(t',\ell') \gets \hspace{-0.2cm}\displaystyle \sum_{t''=\rev{0}}^{\min\{\nmmin,t'\}} \hspace{-0.1cm} \NM{q}{m,\npr,t''} \mathsf{N}(\rev{t-t''},\ell'-1)$
		}
	}
	\Return{$\mathsf{N}(t,\ell)$}
\end{algorithm}

\begin{theorem}\label{thm:complexity_counting}
Algorithm~\ref{alg:number_sum-rank-words} is correct and has bit complexity
\begin{align*}
O^\sim\big( \ell^2 t^2 + \ell t^3(m+\npr)\log(q) \big).
\end{align*}
\end{theorem}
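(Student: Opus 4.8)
The plan is to verify correctness and then account for the arithmetic cost of Algorithm~\ref{alg:number_sum-rank-words} by summing the cost of each table update. For correctness, I would argue that the table entry $\mathsf{N}(t',\ell')$ equals $\mathcal{N}_{q,\npr,m}(t',\ell')$ by induction on $\ell'$: the base case $\ell'=1$ is exactly the initialization loop, which sets $\mathsf{N}(t',1) = \NM{q}{m,\npr,t'}$, matching the first case of \cref{lem:number_fixed-weight-words_recursion}; the inductive step is the nested loop, whose update is precisely the recursive formula in the second case of that lemma (with the summation index capped at $\min\{\nmmin,t'\}$ because $\NM{q}{m,\npr,t''}=0$ for $t''>\nmmin$). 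Hence $\mathsf{N}(t,\ell) = \mathcal{N}_{q,\npr,m}(t,\ell)$ on termination.

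For the complexity, I would proceed in three steps. \emph{Step 1: bit-sizes of the quantities involved.} By \cref{thm:bound_number_fixed-weight-words} (and the $\ell=1$ bound from \eqref{eq:num_matrices_of_given_rank}), every table entry $\mathsf{N}(t',\ell')$ and every factor $\NM{q}{m,\npr,t''}$ is bounded by roughly $\gamma_q^{\ell} \binom{\ell+t-1}{\ell-1} q^{t(m+\npr)}$, so each has bit-length $O(t(m+\npr)\log q + \ell \log(\ell+t))$, which I would simplify to $O^\sim\big(t(m+\npr)\log q\big)$ since the $\ell\log(\ell+t)$ term is dominated after absorbing polylogarithmic factors (or one keeps it and checks it is lower-order against the claimed bound). \emph{Step 2: cost of precomputing the matrix counts.} The values $\NM{q}{m,\npr,i}$ for $i=0,\dots,\min\{\nmmin,t\}$ can be computed once via the product formula in \eqref{eq:num_matrices_of_given_rank} using $O(t)$ big-integer multiplications/divisions on numbers of the above bit-length, for a total of $O^\sim\big(t \cdot t(m+\npr)\log q\big) = O^\sim\big(t^2(m+\npr)\log q\big)$; this is dominated by the main loop below. \emph{Step 3: cost of the dynamic program.} The nested loops range over $\ell-1$ values of $\ell'$ and $t+1$ values of $t'$, so there are $O(\ell t)$ table updates. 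Each update is a sum of at most $\min\{\nmmin,t'\}+1 = O(t)$ products, each product being of two integers of bit-length $O^\sim(t(m+\npr)\log q)$, costing $O^\sim(t(m+\npr)\log q)$ per product by fast multiplication; summing $O(t)$ such terms (each of comparable size) adds $O^\sim(t \cdot t(m+\npr)\log q) = O^\sim(t^2(m+\npr)\log q)$ per update, and this dominates the multiplication cost. Multiplying by the $O(\ell t)$ updates gives $O^\sim\big(\ell t^3 (m+\npr)\log q\big)$.

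To obtain the remaining $O^\sim(\ell^2 t^2)$ term, I would observe that even just initializing and reading/writing the table costs at least one machine operation per entry per access, i.e.\ $\Theta(\ell t)$ index-handling operations, but more relevantly each addition/assignment touches integers whose length grows with $\ell'$; a cleaner way to see the $\ell^2 t^2$ term is to note that the loop indices and table-sizes themselves require $O(\log(\ell t))$ bits and there are $\Theta(\ell t)$ of them, contributing $O^\sim(\ell t)$ — so in fact I expect the $\ell^2 t^2$ term to arise from the additions: an addition of two integers of bit-length $b$ costs $O(b)$, and over all $O(\ell t)$ updates with $O(t)$ additions each of integers of length $O^\sim(t(m+\npr)\log q)$... that again folds into the third term. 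I would therefore instead attribute $O^\sim(\ell^2 t^2)$ to a more refined bookkeeping: the entries in block $\ell'$ have bit-length $O^\sim(\ell' \cdot \text{something})$ only in the worst sub-case, or it is simply the cost $O^\sim(\ell t)$ of managing the $\Theta(\ell t)$-entry table multiplied by the $O^\sim(\ell t)$-bit descriptions — and confirm via \cref{thm:bound_number_fixed-weight-words} that it is genuinely lower-order or at worst additive. The main obstacle is exactly this last point: pinning down where the $\ell^2 t^2$ contribution comes from and confirming it is not swallowed by the cubic term requires care about which operations are big-integer operations versus $O^\sim(\ell t)$-bit index operations, and about how the bit-length of $\mathsf{N}(t',\ell')$ actually scales with $\ell'$ (via the $\gamma_q^{\ell'}\binom{\ell'+t'-1}{\ell'-1}$ prefactor) rather than using the uniform worst-case bound. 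Everything else is a routine summation of per-operation costs.
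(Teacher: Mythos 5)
Your correctness argument and the overall accounting scheme (roughly $\ell t^2$ big-integer multiplications, each quasi-linear in the bit length of the integers, with the bit length bounded via \cref{thm:bound_number_fixed-weight-words}) are exactly the paper's approach. The gap is in Step~1 and in your subsequent search for the $\ell^2 t^2$ term. You discard the $\ell\log(\ell+t)$ contribution to the bit length as ``dominated after absorbing polylogarithmic factors,'' but $\ell$ is an input parameter, not a logarithm of one, so this term is \emph{not} absorbed by $O^\sim(\cdot)$; nor is it dominated by $t(m+\npr)\log q$ in general (take $\ell=n$, so $\npr=1$, with $t$, $m$, $q$ small: then $\ell\log(\ell+t)\approx n\log n$ can far exceed $t(m+1)\log q$). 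This dropped term is precisely where the $\ell^2 t^2$ summand in the statement comes from: by \eqref{eq:simple_bound_sphere_size}, every integer handled has bit length $O^\sim\big(\ell + t(m+\npr)\log q\big)$, the $\ell$ coming from $\log\big(\gamma_q^\ell\binom{\ell+t-1}{\ell-1}\big)$, and multiplying this by the $O(\ell t^2)$ multiplications of the dynamic program yields $O^\sim\big(\ell^2 t^2 + \ell t^3(m+\npr)\log q\big)$.

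Consequently, your later speculation that the $\ell^2 t^2$ term might stem from index handling or table management is a red herring; those costs are genuinely lower order. The fix is simply to keep both summands of the bit-length bound throughout Step~3 (your remark about the prefactor $\gamma_q^{\ell'}\binom{\ell'+t'-1}{\ell'-1}$ scaling with $\ell'$ is the right thought, you just need to commit to it rather than to the uniform simplification). With that correction, your argument coincides with the paper's proof; the side remarks on precomputing the values $\NM{q}{m,\npr,i}$ and on the cost of the additions are fine and indeed lower order.
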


\begin{proof}
The algorithm computes a table that fulfills $\mathsf{N}(t',\ell') = \mathcal{N}_{q,\npr,m}(t',\ell')$ for all $t'=0,\dots,t$ and $\ell'=1,\dots,\ell$ using the recursive formula in Lemma~\ref{lem:number_fixed-weight-words_recursion}. This implies the correctness.

Complexity-wise, the algorithm performs $\ell t^2$ integer multiplications, where the size of the integers are such that they impact performance.
An upper bound is given by
\begin{align}
\mathcal{N}_{q,\npr,m}(t,\ell) & \leq \binom{\ell+t-1}{\ell-1} \gamma_q^{\ell} q^{t(m+\npr-\frac{t}{\ell})}\notag \\
&\rev{\leq \left(e\tfrac{\ell+t-1}{\ell-1}\right)^{\ell-1}\gamma_q^{\ell} q^{t(m+\npr-\frac{t}{\ell})} }
\label{eq:simple_bound_sphere_size}
\end{align}
where the first inequality follows from \cref{thm:bound_number_fixed-weight-words}, \rev{the second inequality follows from an upper bound on binomial coefficients, and $e$ is Euler's constant}.
Since integer multiplication can be implemented with quasi-linear bit operations in the bit size of the involved integers \cite{harvey2019faster}, each multiplication costs at most
\begin{align*}
  &O^\sim\left( (\ell-1) \log\!\left( e \gamma_q \frac{\ell+t-1}{\ell-1} \right) + t(m+\npr-\tfrac{t}{\ell})\log(q) \right)\\
  \subseteq ~&O^\sim\big( \ell + t(m+\npr-\tfrac{t}{\ell})\log(q) \big) \\
  \subseteq ~&O^\sim\big( \ell + t(m+\npr)\log(q) \ . \qedhere
\end{align*}

\end{proof}

\begin{corollary}\label{cor:naive_generic_decoder_Werror}
There is a deterministic algorithm that solves Problem~\ref{prob:generic_sr_decoding} using at most $\Werror$ operations in $\Fq$, where
\begin{equation}
\Werror \in  O\left(n(n-k)m^2 \tbinom{\ell+t-1}{\ell-1} \gamma_q^{\ell} q^{t(m+\npr-\frac{t}{\ell})}\right) \ . \label{eq:Werror}
\end{equation}
\end{corollary}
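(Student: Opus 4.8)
The plan is to describe the naive brute-force decoder explicitly and bound its cost by combining the counting results just established. First I would observe that Problem~\ref{prob:generic_sr_decoding} can be solved by enumerating all candidate error vectors $\e' \in \Fqm^n$ of sum-rank weight at most $t$ and, for each one, checking whether $\r - \e'$ lies in $\Code$, i.e.\ whether $\H(\r-\e')^\top = \0$. Since by hypothesis a solution with $\wtSR(\e) = t$ exists, it suffices to enumerate only the vectors of weight \emph{exactly} $t$; the search is guaranteed to terminate with a valid output.

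The next step is to count the work. The number of candidates is exactly $\mathcal{N}_{q,\npr,m}(t,\ell)$, for which \cref{thm:bound_number_fixed-weight-words} gives the bound $\gamma_q^{\ell} \binom{\ell+t-1}{\ell-1} q^{t(m+\npr-t/\ell)}$ (with $\gamma_q \le 3.5$). For each candidate, computing the syndrome $\H(\r-\e')^\top$ is a matrix-vector product: an $(n-k)\times n$ matrix over $\Fqm$ times a length-$n$ vector, which costs $O(n(n-k))$ operations in $\Fqm$; expanding each $\Fqm$-multiplication into operations over $\Fq$ costs $O(m^2)$ (schoolbook multiplication in the extension field), giving $O(n(n-k)m^2)$ operations in $\Fq$ per candidate. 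Multiplying the per-candidate cost by the number of candidates yields the claimed $\Werror \in O\!\big(n(n-k)m^2 \binom{\ell+t-1}{\ell-1}\gamma_q^{\ell} q^{t(m+\npr-t/\ell)}\big)$, absorbing the constant $\gamma_q$-prefactor into the big-O (or keeping it explicit, as the statement does).

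One small point I would want to address is how to enumerate the weight-$t$ vectors efficiently enough that the enumeration overhead does not dominate: one iterates over the weight decompositions $\t \in \Tsett$ (each obtainable by a standard partition-enumeration routine) and, within each block, over all $\npr$-tuples in $\Fqm^{\npr}$ of rank exactly $t_i$, which can be listed by choosing a $t_i$-dimensional $\Fq$-subspace of $\Fqm$ and a surjection onto it — or, more crudely, by filtering all of $\Fqm^{\npr}$. The cost of generating each vector is polynomial in $n, m, \log q$ and hence negligible next to the $O(n(n-k)m^2)$ syndrome check, so it does not affect the stated bound. There is no real obstacle here: the only mild subtlety is making sure the enumeration is organized so that the bound of \cref{thm:bound_number_fixed-weight-words} is the controlling factor, which it clearly is since each candidate is touched $O(1)$ times.
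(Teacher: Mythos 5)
Your proposal is correct and follows essentially the same route as the paper: brute-force over all sum-rank-weight-$t$ error candidates, check each with a syndrome computation costing $O(n(n-k)m^2)$ operations over $\Fq$, and bound the number of candidates by \cref{thm:bound_number_fixed-weight-words}; the only cosmetic difference is that the paper generates the candidate list by adapting Algorithm~\ref{alg:number_sum-rank-words} to store vectors instead of counts, whereas you enumerate directly over weight decompositions and per-block rank-$t_i$ vectors. I would only drop the parenthetical ``filtering all of $\Fqm^{\npr}$'' fallback, since in some parameter regimes (e.g.\ small $t$) its $q^{m\npr}$ cost per block would exceed the stated bound, while your primary subspace-based enumeration is fine.
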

\begin{proof}
  Algorithm~\ref{alg:number_sum-rank-words} can be easily adapted to create a list of all errors of sum-rank weight $t$: instead of storing the number of vectors in the table $\mathsf{N}(\cdot, \cdot)$, we store lists of the respective vectors. By brute-forcing the overall list and checking whether the received word minus each error is a codeword (this costs at most $O(n(n-k)m^2)$ operations over $\Fq$. \rev{Notably, the constant in the $O$ notation is small.}), we obtain a generic decoder with complexity
  \begin{align*}
     &O\!\left(n(n-k)m^2 \mathcal{N}_{q,\npr,m}(t,\ell)\right) \\
    \leq~ &O\!\left(n(n-k)m^2 \tbinom{\ell+t-1}{\ell-1} \gamma_q^{\ell} q^{t(m+\npr-\frac{t}{\ell})}\right),
  \end{align*}
  using \eqref{eq:simple_bound_sphere_size} in the proof of Theorem~\ref{thm:complexity_counting}.
\end{proof}
The binomial in the expression can be simplified, depending on the relation between $t$ and $\ell$: for instance, since $t \leq \ell \nmmin$, then $\frac t {\ell-1} \leq 2 \mu$, and therefore
\[
  \tbinom{\ell+t-1}{\ell-1}
  \leq \left(e\tfrac{\ell+t-1}{\ell-1}\right)^{\ell-1}
  \in O\!\left( \big[e(2\nmmin+1)\big]^{\ell} \right) \ ,
\]
where $e$ is Euler's constant.

\begin{remark}
\rev{The recursion in \cref{lem:number_fixed-weight-words_recursion} can be turned into an efficient algorithm to draw uniformly at random from the set of vectors of sum-rank weight $t$, see \cref{app:drawerrors}.}
\end{remark}

\begin{remark}\label{rem:further_applications_sphere_size}
In \cite{byrne2020fundamental}, several fundamental bounds of sum-rank-metric codes are derived.
To evaluate two of their bounds, the sphere-packing and Gilbert--Varshamov bound, one needs to efficiently compute the volume of a ball of given sum-rank radius, but this is not addressed in \cite{byrne2020fundamental}.
Algorithm~\ref{alg:number_sum-rank-words} (and a straightforward variant thereof for variable block size and extension degree in each block) provides an efficient method to do this.
Furthermore, the upper bound in \cref{thm:bound_number_fixed-weight-words} allows a significant simplification of their Gilbert--Varshamov bound, though we have not investigated how much weaker it becomes.
\end{remark}

\section{Erasure Decoding and Support in the Sum-Rank Metric}\label{sec:support_and_erasure_decoding}

In Section~\ref{sec:super_support_finding}, we will present a new generic decoding algorithm for the sum-rank metric.
The idea is similar to the generic decoders in the Hamming and rank metric: first we find the ``support'' of an error (e.g., the error positions in the Hamming metric) in a randomized fashion and second we compute the full error by erasure decoding (e.g., computing the error values after having found the error positions).

In this section, we therefore study two notions of support in the sum-rank metric:
row and column support. We describe erasure decoding w.r.t.\ these two notions, i.e., we explain under which conditions and in which complexity we can uniquely recover an error from a received word given its support.
We will see in the next section that the two notions of support are advantageous on different parameters: If $\eta \leq m$, our generic decoder is faster if we aim at finding a row support, and for $\eta \geq m$, it is faster to find a column support.

The notion of row support was already introduced in \cite{martinez2019theory} in a different context.
From \cite[Corollary~1]{martinez2019universal}, one can easily derive that erasure decoding w.r.t.\ this support is unique if the support weight is smaller than the minimum distance.
For the row support, our contributions are hence an explicit description of an erasure decoder and a complexity bound.
We are not aware of previous work on the column support or erasure decoding thereof.

\subsection{Two Notions of Support}\label{ssec:support}

The following lemma gives rise to two notions of ``support'' in the sum-rank metric, which we state in Definition~\ref{def:support} below.

\begin{lemma}\label{lem:error_decomposition}
Let $\e \in \Fqm^n$ have $\ell$-sum-rank weight $t$ and let $\t$ be its weight decomposition.
Then there are vectors
\begin{equation*}
\a_i \in \Fqm^{t_i}, \, \rk_{\Fq}\!\left(\a_i\right) = t_i, \quad \text{for } i=1,\dots,\ell,
\end{equation*}
as well as matrices over the sub-field $\Fq$:
\begin{equation*}
\B_i \in \Fq^{t_i \times \npr}, \rk_{\Fq} \!\left(\B_i\right) = t_i, \quad  \text{for } i=1,\dots,\ell,
\end{equation*}
such that \vspace{-0.4cm}
\begin{equation*}
\e = \overbrace{\begin{bmatrix}
\a_1 & \a_2 & \a_3 & \dots & \a_\ell
\end{bmatrix}}^{=: \, \a \, \in \,\Fqm^t}
\cdot
\overbrace{\begin{bmatrix}
\B_1 & \0 & \0 & \dots & \0 \\
\0 & \B_2 & \0 & \dots & \0 \\
\0 & \0 & \B_3 & \dots & \0 \\
\0 & \0 & \0 & \dots & \B_\ell \\
\end{bmatrix}}^{=: \, \B \, \in \, \Fq^{t \times n}}.
\end{equation*}
Furthermore, the decomposition is unique up to elementary $\Fq$-row operations on the matrices $\B_i$. In particular, the $\Fq$-row spaces of the matrices $\B_i$, as well as the $\Fq$-column space of $\extFqmFqVector{t_i}(\a_i)$, are uniquely determined by $\e$.
\end{lemma}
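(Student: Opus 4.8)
The plan is to work block by block, since the sum-rank weight and the claimed decomposition both split as a direct product over the $\ell$ blocks. First I would fix $i$ and consider a single block $\e_i \in \Fqm^{\npr}$ with $\rk_{\Fq}(\e_i) = t_i$. By definition of $\rk_{\Fq}$, the $\Fq$-span $\langle (e_i)_1, \dots, (e_i)_{\npr}\rangle_{\Fq}$ is a $t_i$-dimensional $\Fq$-subspace of $\Fqm$; pick any $\Fq$-basis $\a_i = [a_{i,1}, \dots, a_{i,t_i}] \in \Fqm^{t_i}$ of it, which by construction satisfies $\rk_{\Fq}(\a_i) = t_i$. Since every coordinate of $\e_i$ lies in this span, there is a unique matrix $\B_i \in \Fq^{t_i \times \npr}$ with $\e_i = \a_i \B_i$; and $\rk_{\Fq}(\B_i) = t_i$, because $\a_i \B_i$ spans a space of dimension $\rk_{\Fq}(\B_i)$ (the rows of $\B_i$ are over $\Fq$ and the $a_{i,j}$ are $\Fq$-linearly independent, so the $\Fq$-row space of $\B_i$ and the $\Fq$-span of the entries of $\a_i\B_i$ have the same dimension), and this must equal $t_i$. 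Concatenating the $\a_i$ into $\a$ and placing the $\B_i$ on the block diagonal yields $\e = \a \B$ with the stated shapes, and $\wtSR(\e) = \sum_i t_i = t$ confirms consistency with the hypothesis.

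Next I would prove the uniqueness claim. Suppose $\e = \a\B = \a'\B'$ are two such decompositions; restricting to block $i$ gives $\a_i \B_i = \a_i' \B_i'$ with $\a_i, \a_i'$ each a tuple of $t_i$ elements of $\Fqm$ that are $\Fq$-linearly independent, and $\B_i, \B_i'$ each of full $\Fq$-rank $t_i$. The $\Fq$-row space of $\B_i$ equals the space of $\Fq$-linear relations — more precisely, I would argue that since $\a_i$ is $\Fq$-linearly independent, the map $\B_i \mapsto \a_i \B_i$ is injective on $\Fq^{t_i \times \npr}$, and the $\Fq$-row space of $\B_i$ equals $\{ \v \in \Fq^{\npr} : \v^\top \text{ is an } \Fq\text{-combination of the columns' relations}\}$; cleanly, the $\Fq$-row space of $\B_i$ is determined as the set of $\Fq$-linear functionals' kernel data of $\e_i$, hence independent of the chosen decomposition. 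Concretely: the column space of $\extFqmFqVector{\npr}(\e_i) = \extFqmFqVector{t_i}(\a_i)\cdot \B_i$ is exactly the $\Fq$-column space of $\extFqmFqVector{t_i}(\a_i)$ (because $\B_i$ has full row rank $t_i$, right-multiplication by it is surjective onto $\Fq^{\npr}$-images in the relevant sense), so this column space is an invariant of $\e_i$; and the $\Fq$-row space of $\B_i$ is the $\Fq$-row space of $\extFqmFqVector{\npr}(\e_i)$ (left-multiplication: every row of $\extFqmFqVector{\npr}(\e_i)$ is an $\Fq$-combination of rows of $\B_i$, and since $\extFqmFqVector{t_i}(\a_i)$ has rank $t_i$ these spans coincide), which is likewise an invariant. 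Once both $\Fq$-row space of $\B_i$ and $\Fq$-column space of $\extFqmFqVector{t_i}(\a_i)$ are pinned down by $\e_i$ alone, two valid decompositions differ by $\B_i' = \U_i \B_i$ for some invertible $\U_i \in \Fq^{t_i \times t_i}$, and correspondingly $\a_i' = \a_i \U_i^{-1}$; this is precisely "elementary $\Fq$-row operations on $\B_i$."

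I expect the main obstacle to be stating the invariance of the $\Fq$-row space of $\B_i$ and the $\Fq$-column space of $\extFqmFqVector{t_i}(\a_i)$ crisply, rather than circularly: one must be careful that "the row space of $\B_i$ is determined by $\e$" is genuinely derived from properties of $\e_i$ (namely, that it equals the $\Fq$-row space of the expansion matrix $\extFqmFqVector{\npr}(\e_i)$) and not merely asserted from one particular decomposition. The cleanest route is to observe $\extFqmFqVector{\npr}(\e_i) = \extFqmFqVector{t_i}(\a_i) \cdot \B_i$ as a genuine matrix product over $\Fq$ (the expansion map is $\Fq$-linear, and right-multiplication by the $\Fq$-matrix $\B_i$ commutes with expansion), and then the rank factorization of the $\Fq$-matrix $\extFqmFqVector{\npr}(\e_i)$ — which has rank exactly $t_i$ — forces its column space to be the column span of $\extFqmFqVector{t_i}(\a_i)$ and its row space to be the row span of $\B_i$. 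Standard uniqueness of rank factorization up to $\GL_{t_i}(\Fq)$ then gives both the invariance of these two spaces and the "unique up to elementary row operations" statement in one stroke. Everything else is bookkeeping across the $\ell$ blocks.
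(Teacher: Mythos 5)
Your proof is correct and follows essentially the same route as the paper: the paper handles each block by invoking the known rank-metric decomposition $\e_i = \a_i\B_i$ and its uniqueness (citing Gabidulin), whereas you simply spell out that standard argument explicitly — choosing an $\Fq$-basis of the span of the entries, and obtaining uniqueness via the rank factorization $\extFqmFqVector{\npr}(\e_i) = \extFqmFqVector{t_i}(\a_i)\B_i$ up to $\mathrm{GL}_{t_i}(\Fq)$. The rank-factorization formulation in your final paragraph is the clean way to justify the invariance claims, and it correctly replaces the vaguer middle paragraph.
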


\begin{proof}
  By basic linear algebra, see e.g.~\cite{Gabidulin_TheoryOfCodes_1985}, there is an $\a_i \in \Fqm^{t_i}$ and $\B_i \in \Fq^{t_i \times \npr}$ such that $\e_i = \a_i\B_i$.
  Also the uniqueness up to row operations follows directly from the analogous results in the rank metric.
\end{proof}

\begin{definition}\label{def:support}
Let $\e \in \Fqm^n$ be of sum-rank weight $t$.
\begin{itemize}
\item \textbf{Row Support:} The row support of $\e$ is defined as the product of subspaces
\begin{equation*}
\Eset^{\mathsf{(R)}}_{\e} := \Eset_1^{\mathsf{(R)}} \times \Eset_2^{\mathsf{(R)}} \times \dots \times \Eset_\ell^{\mathsf{(R)}},
\end{equation*}
where $\Eset_i^{\mathsf{(R)}} \subseteq \Fq^\eta$ is the $\Fq$-row space of $\B_i \in \Fq^{t_i \times \eta}$ as in \cref{lem:error_decomposition}.
A product
\begin{equation*}
\Fset^{\mathsf{(R)}} := \Fset_1^{\mathsf{(R)}} \times \Fset_2^{\mathsf{(R)}} \times \dots \times \Fset_\ell^{\mathsf{(R)}}
\end{equation*}
of subspaces $\Fset_i^{\mathsf{(R)}} \subseteq \Fq^{\npr}$ is called a row super-support of $\e$, denoted by $\Eset^{\mathsf{(R)}}_{\e} \subseteq \Fset^{\mathsf{(R)}}$, if $\Eset_i^{\mathsf{(R)}} \subseteq \Fset_i^{\mathsf{(R)}}$ for all $i$.
\item \textbf{Column Support:} The column support of $\e$ is defined by
\begin{equation*}
\Eset^{\mathsf{(C)}}_{\e} := \Eset_1^{\mathsf{(C)}} \times \Eset_2^{\mathsf{(C)}} \times \dots \times \Eset_\ell^{\mathsf{(C)}},
\end{equation*}
where $\Eset_i^{\mathsf{(C)}} \subseteq \Fq^m$ is the column space of $\extFqmFqVector{t_i}(\a_i) \in \Fq^{m \times t_i}$ as in \cref{lem:error_decomposition}.
A column super-support $\Fset^{\mathsf{(C)}} \supseteq \Eset^{\mathsf{(C)}}_{\e}$ is defined analogously to the row case.
\end{itemize}
If it is clear from the context that we mean the row or column support, we will simply write $\Eset_{\e}$, $\Fset$, and $\Eset_{\e} \subseteq \Fset$, and omit the prefixes ``row'' and ``column'' to simplify notation.
\end{definition}

\begin{remark}
It is easily seen that Definition~\ref{def:support} specializes the usual notions of support for Hamming metric when $\ell=n$, and the row and column support, respectively, in the rank metric for $\ell=1$.
\end{remark}

The following notation will be useful in the next section.

\begin{definition}
Let $\ambdim$ be a positive integer and $0 \leq s \leq \ell \ambdim$.
For $\s \in \TsetWITHARGUMENTS{s, \ell, \ambdim}$, we define the set
\begin{align*}
\SupportSet(\s) := \Big\{ \Fset = \Fset_1 \times \cdots \times \Fset_\ell \, : \, \Fset_i \text{ is an $s_i$-dimensional subspace of $\Fq^\ambdim$} \Big\}.
\end{align*}
For any $\Fset \in \SupportSet(\s)$, we say that its weight decomposition is $\s$ and its weight is $s$.
\end{definition}

\subsection{Erasure Decoding}\label{ssec:erasure_decoding}

The following theorem generalizes the classical Hamming metric statement that $d-1$ is the maximal number of linearly independent columns, as well as the analogous statement in rank metric \cite[Theorem~1]{Gabidulin_TheoryOfCodes_1985}:

\begin{lemma}\label{lem:characterization_sum-rank-distance_codes}
Let $\H \in \Fqm^{\rev{(n-k)} \times n}$ be a parity-check matrix of a code $\Code[n,k]_{\Fqm}$.
Define for any integer $0\leq t \leq n$ the set
\begin{align*}
\mathcal{B}_{\ell,t} := \left\{ \B  = \left[\begin{matrix}
\B_1 & \0 & \0 & \dots & \0 \\
\0 & \B_2 & \0 & \dots & \0 \\
\0 & \0 & \B_3 & \dots & \0 \\
\0 & \0 & \0 & \dots & \B_\ell \\
\end{matrix}\right] \in \Fq^{t \times n} \, :  \B_i \in \Fq^{t_i \times (n/\ell)}, \, \rk \!\left(\B_i\right) = t_i, \, \sum_{i=1}^\ell t_i = t \right\}&
\end{align*}
Then, $\Code$ has minimum $\ell$-sum-rank distance $d$ if and only if
\begin{itemize}
\item we have $\rk_{\Fqm}\!\big(\H \B^\top \big) = d-1$ for any $\B \in \mathcal{B}_{\ell,d-1}$ and
\item we have $\rk_{\Fqm}\!\big(\H \B^\top \big) <d$ for at least one $\B \in \mathcal{B}_{\ell,d}$.
\end{itemize}
\end{lemma}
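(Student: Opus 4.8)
The plan is to connect the abstract rank condition on $\H\B^\top$ with the existence of low-weight codewords via the decomposition of \cref{lem:error_decomposition}. The first observation is that for any $\B \in \mathcal{B}_{\ell,t}$, the $\Fqm$-row space of $\B$, call it $V_\B \subseteq \Fqm^n$, is exactly the set of all vectors $\x = \a\B$ with $\a \in \Fqm^t$, and crucially every such $\x$ has $\wtSR(\x) \leq t$, with equality precisely when $\a$ has full $\Fq$-rank in each of the $\ell$ blocks (by the block-diagonal structure and \cref{lem:error_decomposition}). Conversely, by \cref{lem:error_decomposition}, every vector $\x$ with $\wtSR(\x) = t$ and weight decomposition $\t$ can be written as $\a\B$ for some $\B \in \mathcal{B}_{\ell,t}$ with block ranks $\t$ and some $\a\in\Fqm^t$ of full block-rank. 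So the union $\bigcup_{\B\in\mathcal{B}_{\ell,t}} V_\B$ is exactly the set of vectors of $\ell$-sum-rank weight $\leq t$.

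The second ingredient is the standard syndrome identity: a vector $\x = \a\B$ lies in $\Code$ if and only if $\H\x^\top = \H\B^\top\a^\top = \0$, i.e.\ $\a^\top \in \ker(\H\B^\top)$. Since $\B$ has full row rank $t$ (as each $\B_i$ does and they are block-diagonal), the map $\a \mapsto \a\B$ is injective, so $\Code \cap V_\B \neq \{\0\}$ if and only if $\H\B^\top$ has nontrivial kernel, i.e.\ $\rk_{\Fqm}(\H\B^\top) < t$. Combining the two ingredients: there exists a nonzero codeword of $\ell$-sum-rank weight $\leq t$ if and only if there exists $\B\in\mathcal{B}_{\ell,t}$ with $\rk_{\Fqm}(\H\B^\top)<t$. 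Note also that $\rk_{\Fqm}(\H\B^\top)\leq t$ always, since $\H\B^\top$ has $t$ columns.

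With this equivalence in hand, the lemma follows by instantiating $t=d-1$ and $t=d$. If $\Code$ has minimum distance $d$, then there is no nonzero codeword of weight $\leq d-1$, so by the equivalence $\rk_{\Fqm}(\H\B^\top) = t = d-1$ for every $\B\in\mathcal{B}_{\ell,d-1}$ (it is $\leq d-1$ always and not $<d-1$); and there is a codeword of weight exactly $d$, hence some $\B\in\mathcal{B}_{\ell,d}$ with $\rk_{\Fqm}(\H\B^\top)<d$. Conversely, if both bulleted conditions hold, the second gives a nonzero codeword of weight $\leq d$, and the first, applied for the relevant sub-decompositions, rules out any nonzero codeword of weight $\leq d-1$ — here one uses that any $\B'\in\mathcal{B}_{\ell,t'}$ with $t'<d-1$ can be completed to some $\B\in\mathcal{B}_{\ell,d-1}$ by enlarging the block matrices $\B_i$ (keeping them full rank), and $V_{\B'}\subseteq V_\B$, so a weight-$\leq t'$ codeword would force $\rk_{\Fqm}(\H\B^\top)<d-1$, contradicting the first bullet. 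The main obstacle to watch is precisely this last monotonicity/completion argument: one must check that a full-rank block-diagonal matrix of smaller total size can always be extended to a larger full-rank block-diagonal one with the prescribed block structure, which holds as long as each enlarged $t_i \leq n/\ell$, i.e.\ as long as $d-1 \leq n$ (guaranteed since $d \leq n-k+1 \leq n$ for $k\geq 1$), and to handle the edge behaviour when some $t_i$ would exceed $n/\ell$ by redistributing the weight across blocks.
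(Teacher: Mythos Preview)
Your proposal is correct and follows essentially the same approach as the paper's proof: both rely on the decomposition in \cref{lem:error_decomposition} together with the parity-check characterization $\H\x^\top = \0$ of codewords. Your version is considerably more detailed, in particular making explicit the completion/monotonicity argument for the converse direction that the paper's brief sketch leaves implicit.
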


\begin{proof}
The proof follows by the decomposition of words of a given $\ell$-sum-rank weight in Lemma~\ref{lem:error_decomposition}, together with the definition of the minimum sum-rank distance, i.e., that $\H \x^\top \neq \0$ for any word of $\wtSR(\x)=d-1$ and there is at least one $\x \in \Fqm^n$ with $\wtSR(\x)=d$ and $\H \x^\top = \0$.
\end{proof}

\cref{lem:characterization_sum-rank-distance_codes} implies the following statement about erasure decoding w.r.t.\ the row support in the sum-rank metric. The uniqueness of the recovered codeword can also be derived from \cite[Corollary~1]{martinez2019universal}.

\begin{theorem}[Column Erasure Decoding]\label{thm:column_erasure_decoding}
Let $\r = \c+\e \in \Fqm^n$ be a received word, where $\c$ is an unknown codeword of a code with minimum sum-rank distance $d$ and $\e$ is an unknown error of sum-rank weight at most $d-1$. If we know a row super-support $\Fset = \Fset^{\mathsf{(C)}}$ of $\e$ of weight at most $d-1$, then we can uniquely recover $\c$ from $\r$ with complexity $O((n-k)^3m^2)$ operations over $\Fq$.
\end{theorem}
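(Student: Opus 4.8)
The plan is to use the known super-support $\Fset$ to rewrite the unknown error as $\e = \vec{u}\,\vec{\Phi}$, where $\vec{\Phi}$ is a \emph{known} matrix over $\Fq$ of full row rank and $\vec{u}$ is a \emph{short} unknown vector over $\Fqm$; the syndrome equation then becomes a small $\Fqm$-linear system, and uniqueness will come from the minimum distance. First I would write $\Fset = \Fset_1\times\cdots\times\Fset_\ell$ with $\Fset_i\subseteq\Fq^{\npr}$, $\dim_{\Fq}\Fset_i = s_i$, and $s := \sum_{i=1}^{\ell}s_i\le d-1$, and fix matrices $\vec{\Phi}_i\in\Fq^{s_i\times\npr}$ whose rows form a basis of $\Fset_i$ (this is essentially how each $\Fset_i$ is given, or one Gaussian elimination away). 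By \cref{lem:error_decomposition}, $\e=\a\B$ with $\B=\diag(\B_1,\dots,\B_\ell)$ and $\B_i\in\Fq^{t_i\times\npr}$; since $\Fset$ is a super-support of $\e$, the $\Fq$-row space of $\B_i$ lies in $\Fset_i$, hence $\B_i=\vec{D}_i\vec{\Phi}_i$ for some $\vec{D}_i\in\Fq^{t_i\times s_i}$, so $\e_i=\a_i\B_i=(\a_i\vec{D}_i)\,\vec{\Phi}_i$. Setting $\vec{\Phi}:=\diag(\vec{\Phi}_1,\dots,\vec{\Phi}_\ell)\in\Fq^{s\times n}$ and $\vec{u}:=[\a_1\vec{D}_1\,|\,\cdots\,|\,\a_\ell\vec{D}_\ell]\in\Fqm^{s}$, we obtain $\e=\vec{u}\,\vec{\Phi}$, with $\vec{\Phi}$ known of full $\Fq$-row rank $s$ and $\vec{u}$ the only unknown.

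Next I would use $\H\c^\top=\0$ to get $\H\r^\top=\H\e^\top=(\H\vec{\Phi}^\top)\,\vec{u}^\top$, so that $\vec{u}$ solves the $(n-k)\times s$ linear system $\A\,\x^\top=\H\r^\top$ over $\Fqm$ with $\A:=\H\vec{\Phi}^\top$. For uniqueness, suppose $\vec{u}'$ also solves it and set $\e':=\vec{u}'\vec{\Phi}$; then $\H(\e')^\top=\H\e^\top$, so $\e-\e'=(\vec{u}-\vec{u}')\vec{\Phi}\in\Code$. Writing $\vec{w}:=\vec{u}-\vec{u}'$ and $\vec{w}_i$ for its $i$-th block (of length $s_i$), every coordinate of $(\e-\e')_i=\vec{w}_i\vec{\Phi}_i$ is an $\Fq$-linear combination of the $s_i$ entries of $\vec{w}_i$, hence $\rk_{\Fq}\big((\e-\e')_i\big)\le s_i$ and $\wtSR(\e-\e')\le\sum_i s_i=s\le d-1$. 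As $\e-\e'\in\Code$, the minimum distance forces $\e=\e'$, and since $\vec{\Phi}$ has full $\Fq$-row rank this gives $\vec{u}=\vec{u}'$. So the system is consistent with a unique solution, which Gaussian elimination returns, and then $\c=\r-\vec{u}\,\vec{\Phi}$.

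For the complexity, note that $s\le d-1\le n-k$, the last step because, by \cref{lem:characterization_sum-rank-distance_codes}, $\H\B^\top$ has $n-k$ rows and rank $d-1$ for $\B\in\mathcal{B}_{\ell,d-1}$. Forming $\A=\H\vec{\Phi}^\top$ and the syndrome $\H\r^\top$ only involves $\Fqm$-by-$\Fqm$ and $\Fqm$-by-$\Fq$ products and does not dominate; the bottleneck is solving the $(n-k)\times s$ system over $\Fqm$ with $s\le n-k$, which by Gaussian elimination costs $O((n-k)s^2)\subseteq O((n-k)^3)$ operations in $\Fqm$, i.e.\ $O((n-k)^3m^2)$ operations over $\Fq$ since each $\Fqm$-multiplication costs $O(m^2)$ operations over $\Fq$.

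I expect the only genuinely non-routine point to be the uniqueness step, specifically the observation that the difference $\e-\e'$ of two vectors with the same super-support $\Fset$ still has sum-rank weight at most $s$ (rather than $2s$) — equivalently, that $\{\vec{w}\vec{\Phi}:\vec{w}\in\Fqm^{s}\}$ is an $\Fqm$-subspace all of whose elements have sum-rank weight $\le s$ — which is exactly what makes the minimum-distance argument applicable. Everything else is bookkeeping with the error decomposition and standard linear algebra.
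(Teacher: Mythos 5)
Your proposal is correct and follows essentially the same route as the paper's proof: write $\e=\vec{u}\,\vec{\Phi}$ with $\vec{\Phi}$ a block-diagonal matrix of bases of the super-support (the paper's $\B$), reduce to the $(n-k)\times s$ system $(\H\vec{\Phi}^\top)\vec{u}^\top=\H\r^\top$ over $\Fqm$, and solve it by Gaussian elimination with the stated $O((n-k)^3m^2)$ cost over $\Fq$. The only difference is cosmetic: you establish uniqueness directly (two solutions would differ by a codeword of sum-rank weight at most $s\le d-1$, plus full $\Fq$-row rank of $\vec{\Phi}$), which amounts to inlining the argument the paper delegates to \cref{lem:characterization_sum-rank-distance_codes} via the full column rank of $\H\B^\top$.
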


\begin{proof}
It follows from Lemma~\ref{lem:error_decomposition} that $\e$ can be written as $\a \B$, where $\B$ is a block-diagonal matrix containing bases of the super-support entries $\Fset_i$. Let $\H$ be a parity-check matrix of \rev{the given code $\mathcal{C}$ of minimum sum-rank distance $d$}. Since $\Fset$ has weight $t \leq d-1$, by Lemma~\ref{lem:characterization_sum-rank-distance_codes}, the matrix $\H \B^\top \in \Fqm^{(n-k) \times t}$ has $\Fqm$-rank $t$.
Hence, the linear system
\begin{equation*}
\H \r^\top = \H \e^\top = (\H \B^\top) \a^\top,
\end{equation*}
where $\a$ is unknown, and $\r$, $\H$, and $\B$ are known, has a unique solution $\a$ and we can uniquely determine $\a$, $\e$, and thus $\c$ using linear-algebraic operations.
Using elementary matrix multiplication, Gaussian elimination, and polynomial multiplication algorithms, the involved operations have the following complexities:
Multiplying $\H \B^\top$ costs $O((n-k)s\npr m)$ operations in $\Fq$ since each row of $\B$ has at most $\npr$ non-zero entries.
The only remaining step is solving the linear system $\big(\H \B^\top\big) \a^\top = \s^\top$, where $\s$ is the syndrome of the received word.
This costs $O(s^2 (n-k))$ operations over $\Fqm$, and any operation in $\Fqm$ costs again $O(m^2)$ operations in $\Fq$.
\end{proof}

Similarly, we can recover a codeword from the received word and a column super-support of the error.

\begin{theorem}[Row Erasure Decoding]\label{thm:row_erasure_decoding}
 Let $\r = \c+\e \in \Fqm^n$ be a received word, where $\c$ is an unknown codeword of a code $\mathcal{C}$ with minimum distance $d$ and parity check matrix $\H \in \Fqm^{(n-k) \times n}$. Further $\e$ is an unknown error of sum-rank weight $t <d$. If we know a column super-support of dimension $t' \leq d-1$,
  then we can uniquely recover $\c$ from $\r$ with complexity $O((n-k)^3m^3)$ in operations over $\Fq$
\end{theorem}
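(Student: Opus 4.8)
The plan is to mirror the structure of the proof of \cref{thm:column_erasure_decoding}, but working with the \emph{column} decomposition of the error instead of the row decomposition. By \cref{lem:error_decomposition} we may write $\e = \a \B$ where $\a = [\a_1 \mid \cdots \mid \a_\ell]$ and $\B$ is block-diagonal with blocks $\B_i \in \Fq^{t_i \times \npr}$ of full row rank. The key observation is that a column super-support $\Fset^{\mathsf{(C)}} = \Fset_1^{\mathsf{(C)}} \times \cdots \times \Fset_\ell^{\mathsf{(C)}}$ of dimension $t' \leq d-1$ tells us, for each block $i$, a subspace $\Fset_i^{\mathsf{(C)}} \subseteq \Fq^m$ of dimension $t_i'$ containing the column space of $\extFqmFqVector{t_i}(\a_i)$. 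Fixing an $\Fq$-basis of each $\Fset_i^{\mathsf{(C)}}$, we can regard this basis as $t_i'$ elements of $\Fqm$, i.e. a vector $\g_i \in \Fqm^{t_i'}$ with $\rk_{\Fq}(\g_i) = t_i'$, such that $\a_i = \g_i \ve{U}_i$ for some \emph{unknown} $\ve{U}_i \in \Fq^{t_i' \times t_i}$ (in fact of rank $t_i$). Collecting the $\g_i$ into $\g = [\g_1 \mid \cdots \mid \g_\ell] \in \Fqm^{t'}$ and the $\ve{U}_i$ into a block-diagonal $\ve{U} \in \Fq^{t' \times t}$, we get $\e = \g \ve{U} \B$. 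The crucial point is that $\ve{U}\B \in \Fq^{t' \times n}$ is again a block-diagonal matrix of the form appearing in $\mathcal{B}_{\ell,t''}$ for some $t'' = \sum_i \rk(\ve{U}_i \B_i) = \sum_i t_i = t \leq d-1$; more usefully, even without knowing $\ve{U}$, the matrix $\B' := \ve{U}\B$ has row space contained in a block-diagonal space whose $i$-th block is all of $\Fq^{t_i'}$-spanned rows — but since we do not know $\B$ either, we instead turn the problem around.

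The cleaner route, and the one I would carry out, is this: set $\e^\top = \B^\top \ve{U}^\top \g^\top$, and note that we know $\g$ but not $\B, \ve{U}$. Take the parity-check matrix $\H$ and expand everything over $\Fq$. Write $\extFqmFqVector{n}(\e) \in \Fq^{m \times n}$; its $i$-th block $\extFqmFqVector{\npr}(\e_i)$ has column space contained in $\Fset_i^{\mathsf{(C)}}$. Thus each column of $\extFqmFqVector{\npr}(\e_i)$ is an $\Fq$-linear combination of the $t_i'$ basis vectors of $\Fset_i^{\mathsf{(C)}}$: that is, $\extFqmFqVector{n}(\e) = \ve{G}\, \ve{V}$ where $\ve{G} \in \Fq^{m \times t'}$ is the known block-diagonal matrix of the chosen bases of the $\Fset_i^{\mathsf{(C)}}$ and $\ve{V} \in \Fq^{t' \times n}$ is unknown. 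Equivalently, regarding $\ve{G}$'s blocks as $\Fqm$-vectors, $\e = \g \cdot (\text{block-diagonal lift of } \ve{V})$, but the block-diagonal structure of $\ve{V}$ is exactly a member of the set $\mathcal{B}_{\ell,t}$-style family with row-dimension budget $t'$. The syndrome equation $\H\r^\top = \H\e^\top = \H \ve{V}^\top \ve{G}^\top$ (after transposing and expanding $\H$ over $\Fq$) becomes an $\Fq$-linear system in the entries of $\ve{V}$. By \cref{lem:characterization_sum-rank-distance_codes}, because the column-dimension budget is $t' \leq d-1$, the relevant map $\B' \mapsto \H (\B')^\top$ restricted to block-diagonal $\B'$ of total row weight $t' \leq d-1$ is injective over $\Fqm$, hence the corresponding $\Fq$-linear system for $\ve{V}$ has a unique solution; from $\ve{V}$ we recover $\extFqmFqVector{n}(\e)$, then $\e$, then $\c = \r - \e$.

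For the complexity: the unknown $\ve{V}$ has at most $t'n$ entries over $\Fq$, but exploiting the block-diagonal structure it has only $\sum_i t_i' \npr \leq t' \npr$ nonzero entries; expanding $\H \in \Fqm^{(n-k)\times n}$ over $\Fq$ gives an $m(n-k) \times n$-style system. Forming the coefficient matrix costs $O((n-k) t' \npr m \cdot m)$ and solving the resulting linear system (which has $O(t' n) = O((d-1)n)$ unknowns, bounded by $O((n-k)n)$, and $O(m(n-k))$ equations) by Gaussian elimination costs $O((n-k)^3 m^3)$ operations over $\Fq$, since each dimension is $O((n-k)m)$. This yields the claimed bound $O((n-k)^3 m^3)$, one factor of $m$ larger than in \cref{thm:column_erasure_decoding} precisely because here the unknowns live in $\Fq$ and the $\Fqm$-structure of $\H$ must be fully expanded, whereas in the row-support case the unknown $\a$ could be kept over $\Fqm$. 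The main obstacle I anticipate is the bookkeeping in casting the column-support constraint as a linear system with the right block structure and then invoking \cref{lem:characterization_sum-rank-distance_codes} correctly: one must verify that the injectivity statement there, phrased for $\B \in \mathcal{B}_{\ell,t}$ acting as $\B \mapsto \H\B^\top$, applies to the transposed setup $\ve{V} \mapsto \H\ve{V}^\top$ with $\ve{V}$ ranging over block-diagonal matrices of the prescribed row dimensions — which it does, since $\H\ve{V}^\top$ having a nontrivial kernel element would produce a nonzero codeword of sum-rank weight $< d$. Everything else is routine linear algebra and careful accounting of field-operation counts.
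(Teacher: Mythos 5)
Your proposal is correct and follows essentially the same route as the paper: write $\e = \g\ve{V}$ (the paper's $\hat{\a}\hat{\B}$) with the known column-super-support basis and the unknown block-diagonal matrix over $\Fq$, expand the syndrome equation over $\Fq$, and argue uniqueness because a nonzero kernel element would yield a nonzero codeword of sum-rank weight at most $t'<d$. Your final direct injectivity argument is exactly the one the paper uses (it does not need \cref{lem:characterization_sum-rank-distance_codes} here, since the blocks of $\ve{V}$ need not have full rank), and your complexity accounting matches the paper's $O((n-k)^3m^3)$ bound.
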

\begin{proof}
  Let $\H= [\H_1,\hdots,\H_{\ell}]$, where $\H_{i}\in \Fqm^{(n-k)\times \npr}$. Then, using the same notation as in Theorem~\ref{thm:column_erasure_decoding}, the syndrome is equal to
  \begin{equation*}
    \s^{\top} = \H \B^{\top} \a^{\top} = \sum_{i=1}^{\ell} \H_{i} \B_i^{\top} \a_i^{\top} = \sum_{i=1}^{\ell} \H_{i} \hat{\B}_i^{\top} \hat{\a}_i^{\top},
  \end{equation*}
where $\hat{\a} = [\hat{\a}_1,\hdots, \hat{\a}_{\ell}] \in \Fqm^{t'}$ is a basis of the known column super-support (more precisely, the columns of $\extFqmFqVector{t_i}(\hat{\a}_i)$ form a basis of the $i$-th constituent subspace of the super-support) of the error and $\hat{\B}_i \in \Fq^{t'_i \times \eta}$.
To perform erasure decoding, we solve the latter system of equations for the $\npr t'$ unknown entries of $\hat{\B}_1,\hdots,\hat{\B}_{\ell}$ over $\Fq$. The system over $\Fq$ can be written as
\begin{equation*}
  \s^{\top}_{\text{ext}} = \hat{\H}_{\text{ext}} \hat{\b}{}^{\top},
\end{equation*}
where $\s_{\text{ext}} \in \Fq^{(n-k)m}$ is the expanded syndrome and the matrix $\hat{\H}_{\text{ext}} \in \Fq^{m(n-k)\times\npr t'}$ depends only on $\H$ and $\hat{\a}$. Further, the vector $\hat{\b}$ is defined as
\begin{equation*}
 \hat{\b} :=
   [ \hat{B}_{111},
    \hdots,
    \hat{B}_{\ell t_{\ell} \npr} ],
\end{equation*}
where $\hat{B}_{i j r}$ denotes the entry in the $j$-th row and $r$-th column of the matrix $\hat{\B}_i$.

The system has a unique solution if and only if $\rk(\hat{\H}_{\text{ext}})= \npr t' $. To see that this is always the case, suppose $\s^{\top}_{\text{ext}} = \hat{\H}_{\text{ext}} \hat{\b}{}^{\top}=\0$ and $\rk(\hat{\H}_{\text{ext}}) < \npr t' $. Then, there exists a vector $\hat{\b} \neq \0$ such that $\hat{\H}_{\text{ext}} \hat{\b}{}^{\top}= \H (\hat{\a} \hat{\B})^{\top} =\0$ which means $\hat{\a} \hat{\B} \in \mathcal{C}\setminus \{\0\}$. Since $\wtSR(\hat{\a} \hat{\B}) = t' < d$, this is a contradiction.

The heaviest step is to solve an $m(n-k) \times \eta t'$ linear system over $\Fq$, where $\eta t' \leq m(n-k)$.
This can be done in $O(m^3(n-k)^3)$ operations over $\Fq$.
\end{proof}

\rev{
\begin{remark}
As we consider $\Fqm$-linear codes in this paper, it is necessary to treat row and column sum-rank supports separately. However, in case $\Fq$-linear or non-linear codes are considered, this distinction can be neglected since transposition preserves $\Fq$-linearity, and therefore, the column support can be thought of as the row support, and vice versa. Note that the presented algorithm can be adapted to $\Fq$-linear codes if an erasure decoder of this code is known. However, deriving an erasure decoder for $\Fq$-linear codes is outside the scope of this paper.
\end{remark}
}

\section{The Generic Decoder}\label{sec:super_support_finding}

We have seen in the previous section that we can uniquely recover an error $\e$ if we find a row or column super-support $\Fset \supseteq \Esete$ of sum-rank weight $s$ with $t \leq s < d$.
In this section, we describe a Las Vegas-type algorithm (\cref{alg:generic_decoder} below) that chooses row or column supports $\Fset$ of weight $s$ at random according to a designed probability mass function (here denoted by $\mathsf{DrawRandomSupport}(s,t,\ambdim)$, see \cref{alg:draw_random_support} in \cref{ssec:support-finding_algorithm}).
Notation-wise, there is no difference between drawing random row or column supports if we allow the ambient space dimension $\dim \Fset_i s $ of a constituent support subspace to be arbitrary. We denote this dimension by $\ambdim$ and set it $\ambdim = \eta$ (i.e., $\Fset_i= \Fset_i^{\mathsf{(R)}} \subseteq \Fq^\eta$) in the row support case and $\ambdim = m$ (i.e., $\Fset_i= \Fset_i^{\mathsf{(C)}} \subseteq \Fq^\eta$) in the column support case.
We also omit the prefixes ``row'' or ``column'' in this section.
This allows us to treat both cases in a unified manner.

\begin{algorithm}[ht!]
	\caption{$\textsf{Generic Sum-Rank Decoder}$}\label{alg:generic_decoder}
	\SetKwInOut{Input}{Input}\SetKwInOut{Output}{Output}
	\Input{\rev{Parameters $q,m,k,n,\ell,t$  \\
               Parity-check matrix $\H \in \Fqm^{(n-k)\times n}$ of an $\Fqm$-linear $[n,k]_{\Fqm}$ code $\Code$ \\
               Received vector $\r \in \Fqm^n$ \\
               Integer $s$ with $t \leq s \leq n-k$}}
	\Output{\rev{Vector $\e' \in \Fqm^n$ such that $\wtSR(\e')\leq t$ and $\r-\e' \in \Code$}}
	$\e' \gets \0$ \\
        $\npr \gets n/\ell$ \\
	$\ambdim \gets \min\{m,\npr\}$ \\
	\While{$\H(\r-\e')^\top \neq \0$ or $\wtSR(\e') \rev{ > } t$}{
		$\Fset \gets \mathsf{DrawRandomSupport}(s,t,\ambdim)$ (\cref{alg:draw_random_support} in \cref{ssec:support-finding_algorithm}) \label{line:generic_decoder_start_iteration} \\
		\If{$\ambdim=\npr$} {
			$\e' \gets$ column erasure decoding w.r.t.\ $\Fset$, $\H$, $\r$ (cf.~\cref{thm:column_erasure_decoding})
		} \Else {
			$\e' \gets$ row erasure decoding w.r.t.\ $\Fset$, $\H$, $\r$ (cf.~\cref{thm:row_erasure_decoding}) \label{line:generic_decoder_stop_iteration}
		}
	}
	\Return{$\e'$}
\end{algorithm}

The main statement of this section is \cref{thm:generic_decoder_simple}, which bounds the expected runtime of  \cref{alg:generic_decoder}. Note that by ignoring the cost of one iteration (i.e., setting $\Wperiteration=1$) in \cref{thm:generic_decoder_simple}, one obtains lower and upper bounds on the expected number of iterations that the algorithm takes until a suitable support is found.
Since the proof is rather technical, we prove it in the course of this section. In the statement, we use the notation $\Q$, which is defined in \eqref{eq:Q} below.

\begin{theorem}\label{thm:generic_decoder_simple}
Let $\c$ be a codeword of a sum-rank metric code $\Code$ of minimum sum-rank distance $d$.
Further, let $\e$ be an error of sum-rank weight $t<d$.
Then, Algorithm~\ref{alg:generic_decoder} with input $\r=\c+\e$ and parameter $s$ with $t \leq s <d$ returns an error $\e'$ of sum-rank weight $t$ such that $\r-\e'$ is a codeword.

Each iteration (Lines~\ref{line:generic_decoder_start_iteration}--\ref{line:generic_decoder_stop_iteration}) of \cref{alg:generic_decoder} costs $\Wperiteration \in O^\sim\!\left(n^3 m^3 \log_2(q)\right)$ bit operations.
By including also the expected number of iterations, we can bound the overall expected runtime (in bit operations) $\Wnew$ of \cref{alg:generic_decoder} by
\begin{align*}
\WnewLB \leq \Wnew \leq \WnewUB \leq \WnewUBsimple,
\end{align*} 
where, for $\ambdim = \mu = \min\{\eta,m\}$, we define (see \eqref{eq:Q} for $\Q$)
\begin{align}
\WnewLB &:= |\Tset|^{-1} \Q,
 \label{eq:WF_good_lower_bound} \\
\WnewUB &:= \Wperiteration \Q 
\text{ and} \label{eq:WF_good_upper_bound} \\
\WnewUBsimple &:= \Wperiteration \tbinom{\ell+t-1}{\ell-1}\gamma_q^\ell q^{t(\ambdim-\frac{s}{\ell})}, \label{eq:WF_simplified_upper_bound}
\end{align}
Furthermore, the more precise bounds \eqref{eq:WF_good_lower_bound} and \eqref{eq:WF_good_upper_bound} can be computed in bit complexity $O^\sim\!\left(tsn^3\mu \ambdim^2 \log_2(q)\right)$. %
\end{theorem}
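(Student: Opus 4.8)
The proof of \cref{thm:generic_decoder_simple} naturally splits into three parts: correctness of the algorithm, the cost $\Wperiteration$ of a single iteration, and the analysis of the expected number of iterations (which yields the bounds $\WnewLB,\WnewUB,\WnewUBsimple$). I would organize the remainder of the section around these parts, filling in the probability mass function $\mathsf{DrawRandomSupport}$ and the quantity $\Q$ along the way, and collecting the pieces into a final proof at the end of the section.

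\emph{Correctness.} This is the easy part and follows almost immediately from the erasure-decoding results already proved. Since $t \le s < d$, once an iteration draws a super-support $\Fset \supseteq \Esete$ of weight $s$, \cref{thm:column_erasure_decoding} (if $\ambdim=\eta$) or \cref{thm:row_erasure_decoding} (if $\ambdim=m$) recovers the unique codeword $\c$ and hence the error $\e' = \r - \c$, which has $\wtSR(\e') = t \le s < d$; the loop condition $\H(\r-\e')^\top \ne \0$ or $\wtSR(\e') > t$ then fails and the algorithm terminates returning $\e'$. Conversely, if the drawn $\Fset$ is \emph{not} a super-support of $\e$, the linear system solved in the erasure decoder either has no solution of the required shape or produces some $\e'$ with $\r-\e'\notin\Code$ or $\wtSR(\e')>t$, so the loop continues; thus the algorithm only ever halts with a valid output. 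One should note that $\mathsf{DrawRandomSupport}(s,t,\ambdim)$ must be designed so that \emph{every} super-support of every weight-$t$ error is drawn with positive probability — this guarantees almost-sure termination — and I would phrase the lemma describing $\mathsf{DrawRandomSupport}$ to include exactly this guarantee together with an explicit formula for $\Pr[\Esete \subseteq \Fset]$ in terms of the weight decomposition $\t$ of $\e$.

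\emph{Cost per iteration.} Here I would simply invoke the two erasure-decoding complexity bounds: column erasure decoding costs $O((n-k)^3 m^2)$ and row erasure decoding costs $O((n-k)^3 m^3)$ operations over $\Fq$, and drawing the support (as will be shown in \cref{ssec:support-finding_algorithm}) is cheaper than this. Converting $\Fq$-operations to bit operations multiplies by $O^\sim(\log_2 q)$, and bounding $n-k \le n$ and $m \le n$ absorbs both cases into $O^\sim(n^3 m^3 \log_2 q)$. This gives $\Wperiteration$.

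\emph{Expected number of iterations and the bounds.} This is the main obstacle. The key point is that the success probability of a single iteration depends on the \emph{unknown} weight decomposition $\t$ of the error $\e$, so the worst case must be taken over $\t \in \Tset$. I would define $\Q$ (in the forthcoming equation \eqref{eq:Q}) essentially as $\max_{\t\in\Tset}\bigl(\Pr[\Esete\subseteq\Fset]\bigr)^{-1}$, times the number of weight decompositions $\s$ of the super-support summed appropriately — the precise form emerging from the design of $\mathsf{DrawRandomSupport}$. The expected number of iterations for a fixed error with decomposition $\t$ is the reciprocal of its success probability, so the worst-case expected number of iterations is $\Q$; multiplying by $\Wperiteration$ gives the upper bound $\WnewUB$. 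For the lower bound $\WnewLB = |\Tset|^{-1}\Q$: since $\mathsf{DrawRandomSupport}$ is a mixture over the $|\Tset|$ choices of target decomposition, the success probability for the true $\t$ is at least a $1/|\Tset|$ fraction of the ``best matched'' probability, and one iteration costs at least one bit operation, giving $\Wnew \ge |\Tset|^{-1}\Q$. The simplified bound $\WnewUBsimple$ comes from lower-bounding $\Pr[\Esete\subseteq\Fset]$ by a single clean expression: for each block the probability that a random $s_i$-subspace of $\Fq^{\ambdim}$ contains a fixed $t_i$-subspace is $\qbinomial{\ambdim-t_i}{s_i-t_i}\big/\qbinomial{\ambdim}{s_i} \ge q^{-t_i(\ambdim-s_i)}$ using \eqref{eq:bounds_gaussian_binomial}, and then Jensen's inequality on $\sum t_i(\ambdim - s_i)$ subject to $\sum t_i = t$, exactly as in the proof of \cref{thm:bound_number_fixed-weight-words}, together with $|\Tset| \le \binom{\ell+t-1}{\ell-1}$ and the $\gamma_q^\ell$ slack, yields $\binom{\ell+t-1}{\ell-1}\gamma_q^\ell q^{t(\ambdim - s/\ell)}$. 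Finally, the claim that \eqref{eq:WF_good_lower_bound} and \eqref{eq:WF_good_upper_bound} are computable in $O^\sim(tsn^3\mu\ambdim^2\log_2 q)$ follows by exhibiting a dynamic-programming evaluation of $\Q$ over the ranges of $\t$ and $\s$ (analogous to \cref{alg:number_sum-rank-words}), where the dominant cost is arithmetic on integers of bit size $O^\sim(s\ambdim\log_2 q)$ across a table indexed by $(t',s',\ell')$ of size $O(ts\ell)$, with the per-entry work and the factor $n^3$ absorbing the bookkeeping; I would defer this to a separate lemma. The genuinely delicate step is getting the design of $\mathsf{DrawRandomSupport}$ and the definition of $\Q$ to align so that both the tight bounds and the closed-form simplification fall out cleanly — everything else is a routine assembly of results already in hand.
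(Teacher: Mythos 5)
Your outline reproduces the paper's architecture (correctness via the two erasure decoders, $\Wperiteration$ from \cref{thm:column_erasure_decoding}/\cref{thm:row_erasure_decoding} plus the support-drawing cost, iteration count from a success-probability analysis over weight decompositions, and a dynamic program for computing the bounds), but the analytic core is missing and one step is argued in the wrong direction. The crux is the concrete design that your sketch defers: the paper fixes the map $\scomp$ (\cref{alg:scomp}), which for a given error decomposition $\t$ maximizes the containment probability $\rhoprob(\s,\t)$ over $\s \in \Tsets$ (\cref{lem:scomp_maximizes_rho}), sets $\rhoprobt(\t):=\rhoprob(\scomp(\t,s),\t)$, and draws $\t$ with probability $p_{\t}=\rhoprobt(\t)^{-1}\Q^{-1}$ with $\Q=\sum_{\t\in\Tset}\rhoprobt(\t)^{-1}$ as in \eqref{eq:Q}. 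This choice \emph{equalizes} $p_{\t}\,\rhoprobt(\t)=\Q^{-1}$ over all decompositions, and both halves of \cref{prop:bounds_success_probability} hinge on it: $\Pr(\Esete\subseteq\Fset)\geq p_{\te}\rhoprobt(\te)=\Q^{-1}$ gives $\WnewUB$, while $\Pr(\Esete\subseteq\Fset)=\sum_{\t}p_{\t}\rhoprob(\scomp(\t,s),\te)\leq\sum_{\t}p_{\t}\rhoprobt(\t)=|\Tset|\Q^{-1}$ (using the maximality from \cref{lem:scomp_maximizes_rho}) gives $\WnewLB$. Your lower-bound argument runs backwards: asserting that the success probability is \emph{at least} a $1/|\Tset|$ fraction of the best-matched probability can only bound the expected number of iterations from above; to conclude $\Wnew\geq|\Tset|^{-1}\Q$ you need an \emph{upper} bound on the success probability, which is precisely where the equalizing distribution and the maximality of $\scomp$ are used, and neither appears in your proposal (your tentative definition of $\Q$ as a maximum ``times the number of decompositions summed appropriately'' is also not the quantity the theorem refers to).

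The second gap is $\WnewUBsimple$. You claim it follows ``by Jensen exactly as in \cref{thm:bound_number_fixed-weight-words}'', but here one must bound $\max_{\t\in\Tset}\rhoprobt(\t)^{-1}$, i.e.\ show $\sum_i t_i(\ambdim-s_i)\leq t(\ambdim-\tfrac{s}{\ell})$ where $\s=\scomp(\t,s)$ depends on the \emph{whole} vector $\t$; equivalently $\sum_i t_is_i\geq \tfrac{ts}{\ell}$ for this coupled $\s$. A bare Jensen bound (which in effect replaces $s_i$ by $t_i$) yields only $t\ambdim-\tfrac{t^2}{\ell}$, which is weaker since $s\geq t$, and treating the $s_i$ as free variables fails because the minimum of $\sum_i t_is_i$ could then be driven below $\tfrac{ts}{\ell}$ by placing large $s_i$ on blocks with small $t_i$. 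The paper proves this via a real-valued relaxation of $\scomp$ and an explicit minimization (\cref{prop:simplified_bound_lemma}); an alternative would be to exploit that $\scomp$ outputs $\s$ similarly ordered with $\t$ (e.g.\ via Chebyshev's sum inequality), but some argument of this kind is required and is exactly the step you flagged as delicate without supplying it. The remaining ingredients you invoke (correctness, per-iteration cost, and computability of $\Q$ by dynamic programming) do correspond to the paper's \cref{prop:drawing_efficiently_from_tprime_distribution} and \cref{prop:sum_compute}, though note the latter must additionally cope with $\scomp(\t,s)_i$ depending on the whole sorted vector, which forces the recursion over \emph{ordered} decompositions with multiplicity factors rather than the simple block-by-block split of \cref{alg:number_sum-rank-words}.
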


\begin{proof}
See \cref{ssec:proof_main_statement}.
\end{proof}

\begin{remark}\label{rem:parameter_s=n-k}
We can guarantee uniqueness of erasure decoding in Algorithm~\ref{alg:generic_decoder} only for $s<d$, but it might work up to $s=\min\!\left\{n-k, \lfloor\tfrac{m}{\npr}(n-k)\rfloor\right\}$, depending on the chosen super-support.
Most generic Hamming- and rank-metric decoding papers use $s=n-k$ without analyzing the erasure decoding success probability.
Since in practice, the latter probability is high for many codes, $s= \min\big\{n-k, \lfloor\tfrac{m}{\npr}(n-k)\rfloor\big\}$ is indeed a good heuristic choice for a practical generic decoder.
\end{remark}

\subsection{Aim and Design of the Support Drawing Algorithm}

\rev{Our aim in designing the probability distribution for drawing a random support $\Fset$ of weight $s$ is to minimize the worst-case expected number of iterations until we find a super-support of $\Esete$. Since we draw random supports $\Fset$ until one of them is a super-support of $\Esete$, the expected number of required draws is equal to the inverse of the probability that $\Fset$ contains $\Esete$. As we draw one support $\Fset$ per iteration, we have
\begin{align*}
\NumItWorstCase = \max_{\substack{\e \in \Fqm^n \, : \\ \wtSR(\e) = t}} \left\{\frac{1}{\Pr(\Esete \subseteq \Fset)} \right\}.
\end{align*}
}

Our algorithm draws $\Fset$ in two steps: First, we choose at random a weight decomposition $\s \in \Tsets$ of weight $s$, according to a designed probability distribution $\ps$.
Then, we draw the support $\Fset$ uniformly at random from the set $\SupportSet(\s)$ of supports with weight decomposition $\s$.
The following lemma states that the success probability of this decoder, conditioned on a specific weight decomposition $\s$, only depends on $\s$ and the weight decomposition $\te$ of the error.

\begin{lemma}\label{lem:success_fixed_s_vector}
Let $\e$ be of $\ell$-sum-rank weight $t$.
Further, let $\s \in \Tsets$ and choose $\Fset$ uniformly at random from $\SupportSet(\s)$. %
Then,
\begin{align*}
\rev{\Pr(\Esete \subseteq \Fset \mid \s)} = \rhoprob(\s,\te),
\end{align*}
where we define
\begin{align}
\rhoprob(\s,\t) := \prod_{i=1}^{\ell} \frac{\qbinomial{s_i}{t_i}}{\qbinomial{\ambdim}{t_i}}. \label{eq:definition_varrho}
\end{align}
In particular, \rev{$\Pr(\Esete \subseteq \Fset\mid \s)$} only depends on the decompositions $\s$ and $\te$, and we have $\rev{\Pr(\Esete \subseteq \Fset\mid\s)}>0$ if and only if $\s \succeq \te$\, \rev{where $\succeq$ is the partial order given by coordinate-wise comparisons.}

Furthermore, we have (with $1 \leq \gamma_q \leq 3.5$ as defined in \eqref{eq:gamma_q})
\begin{align}
\gamma_q^{-\ell} q^{-\sum_{i=1}^{\ell} t_i(\ambdim-s_i)} \leq \rhoprob(\s,\t) \leq \gamma_q^\ell q^{-\sum_{i=1}^{\ell} t_i(\ambdim-s_i)}. \label{eq:bounds_rho_s_t}
\end{align}
\end{lemma}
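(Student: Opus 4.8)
The plan is to prove \cref{lem:success_fixed_s_vector} by reducing the probabilistic statement to a purely combinatorial counting problem in each block, then assembling the per-block results into a product because the support $\Fset$ and the error support $\Esete$ decompose into independent constituents across the $\ell$ blocks. Concretely, since $\Fset = \Fset_1 \times \cdots \times \Fset_\ell$ is drawn uniformly from $\SupportSet(\s)$, the constituents $\Fset_i$ are mutually independent, with $\Fset_i$ uniform among the $s_i$-dimensional subspaces of $\Fq^\ambdim$. Likewise, by \cref{lem:error_decomposition}, the support $\Esete = \Eset_1 \times \cdots \times \Eset_\ell$ has constituent $\Eset_i$ a fixed subspace of $\Fq^\ambdim$ of dimension $t_i$ (the $i$-th entry of the weight decomposition $\te$). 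The event $\Esete \subseteq \Fset$ is exactly the conjunction $\bigcap_{i=1}^\ell \{\Eset_i \subseteq \Fset_i\}$, so by independence $\Pr(\Esete \subseteq \Fset \mid \s) = \prod_{i=1}^\ell \Pr(\Eset_i \subseteq \Fset_i)$.

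The next step is to compute $\Pr(\Eset_i \subseteq \Fset_i)$ for a fixed $t_i$-dimensional subspace $\Eset_i \subseteq \Fq^\ambdim$ and a uniformly random $s_i$-dimensional subspace $\Fset_i$. The number of $s_i$-dimensional subspaces of $\Fq^\ambdim$ is $\qbinomial{\ambdim}{t_i}$ when... no: it is $\qbinomial{\ambdim}{s_i}$. Among these, the ones containing the fixed subspace $\Eset_i$ are in bijection with the $(s_i - t_i)$-dimensional subspaces of the quotient space $\Fq^\ambdim / \Eset_i \cong \Fq^{\ambdim - t_i}$, of which there are $\qbinomial{\ambdim - t_i}{s_i - t_i}$. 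Hence $\Pr(\Eset_i \subseteq \Fset_i) = \qbinomial{\ambdim-t_i}{s_i-t_i} / \qbinomial{\ambdim}{s_i}$. A short manipulation of the product formula for Gaussian binomials (or the standard identity $\qbinomial{\ambdim}{s_i}\qbinomial{s_i}{t_i} = \qbinomial{\ambdim}{t_i}\qbinomial{\ambdim-t_i}{s_i-t_i}$) rewrites this as $\qbinomial{s_i}{t_i}/\qbinomial{\ambdim}{t_i}$, matching the definition of $\rhoprob(\s,\te)$ in \eqref{eq:definition_varrho}. Multiplying over $i$ yields the claimed formula, and since each factor depends only on $s_i$ and $t_i$, the probability depends only on the two decompositions $\s$ and $\te$. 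The factor $\qbinomial{s_i}{t_i}$ is zero precisely when $t_i > s_i$, so the product is positive if and only if $t_i \leq s_i$ for all $i$, i.e.\ $\s \succeq \te$.

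For the final bounds \eqref{eq:bounds_rho_s_t}, I would apply \eqref{eq:bounds_gaussian_binomial} to numerator and denominator of each factor: $\qbinomial{s_i}{t_i} \geq q^{(s_i - t_i)t_i}$ and $\qbinomial{\ambdim}{t_i} \leq \gamma_q q^{(\ambdim - t_i)t_i}$ gives the lower bound $\qbinomial{s_i}{t_i}/\qbinomial{\ambdim}{t_i} \geq \gamma_q^{-1} q^{-(\ambdim - s_i)t_i}$, and the reversed inequalities give the upper bound $\leq \gamma_q\, q^{-(\ambdim - s_i)t_i}$; here one uses $1 \leq \gamma_q \leq 3.5$. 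Taking the product over the $\ell$ blocks turns the single $\gamma_q^{\pm 1}$ into $\gamma_q^{\pm \ell}$ and the exponent into $-\sum_{i=1}^\ell t_i(\ambdim - s_i)$, which is exactly \eqref{eq:bounds_rho_s_t}. No step here is genuinely hard; the only thing to be careful about is the degenerate cases $t_i = 0$ (where the factor is $1$ and all formulas hold trivially) and $s_i < t_i$ (where the factor vanishes), so that the Gaussian-binomial bounds \eqref{eq:bounds_gaussian_binomial} are only invoked when $0 \leq t_i \leq s_i \leq \ambdim$, i.e.\ when $\s \succeq \te$. The mildly technical part is just recording the quotient-space bijection cleanly; everything else is bookkeeping.
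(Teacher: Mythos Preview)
Your proof is correct and follows essentially the same approach as the paper: factor the probability across independent blocks, count the $s_i$-dimensional superspaces of a fixed $t_i$-dimensional subspace as $\qbinomial{\ambdim-t_i}{s_i-t_i}$, rewrite via the Gaussian-binomial identity, and then apply \eqref{eq:bounds_gaussian_binomial} factorwise for the bounds. Your treatment is in fact slightly more careful than the paper's (spelling out the quotient-space bijection, the positivity condition, and the degenerate cases), but the underlying argument is the same.
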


\begin{proof}
Since $\Fset$ is drawn uniformly, the subspaces $\Fset_i$ are drawn independently and uniformly from the set of $s_i$-dimensional subspaces of $\Fq^\ambdim$. Hence, \rev{$\Pr(\Esete \subseteq \Fset\mid \s)$} equals the product of the probabilities that the $i$-th subspace $\Fset_i$ is a superspace of $\Eset_i$. This probability is given by $\qbinomial{\ambdim-t_i}{s_i-t_i}\qbinomial{\ambdim}{s_i}^{-1}$, where the numerator counts the number of possibilities to expand the $t_i$-dimensional subspace $\Eset_i$ into an $s_i$-dimensional space and the denominator gives the total number of $s_i$-dimensional subspaces of $\Fq^{\ambdim}$. By properties of the Gaussian binomial coefficient, we get
$\textstyle\qbinomial{\ambdim-t_i}{s_i-t_i}\qbinomial{\ambdim}{s_i}^{-1} = \qbinomial{s_i}{t_i} \qbinomial{\ambdim}{t_i}^{-1}$.
The bounds immediately follow from \eqref{eq:bounds_gaussian_binomial}.
\end{proof}

\cref{lem:success_fixed_s_vector} allows us to compute the worst-case number of iterations of the algorithm for a given probability mass function $\ps$ of $\s$ by
\begin{align}
\NumItWorstCase = \max_{\t \in \Tsett} \left(\sum_{\s \in \Tsets} \ps \rhoprob(\s,\t)\right)^{-1}. \label{eq:worst_case_num_iterations_exact_optimization}
\end{align}

The problem of minimizing \eqref{eq:worst_case_num_iterations_exact_optimization} over all valid distributions $\ps$ on $\Tsets$ can be formulated as a linear program and solved numerically for small parameters $\ell$, $\ambdim$, $s$ using standard methods. Note that the unknowns are the $\ps \in [0,1]$, and the number of unknowns, $|\Tsets|$, grows fast in $\ell$, $\ambdim$, and $s$. Due to this limitation, we present a formal discussion in Appendix~\ref{app:optimal_support_finding_via_linear_programming} of this ``optimal'' choice of $\ps$, and continue with a more scalable solution.

We relax the problem of maximizing \eqref{eq:worst_case_num_iterations_exact_optimization} as follows.
\begin{itemize}
\item We give a randomized mapping $\scomp \, : \, \Tsett \to \Tsets$ that maximizes $\rhoprob(\scomp(\t,s),\t)$ for a given $\t \in \Tsett$ (see Algorithm~\ref{lem:success_fixed_s_vector} and \cref{lem:scomp_maximizes_rho} below).
This mapping is randomized, i.e.~for each input there are multiple possible outputs and one is selected at random; we discuss this further below.
\item Instead of choosing a vector $\s \in \Tsets$ directly, we first choose a vector $\t \in \Tsets$ at random according to a designed distribution $p_{\t}$ on $\Tsett$, and set $\s \gets \scomp(\t,s)$. This means that for a fixed error $\e$, we can bound
\begin{align*}
\Pr(\Esete \subseteq \Fset) &= \sum_{\s \in \Tsets} \ps \rhoprob(\s,\te)  \\
&\geq p_{\te} \cdot \rhoprob(\scomp(\te,s),\te).
\end{align*}
This bound is relatively tight for this choice of $\s$ (see \cref{prop:bounds_success_probability} below). %
\item Instead of minimizing  \eqref{eq:worst_case_num_iterations_exact_optimization}, we minimize the following upper bound on the worst-case expected number of iterations
\begin{align}
&\NumItWorstCase \leq \max_{\t \in \Tsett} \left[p_{\t} \cdot \rhoprob(\scomp(\t,s),\t)\right]^{-1}, \label{eq:worst_case_num_iterations_bound_optimization}
\end{align}
over all valid probability mass functions $p_\t$ on $\Tsett$.
\end{itemize}
This comes at the cost of a slightly smaller success probability than the optimal choice of $\ps$ (cf.~\cref{sec:comparison} for a numerical comparison), but allows us to give a support drawing strategy that can be practically implemented and whose running time we can bound.

\cref{alg:scomp} formally defines the randomized mapping $\scomp$ and \cref{lem:scomp_maximizes_rho} proves that $\s = \scomp(\t,s)$ maximizes $\rhoprob(\s,\t)$ among all $\s \in \Tsets$.
The randomization in \cref{line:scomp_choice_position_end} prevents a bias in preferring certain positions (compared to some deterministic choice), and seems to be practically advantageous, especially for large $\ell$: in fact, for the Hamming case with $\mu = 1$ and $n = \ell$, then such a randomization is essential for the efficacy of Prange's generic decoder (cf.~\cref{ssec:comparison_Hamming_rank}).
Our analysis, however, is not able to take the randomness properly into account, and will depend merely on $\rhoprobt(\t)$, which is defined as
\begin{align}
\rhoprobt(\t) := \rhoprob(\scomp(\t,s),\t) \label{eq:definition_rho_scomp}
\end{align}
for all $\t \in \Tsett$ and a fixed $s \geq t$.
Note that though $\scomp$ is randomized, then $\rhoprobt(\t)$ is not.

\begin{algorithm}[ht!]
	\caption{$\scomp(\t,s)$}\label{alg:scomp}
	\SetKwInOut{Input}{Input}\SetKwInOut{Output}{Output}
	\Input{$\t \in \Tsett$ and $s \in \ZZ$ with $t \leq s \leq \ell \mu$.}
	\Output{$\s \in \Tsets$}
	$\s = [s_1,\dots,s_\ell] \gets \t$; \quad $\delta \gets s-t$ \label{line:start_s_comp} \\
	\While{$\delta >0$}{
		\rev{$\Jset_1 \gets \big\{i \in\{1,\dots,n\} \, : \, s_i \neq \ambdim\big\}$ \label{line:scomp_choice_position_start}\\
			$\Jset_2 \gets \big\{i \in\Jset_1 \, : \, t_i = \displaystyle\max_{j  \in \Jset_1}\big\{t_j\big\} \big\}$ \\
			$\Jset_3 \gets \big\{i \in\Jset_2 \, : \, s_i = \displaystyle\min_{j  \in \Jset_2}\big\{s_j\big\} \big\}$ \\
			$h \sample \Jset_3$ \label{line:scomp_choice_position_end}\\
			$s_h \gets s_h + 1$; \, $\delta \gets \delta-1$ \label{line:stop_s_comp}
		}
    }
    \Return{$\s$}
\end{algorithm}

\begin{lemma}\label{lem:scomp_maximizes_rho}
Let $\t \in \Tsett$ and let $t \leq s \leq \ell \mu$.
Then, $\s = \scomp(\t,s)$, with $\scomp$ as in Algorithm~\ref{alg:scomp}, maximizes $\rhoprob(\s,\t)$, i.e.,
\begin{align*}
\rhoprob(\scomp(\t,s),\t) = \max_{\s \in \Tsets} \rhoprob(\s,\t).
\end{align*}
\end{lemma}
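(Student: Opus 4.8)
The plan is to strip $\rhoprob(\cdot,\t)$ down to a separable optimization over the $\ell$ coordinates and then recognize \cref{alg:scomp} as a greedy algorithm for it. Since $\rhoprob(\s,\t)=\big(\prod_{i=1}^\ell\qbinomial{\ambdim}{t_i}\big)^{-1}\prod_{i=1}^\ell\qbinomial{s_i}{t_i}$ and the first factor does not depend on $\s$, it suffices to show that $\scomp(\t,s)$ maximizes $F(\s):=\prod_{i=1}^\ell\qbinomial{s_i}{t_i}$ over $\s\in\Tsets$; and since $F(\s)=0$ unless $\s\succeq\t$, we may restrict to $\s=\t+\vec d$ with $d_i\in\{0,\dots,\ambdim-t_i\}$ and $\sum_i d_i=s-t=:\delta$. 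Using the telescoping identity $\qbinomial{t_i+d}{t_i}=\prod_{j=1}^{d} g_{t_i}(t_i+j-1)$ with
\[
g_b(a):=\frac{\qbinomial{a+1}{b}}{\qbinomial{a}{b}}=\frac{q^{a+1}-1}{q^{a+1-b}-1}=q^{b}+\frac{q^{b}-1}{q^{a+1-b}-1}\qquad(a\ge b),
\]
the task becomes: distribute $\delta$ unit increments among the coordinates, where incrementing coordinate $i$ from its current value $s_i$ to $s_i+1$ multiplies $F$ by the ``marginal gain'' $g_{t_i}(s_i)$, so that $F(\s)$ equals the product of all $\delta$ collected gains.

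Next I would record the two monotonicity properties of $g$ that drive the greedy rule: (i) for fixed $b\ge 1$, $g_b(a)$ is strictly decreasing in $a$ on $a\ge b$ (equivalently, $\log\qbinomial{a}{b}$ is concave in $a$, so within a coordinate there are diminishing returns); and (ii) if $b>b'$ then $g_b(a)>g_{b'}(a')$ for all valid $a\ge b,\ a'\ge b'$, which follows from the additive decomposition above via the uniform estimates $g_b(a)\ge q^{b}$ and $g_{b'}(a')\le\frac{q^{b'+1}-1}{q-1}<q^{b}$. Both are elementary manipulations of the closed form. With (i) and (ii), a coordinate that maximizes the current marginal gain $g_{t_i}(s_i)$ among all not-yet-capped coordinates is precisely one with the largest $t_i$ (by (ii)), and among those one with the smallest current $s_i$ (by (i)); maintaining the invariant $s_i\ge t_i$ throughout (it holds initially and increments only raise $s_i$) makes every $g$-argument valid, and $s_i\neq\ambdim$ is equivalent to $s_i<\ambdim$ since $s_i\le\ambdim$. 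This is exactly the coordinate $h$ chosen in Lines~\ref{line:scomp_choice_position_start}--\ref{line:scomp_choice_position_end} of \cref{alg:scomp}; thus $\scomp(\t,s)$ realizes the natural greedy algorithm ``repeatedly apply the increment of maximum marginal gain,'' and the random tie-break only selects among equally good moves, so it does not affect the resulting value of $F$.

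It then remains to prove greedy optimality for this log-concave separable objective. I would compare, term by term, the sorted sequence $v_1\ge\dots\ge v_\delta$ of gains collected by greedy (non-increasing, because each increment replaces exactly one available gain by a strictly smaller one, by (i)) with the sorted sequence $\hat v_1\ge\dots\ge\hat v_\delta$ of gains collected by any feasible $\hat\s=\t+\hat{\vec d}$, and show $v_k\ge\hat v_k$ for all $k$. The key step is an exchange argument: after $k-1$ greedy steps the collected gains are, for each $i$, the top $n_i$ gains of coordinate $i$ with $\sum_i n_i=k-1$, while the $k$ largest gains of $\hat\s$ are, for each $i$, the top $m_i$ gains of coordinate $i$ with $m_i\le\hat d_i$ and $\sum_i m_i=k$; since $\sum_i m_i>\sum_i n_i$ there is a $p$ with $m_p\ge n_p+1$, hence $n_p<m_p\le\hat d_p\le\ambdim-t_p$, so the gain $g_{t_p}(t_p+n_p)$ is still available to greedy at its $k$-th step, giving $g_{t_p}(t_p+n_p)\le v_k$; combining with $\hat v_k\le g_{t_p}(t_p+m_p-1)\le g_{t_p}(t_p+n_p)$ (the first inequality because $g_{t_p}(t_p+m_p-1)$ is one of $\hat\s$'s $k$ largest gains and $\hat v_k$ is the smallest of them, the second by (i) since $m_p-1\ge n_p$) yields $\hat v_k\le v_k$. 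Multiplying over $k$ gives $F(\scomp(\t,s))=\prod_k v_k\ge\prod_k\hat v_k=F(\hat\s)$, which is the claim. I expect the main obstacle to be stating this exchange cleanly — in particular verifying that the replacement coordinate's next gain is genuinely still available to greedy under the capacity constraints $d_i\le\ambdim-t_i$, and dispatching the degenerate situations (notably once greedy is forced to increment coordinates with $t_i=0$, whose gains equal $1$ and leave $F$ unchanged, and the trivial case $\delta=0$).
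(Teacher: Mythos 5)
Your proposal is correct and follows essentially the same route as the paper's own proof: it reduces the problem to maximizing $\prod_{i=1}^{\ell} \qbinomial{s_i}{t_i}$, computes the same marginal ratio $\tfrac{q^{s_i+1}-1}{q^{s_i-t_i+1}-1}$, and uses the same two monotonicity facts (decreasing in $s_i$, and strictly dominated by the $t_i$-level via $q^{t_i}<\cdot<q^{t_i+1}$) to identify \cref{alg:scomp} as the greedy rule. The only difference is that your sorted-gains exchange argument rigorously justifies the greedy optimality that the paper dispatches in a single sentence (``this choice will also maximize the possible increase in the following steps''), which is a strengthening of the same argument rather than a different approach.
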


\begin{proof}
As the denominator of \eqref{eq:definition_varrho} is independent of $\s$, it suffices to show that $\s=\scomp(\t,s)$ maximizes
\begin{align}
\prod_{i=1}^{\ell} \qbinomial{s_i}{t_i}. \label{eq:varrho_max_numerator}
\end{align}
for a given $\t$.
Say that we start with $\s = \t$ and increase entries of $\s$ by one until we have $\sum_{i=1}^{\ell} s_i = s$ (note that we can assume this since \eqref{eq:varrho_max_numerator} is zero if $s_i < t_i$ for some $i$).
We observe that \eqref{eq:varrho_max_numerator} is increased by a factor
\begin{align*}
\frac{\qbinomial{s_i+1}{t_i}}{\qbinomial{s_i}{t_i}}
\end{align*}
if we increase position $i$ of $\s$.
For $s_i \geq t_i$, we have
\begin{align*}
\frac{\qbinomial{s_i+1}{t_i}}{\qbinomial{s_i}{t_i}} &= \prod_{\mu=1}^{t_i} \frac{\left(\frac{q^{s_i+2-\mu}-1}{q^{\mu}-1}\right)}{\left(\frac{q^{s_i+1-\mu}-1}{q^{\mu}-1}\right)} \\
&= \frac{q^{s_i+1}-1}{q^{s_i-t_i+1}-1}
\end{align*}
For a fixed $t_i$, the quantity $\tfrac{q^{s_i+1}-1}{q^{s_i-t_i+1}-1}$ is monotonically decreasing in $s_i$, and we have
\begin{equation*}
q^{\rev{t_i}} < \frac{q^{s_i+1}-1}{q^{s_i-t_i+1}-1} < q^{\rev{t_i+1}}. %
\end{equation*}
It follows that the largest increase of \eqref{eq:varrho_max_numerator} is achieved by increasing a position $i$ with \rev{smallest} $s_i$ among those positions with largest $t_i$. Increasing such a position in a greedy fashion attains a global maximum since this choice will also maximize the possible increase in the following steps.
Hence, \eqref{eq:varrho_max_numerator} is maximized by iteratively increasing $s_i$ \rev{by one} such that $s_i \leq \ambdim$ and $\sum_{i=1}^{\ell} s_i \leq s$ for some $i$ with \rev{smallest} $s_i < \ambdim$ among those positions that have a maximal $t_i$.
This is exactly what $\scomp(\cdot, \cdot)$ does.
\end{proof}

\subsection{The Support-Drawing Algorithm}\label{ssec:support-finding_algorithm}

Based on the ideas presented above, \cref{alg:draw_random_support} outlines the support-drawing algorithm that we propose.
The probability distribution $p_\t$ is chosen to minimize the bound on the worst-case expected number of iterations in \eqref{eq:worst_case_num_iterations_bound_optimization}.
The following proposition presents bounds on the expected number of iterations.
Note that the lower and upper bound are independent of the error and differ by only a factor $|\Tsett|$, which is relatively small compared to the absolute values of the bounds for not too large $\ell$.
For notational convenience, we define the following value:
\begin{align}
\Q := \sum_{\t \in \Tset} \rhoprobt(\t)^{-1}. \label{eq:Q}
\end{align}

\begin{algorithm}[ht!]
	\caption{$\mathsf{DrawRandomSupport}(s,t,\ambdim)$}\label{alg:draw_random_support}
	\SetKwInOut{Input}{Input}\SetKwInOut{Output}{Output}
	\Input{Integers $t,s,\ambdim$ with $\mu \leq \ambdim$ and $t \leq s \leq \ell \mu$}
	\Output{$\Fset$ of weight $s$}
	Draw $\t \in \Tsett$ according to the distribution
	\begin{align*}
	p_{\t} := \rhoprobt(\t)^{-1} \Q^{-1} \quad \forall \, \t \in \Tsett, \qquad \textrm{ where $Q$ is defined as in \eqref{eq:Q}} %
	\end{align*} \label{line:draw_t_vector} \\
	$\s \gets \scomp(\t,s)$ \\
	$\Fset \sample \SupportSet(\s)$ \\ %
	\Return{$\Fset$}
\end{algorithm}

\begin{proposition}\label{prop:bounds_success_probability}
Let $\e$ be an error of sum-rank weight $t$ and let $s$ be an integer with $t \leq s \leq \ell \mu$.
If $\Fset$ is a super-support that is drawn by Algorithm~\ref{alg:draw_random_support} with input $t$ and $s$, then we have
\begin{align*}
|\Tset|^{-1} \Q
\leq \frac{1}{\Pr(\Esete \subseteq \Fset)}
\leq \Q \ ,
\end{align*}
where $\Q$ is defined as in \eqref{eq:Q}. %
\end{proposition}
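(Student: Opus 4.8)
The plan is to unfold the two-stage randomness of \cref{alg:draw_random_support} and reduce the analysis to the quantities $\rhoprobt(\t)$ that already appear in $p_\t$ and $\Q$. Writing $\te$ for the (fixed) weight decomposition of $\e$ and conditioning on the vector $\t \in \Tset$ drawn in the first line of \cref{alg:draw_random_support}, the law of total probability together with \cref{lem:success_fixed_s_vector} (applied once $\s=\scomp(\t,s)$ has been formed and $\Fset$ drawn uniformly from $\SupportSet(\s)$) gives
\begin{align*}
\Pr(\Esete \subseteq \Fset) = \sum_{\t \in \Tset} p_\t \, \mathbb{E}\!\left[\rhoprob(\scomp(\t,s),\te)\right],
\end{align*}
where the inner expectation averages over the internal coin tosses of $\scomp$. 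By \cref{lem:scomp_maximizes_rho}, for every $\t' \in \Tset$ and every realisation of $\scomp(\t',s)$ we have $\rhoprob(\scomp(\t',s),\t') = \max_{\s' \in \Tsets}\rhoprob(\s',\t') = \rhoprobt(\t')$; in particular the summand indexed by $\t = \te$ equals $p_{\te}\,\rhoprobt(\te)$ deterministically.

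For the upper bound $1/\Pr(\Esete \subseteq \Fset) \le \Q$ I would retain only the term $\t = \te$ in the sum (every summand is nonnegative) and substitute $p_{\te} = \rhoprobt(\te)^{-1}\Q^{-1}$; this gives $\Pr(\Esete \subseteq \Fset) \ge p_{\te}\,\rhoprobt(\te) = \Q^{-1}$, and the claim follows by taking reciprocals.

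For the lower bound the idea is to dominate \emph{every} summand by the one at $\t = \te$. For any $\t$, each realisation $\s$ of $\scomp(\t,s)$ lies in $\Tsets$, so \cref{lem:scomp_maximizes_rho} applied with the decomposition $\te$ gives $\rhoprob(\s,\te) \le \max_{\s' \in \Tsets}\rhoprob(\s',\te) = \rhoprobt(\te)$; averaging over $\scomp$ preserves this, and summing over $\t$ with weights $p_\t$ (which sum to $1$) yields $\Pr(\Esete \subseteq \Fset) \le \rhoprobt(\te)$, hence $1/\Pr(\Esete \subseteq \Fset) \ge \rhoprobt(\te)^{-1}$. The remaining step — and the one I expect to be the main obstacle — is to convert this error-dependent lower bound into the error-independent quantity $|\Tset|^{-1}\Q$: one must exploit that $\rhoprobt(\te)^{-1}$ is itself one of the $|\Tset|$ positive summands of $\Q = \sum_{\t \in \Tset}\rhoprobt(\t)^{-1}$, and control how far below the average it can sit by means of the structure of $\scomp$ (which maximises $\rhoprob(\cdot,\t)$ over $\Tsets$) and the two-sided estimate \eqref{eq:bounds_rho_s_t}. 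Absorbing this variation into the single factor $|\Tset|$ is where the real work lies; by contrast the upper bound is immediate from keeping a single term.
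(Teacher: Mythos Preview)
Your treatment of the upper bound is correct and coincides with the paper's: keeping only the summand $\t=\te$ and using $p_{\te}=\rhoprobt(\te)^{-1}\Q^{-1}$ immediately yields $\Pr(\Esete\subseteq\Fset)\ge\Q^{-1}$.

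The lower bound, however, has a genuine gap, and it is precisely the step you flag as ``the main obstacle''. Your uniform bound $\rhoprob(\scomp(\t,s),\te)\le\rhoprobt(\te)$ (valid by \cref{lem:scomp_maximizes_rho}) gives $\Pr(\Esete\subseteq\Fset)\le\rhoprobt(\te)$, hence $1/\Pr\ge\rhoprobt(\te)^{-1}$. But $\rhoprobt(\te)^{-1}$ is merely \emph{one} of the $|\Tset|$ summands of $\Q$, and it can be the smallest one, so it may sit strictly below the average $|\Tset|^{-1}\Q$. For a concrete instance, take $\ell=2$, $\mu=\ambdim=2$, $q=2$, $t=2$, $s=3$: then $\rhoprobt((2,0))^{-1}=1$ while $|\Tset|^{-1}\Q=5/3$, so the inequality $\rhoprobt(\te)^{-1}\ge|\Tset|^{-1}\Q$ is simply false. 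No amount of ``controlling how far below the average it can sit'' will repair this, because the bound you have already thrown away the information needed.

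The paper avoids this by using a \emph{term-dependent} rather than uniform upper bound: it argues that for every $\t\in\Tsett$,
\[
\rhoprob(\scomp(\t,s),\te)\ \le\ \rhoprobt(\t),
\]
i.e.\ the bound varies with the summation index $\t$, not with the fixed error decomposition $\te$. This is exactly what makes the weights $p_\t=\rhoprobt(\t)^{-1}\Q^{-1}$ do their job: each product $p_\t\cdot\rhoprobt(\t)$ collapses to the constant $\Q^{-1}$, and summing over $\t\in\Tset$ gives $|\Tset|\,\Q^{-1}$ directly, with no residual error-dependent factor to control. The key conceptual point you are missing is that the distribution $p_\t$ was \emph{designed} so that $p_\t\rhoprobt(\t)$ is constant; to exploit this you must bound the $\t$-th summand by something proportional to $\rhoprobt(\t)$, not by the fixed number $\rhoprobt(\te)$.
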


\begin{proof}
Denote by $\ps$ the distribution of $\s = \scomp(\t,s)$, where $\t$ is a random variable with probability mass function $p_\t$.
By \eqref{eq:worst_case_num_iterations_exact_optimization}, we have
\begin{align*}
\Pr(\Esete \subseteq \Fset) %
&= \sum_{\t \in \Tsets} p_{\t} \rhoprob(\scomp(\t,s),\te) \\
&\geq p_{\te} \rhoprob(\scomp(\te,s),\te) \\
&= \Q^{-1}.
\end{align*}
This proves the upper bound on $\Pr(\Esete \subseteq \Fset)^{-1}$.
For the lower bound, we first observe that for all $\t \in \Tsett$, \cref{lem:scomp_maximizes_rho} implies
\begin{align*}
\rhoprob(\scomp(\t,s),\te) \leq \rhoprob(\scomp(\t,s),\t) = \rhoprobt(\t).
\end{align*}
This yields
\begin{align*}
\Pr(\Esete \subseteq \Fset) &= \sum_{\t \in \Tsets} p_{\t} \rhoprob(\scomp(\t,s),\te) \\
&\leq \sum_{\t \in \Tsets} p_{\t} \rhoprobt(\t) \\
&= \sum_{\t \in \Tsets} \Q^{-1} = |\Tset| \Q^{-1} \ ,
\end{align*}
which proves the claim.
\end{proof}

At first glance, the lower and upper bounds in \cref{prop:bounds_success_probability} appear infeasible to compute since the number of summands, $|\Tsett|$, may grow super-polynomially in $t$ (depending on $\ell$ and $\mu$).
Furthermore, it is at this point unclear how to efficiently implement Line~\ref{line:draw_t_vector} of Algorithm~\ref{alg:draw_random_support}.
Below, we answer these two questions, and also give a simple upper bound on $\Q$.

\subsection{A Simple Bound on the Success Probability}

We start with a simple bound on $\Q$ from \eqref{eq:Q}.

\begin{proposition}\label{prop:simplified_bound_lemma}
For any $t \leq s \leq \ell \mu$, we have
\begin{align*}
\max_{\t \in \Tset} \rhoprobt(\t)^{-1} \leq \gamma_q^\ell q^{t(\ambdim-\frac{s}{\ell})}.
\end{align*}
In particular,
\begin{align*}
\Q \leq \tbinom{\ell+t-1}{\ell-1}\gamma_q^\ell q^{t(\ambdim-\frac{s}{\ell})} \ ,
\end{align*}
where $\gamma_q \leq 3.5$ is defined as in \eqref{eq:gamma_q}.
\end{proposition}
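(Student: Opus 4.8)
The statement has two parts: a pointwise bound on $\rhoprobt(\t)^{-1}$ for every $\t \in \Tset$, and the consequent bound on $\Q = \sum_{\t \in \Tset}\rhoprobt(\t)^{-1}$. The second part follows immediately from the first by summing over all $\t \in \Tset$ and using the cardinality bound $|\Tset| \leq \binom{\ell+t-1}{\ell-1}$ established earlier in the paper, so the real work is the pointwise estimate. The plan is therefore to fix $\t \in \Tset$, recall that $\rhoprobt(\t) = \rhoprob(\scomp(\t,s),\t)$ by definition \eqref{eq:definition_rho_scomp}, write $\s := \scomp(\t,s)$, and bound $\rhoprob(\s,\t)^{-1}$ from above.

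First I would invoke the lower bound in \eqref{eq:bounds_rho_s_t} from \cref{lem:success_fixed_s_vector}, namely $\rhoprob(\s,\t) \geq \gamma_q^{-\ell} q^{-\sum_{i=1}^{\ell} t_i(\ambdim - s_i)}$, which gives
\[
\rhoprobt(\t)^{-1} \leq \gamma_q^{\ell} q^{\sum_{i=1}^{\ell} t_i(\ambdim - s_i)} = \gamma_q^{\ell} q^{t\ambdim - \sum_{i=1}^{\ell} t_i s_i}.
\]
So the task reduces to showing $\sum_{i=1}^{\ell} t_i s_i \geq t s/\ell$, i.e. that the vector $\s = \scomp(\t,s)$ correlates the $s_i$'s positively enough with the $t_i$'s. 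Here is where the structure of $\scomp$ matters: $\scomp$ distributes the surplus $s - t$ greedily onto the coordinates with the \emph{largest} $t_i$ (breaking ties toward smallest $s_i$), so $s_i \geq t_i$ for all $i$ and the increments go preferentially where $t_i$ is big. Intuitively, $\s$ and $\t$ are "similarly sorted", so by a Chebyshev-sum / rearrangement-type inequality $\tfrac{1}{\ell}\sum_i t_i s_i \geq \big(\tfrac{1}{\ell}\sum_i t_i\big)\big(\tfrac{1}{\ell}\sum_i s_i\big) = \tfrac{t}{\ell}\cdot\tfrac{s}{\ell}$, which rearranges to exactly $\sum_i t_i s_i \geq ts/\ell$. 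Substituting back yields $\rhoprobt(\t)^{-1} \leq \gamma_q^{\ell} q^{t\ambdim - ts/\ell} = \gamma_q^{\ell} q^{t(\ambdim - s/\ell)}$ as claimed.

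The main obstacle is making the "similarly sorted" step rigorous: I need that the sequences $(t_i)$ and $(s_i)$ are \emph{similarly ordered} (for all $i,j$, $(t_i - t_j)(s_i - s_j) \geq 0$), since Chebyshev's sum inequality in that generality requires this. This should follow by tracking an invariant through the while-loop of \cref{alg:scomp}: at the start $\s = \t$ (trivially similarly ordered with $\t$), and each step increments an $s_h$ with $h$ chosen among coordinates of maximal $t_h$ — one must check that incrementing such a coordinate cannot create an inversion. The tie-break rule (pick smallest $s_i$ among the maximal-$t_i$ coordinates) is what guarantees this: within a block of equal $t$-values the increments are spread as evenly as possible, so no inversion against coordinates with strictly larger or smaller $t$ is introduced. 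An alternative, perhaps cleaner, route that avoids an explicit induction on the algorithm is to argue directly from the optimality in \cref{lem:scomp_maximizes_rho}: since $\scomp(\t,s)$ maximizes $\rhoprob(\cdot,\t)$, in particular $\rhoprob(\scomp(\t,s),\t) \geq \rhoprob(\s^\star,\t)$ for the "balanced" choice $\s^\star$ that spreads the weight $s$ as evenly as possible subject to $\s^\star \succeq \t$; then bound $\rhoprob(\s^\star,\t)$ from below using Jensen's inequality on $\sum_i t_i s_i^\star$ exactly as in the proof of \cref{thm:bound_number_fixed-weight-words}. Either way the final inequality is the same, and then summing over $\Tset$ with $|\Tset| \leq \binom{\ell+t-1}{\ell-1}$ finishes the proof.
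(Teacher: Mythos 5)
Your proposal is correct, and its key step differs from the paper's. Both arguments reduce, via the lower bound in \cref{lem:success_fixed_s_vector}, to showing $\sum_{i=1}^{\ell} t_i s_i \geq ts/\ell$ for $\s=\scomp(\t,s)$; the paper proves this by relaxing to real-valued, sorted vectors, writing down an explicit closed form for a continuous analogue of $\scomp$, expanding $\sum_i t_i s_i$ and arguing the minimum is attained at $\t=[t/\ell,\dots,t/\ell]$, whereas you obtain it directly from Chebyshev's sum inequality once you know $\t$ and $\scomp(\t,s)$ are similarly ordered. Your route is more elementary and avoids the continuous relaxation and the bookkeeping around $\idxh,\idxf,\srem$; the invariant you need is easy to make rigorous: whenever $s_h$ is incremented, any coordinate $i$ with $t_i>t_h$ has already been removed from $\Jset_1$, so $s_i=\ambdim\geq s_h+1$, and hence $t_i>t_j\Rightarrow s_i\geq s_j$ is preserved throughout \cref{alg:scomp} (also before any increment, since $\s=\t$ initially). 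One small correction: the tie-break rule (smallest $s_i$ among maximal $t_i$) is \emph{not} what guarantees similar ordering---pairs with $t_i=t_j$ contribute nothing to the Chebyshev discrepancy $\sum_{i,j}(t_i-t_j)(s_i-s_j)$, so only the priority on maximal $t_i$ matters---and note that Chebyshev must be invoked in its ``similarly ordered'' form (or via the identity $\ell\sum_i t_is_i-ts=\tfrac12\sum_{i,j}(t_i-t_j)(s_i-s_j)$), since the output of $\scomp$ is not sorted. Your alternative sketch via \cref{lem:scomp_maximizes_rho} and a balanced comparison vector $\s^\star\succeq\t$ also works, but it is not quite ``Jensen exactly as in \cref{thm:bound_number_fixed-weight-words}'': the balanced dominating vector is a water-filling $s_i^\star=\max\{t_i,c\}$, and one still needs a correlation-type inequality to get $\sum_i t_is_i^\star\geq ts/\ell$, so the main Chebyshev route is the cleaner one. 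The concluding step, summing over $\Tset$ with $|\Tset|\leq\binom{\ell+t-1}{\ell-1}$, matches the paper.
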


\begin{proof}
By \eqref{eq:bounds_rho_s_t} in \rev{\cref{lem:success_fixed_s_vector}}, we have
\begin{align*}
\max_{\t \in \Tset} \rhoprobt(\t)^{-1} &\leq \gamma_q^{\ell} \max_{\t \in \Tset} \left\{ q^{\sum_{i=1}^{\ell} t_i(\ambdim-s_i)}  \mid \s = \scomp(\t,s) \right\} \\
&= \gamma_q^{\ell} q^{t \ambdim} q^{-\min_{\t \in \Tset} \left\{ \sum_{i=1}^{\ell} t_is_i \mid \s = \scomp(\t,s) \right\}}
\end{align*}
We claim that the last exponent satisfies:
\begin{align*}
\min_{\t \in \Tset} \left\{ \sum_{i=1}^{\ell} t_is_i \mid \s = \scomp(\t,s) \right\} \geq \frac{ts}{\ell}
\end{align*}
We will prove this by relaxing the variables to reals, and consider only the ordered vectors $\t$,
so define the set:
\begin{align*}
\Tset^{(\mathbb{R},\mathsf{ord})} := \left\{ \t \in \mathbb{R}_{\geq 0}^\ell \, : \, \sum_{i=1}^{\ell} t_i = t, \, t_i \leq \mu, \, t_1 \geq t_2 \geq \dots \geq t_\ell \right\}
\end{align*}
and the mapping
\rev{
\begin{align*}
\scomp^{(\mathbb{R})} \, : \, \Tset^{(\mathbb{R},\mathsf{ord})} &\to \mathbb{R}_{\geq 0}^\ell, \\
\t &\mapsto \big[
\underbrace{\ambdim,\dots,\ambdim}_{\idxh \text{ \rev{times}}}, 
\underbrace{t_{\idxh+1}+\xi+1,\hdots,t_{\idxh+\idxg}+\xi+1}_{\idxg \text{ \rev{times}}},
\underbrace{t_{\idxh+\idxg+1}+\xi+\delta,\hdots,t_{\idxh+\idxf}+\xi+\delta}_{\idxf - \idxg \text{ \rev{times}}}
,t_{\idxh+\idxf+1},\dots,t_\ell \big],
\end{align*}
where
\begin{align*}
\idxh &:= \max \left\{ h' \in\{0,1,\hdots,\lsr\} \, : \, \sum_{i=1}^{h'}(\ambdim-t_i) \leq s-t, t_{h'}>t_{h'+1} \text{ with } t_{0}:=\ambdim,t_{\lsr+1}:=-1 \right\}, \\
\idxf &:= \max\{f'\in\{1,\hdots,\lsr\}\,:\, t_{f'}=t_{\idxh+1} \}-\idxh, \\
\srem &:= s-t-\sum_{i=1}^{h} (\ambdim-t_i), \\ %
\xi & := \left \lfloor \frac{\srem}{f} \right\rfloor, \\
\idxg & := \lfloor \srem \rfloor - \xi f, \\ 
\delta & := \frac{\srem-\lfloor \srem \rfloor}{\idxf - \idxg}. 
\end{align*}}
Note that $\scomp^{(\mathbb{R})}$ agrees with a deterministic variant of\footnote{The outputs are equal if we choose $j \gets \min \big\{j \, : \, s_j = \displaystyle\max_{i \, : \, s_i \neq \ambdim}\{s_i\}\big\}$ instead of a random choice in \cref{line:scomp_choice_position_end} of \cref{alg:scomp}. Note that in what follows here, the choice of $j$ is irrelevant, so we may w.l.o.g.\ assume that $j$ is chosen like this.} $\scomp$ on $\Tset^{(\mathbb{R},\mathsf{ord})} \cap \ZZ^\ell$.
Since $\sum_{i=1}^{\ell} t_is_i|_{\s = \scomp(\t,s)}$ is independent of the ordering of the entries of $\t$ and the set of sorted elements (vectors) of $\Tsett$ are subset of $\Tset^{(\mathbb{R},\mathsf{ord})}$, we have
\begin{align*}
\min_{\t \in \Tset} &\left\{ \sum_{i=1}^{\ell} t_is_i \mid \s = \scomp(\t,s) \right\} \geq \min_{\t \in \Tset^{(\mathbb{R},\mathsf{ord})}} \left\{ \sum_{i=1}^{\ell} t_is_i \mid \s = \scomp^{(\mathbb{R})}(\t,s) \right\}.
\end{align*}
For $\t \in \Tset^{(\mathbb{R},\mathsf{ord})}$ and $\s = \scomp^{(\mathbb{R})}(\t,s)$, we have 
\rev{
\begin{align}
\sum_{i=1}^{\ell} t_is_i 
&= \ambdim \sum_{i=1}^{\idxh} t_i + \sum_{i=\idxh+1}^{\idxh+\idxg}(t_{i} +\xi+1)  t_{i} + \sum_{i=\idxh+\idxg+1}^{\idxh+\idxf}(t_{i} +\xi+\delta)  t_{i} + \sum_{i=\idxh+\idxf+1}^{\ell} t_i^2  \label{eq:sum_tisi_real_firsteq}\\
&= \ambdim \sum_{i=1}^{\idxh} t_i + \idxg(t_{\idxh+1} +\xi+1) t_{\idxh+1} + (f-g)(t_{\idxh+1} +\xi+\delta)  t_{\idxh+1} + \sum_{i=\idxh+\idxf+1}^{\ell} t_i^2 \notag\\
&= \ambdim \sum_{i=1}^{\idxh} t_i + \idxg(\xi+1) t_{\idxh+1} + (\idxf-\idxg)(\xi+\delta)  t_{\idxh+1} + \sum_{i=\idxh+1}^{\ell} t_i^2 \notag \\
&= \ambdim \sum_{i=1}^{\idxh} t_i + (\underbrace{\xi \idxf+g}_{=\lfloor \srem \rfloor}  + \underbrace{\delta (\idxf-\idxg)}_{=\srem-\lfloor \srem \rfloor} )t_{\idxh+1} + \sum_{i=\idxh+1}^{\ell} t_i^2 \notag \\
&= \ambdim \sum_{i=1}^{\idxh} t_i + \srem t_{\idxh+1} + \sum_{i=\idxh+1}^{\ell} t_i^2. \label{eq:sum_tisi_real}
\end{align}
Since $ t_i \leq t_{i+1} +\xi+\delta\leq t_{i+1} +\xi+1 \leq \ambdim$, it follows that \eqref{eq:sum_tisi_real_firsteq} is minimized by a sequence in $\Tset^{(\mathbb{R},\mathsf{ord})}$ with smallest-possible $\idxh$.
Among these sequences with minimal $\idxh$, it is minimized by sequence with largest $\idxf$.
Since $t_i$ are non-increasing, these requirements directly imply that \eqref{eq:sum_tisi_real} is minimized for
\begin{align*}
\t = \big[ \tfrac{t}{\ell}, \dots, \tfrac{t}{\ell} \big],
\end{align*}
for which we have
\begin{align*}
\sum_{i=1}^{\ell} t_is_i = \frac{t}{\ell} \sum_{i=1}^{\ell} s_i = \frac{ts}{\ell}.
\end{align*}}
This proves the first claim.

We get the bound on $\Q$ by
\begin{align*}
\Q = \sum_{\t \in \Tset} \rhoprobt(\t)^{-1} &\leq |\Tsett| \max_{\t \in \Tset} \rhoprobt(\t)^{-1} \leq \tbinom{\ell+t-1}{\ell-1}\gamma_q^\ell q^{t(\ambdim-\frac{s}{\ell})} \qedhere %
\end{align*}
\end{proof}

\subsection{Computing Bounds on the Success Probability Efficiently}

We turn to the question of computing $\Q$, as in \eqref{eq:Q},
exactly.
Below we give a dynamic-programming algorithm that computes this sum efficiently using a recursion formula.
The algorithm is similar to the counting algorithm for the number of vectors of a given sum-rank weight (Algorithm~\ref{alg:number_sum-rank-words}), with a major complication: we have
\begin{align*}
\rhoprobt(\t)^{-1} = \prod_{i=1}^{\ell} \frac{\qbinomial{\ambdim}{t_i}}{\qbinomial{\scomp(\t,s)_i}{t_i}},
\end{align*}
where $\scomp(\t,s)_i$ (the $i$-th entry of the vector $\scomp(\t,s)$) depends on the entire vector $\t$ and not only on $t_i$.
Hence, we cannot easily split the product into a part depending only on $t_1$ and one depending only on $t_2,\dots,t_\ell$.
Note, however, that if $\t$ is ordered in non-decreasing order and $j$ is such that $t_j>t_{j+1}$, then 
\begin{align*}
\prod_{i=1}^{j} \qbinomial{\scomp(\t,s)_i}{t_i}
\end{align*}
depends only on $t_1,\dots,t_j$, and is invariant under the randomness of $\scomp$.
This motivates the following statement, for which we define the following two notions:
\begin{align*}
\Tsettord &:= \left\{ \t \in \Tsett \, : \, t_1 \geq t_2 \geq \dots \geq t_\ell \right\}, \\
\delta_i(\t) &:= \left| \left\{ j \, : \, t_j = i \right\}\right| \quad \forall \, i=0,\dots,\mu, \, \t \in \Tsett.
\end{align*}

\begin{lemma}\label{lem:recursion_sum_rhos}
For all $\ell \geq 1$, $s \leq \ell \ambdim$, and $0 \leq t \leq \min\{s,\ell \mu\}$, we have
\begin{align}
\Q =
\ell! \cdot \Msum(t,\ell,\nmmin,s), \label{eq:sum_rho}
\end{align}
where for any $t', \ell',\mu',s' \in \ZZ_{\geq 0}$, we define
\begin{align*}
&\Msum(t',\ell',\nmmin',s') :=
\begin{cases}
\displaystyle\sum_{\t \in \TsetordWITHARGUMENTS{t',\ell',\mu'}} \left(\prod_{i=0}^{\nmmin'} \delta_i(\t)! \right)^{-1} \prod_{i=1}^{\ell'}\frac{\qbinomial{\ambdim}{t_i}}{\qbinomial{\scomp(\t,s')_i}{t_i}}, &\substack{\ell'\geq 1 \\ 0 \leq t' \leq \min\{s', \,\ell' \mu'\} \\ s' \leq \ell' \ambdim}, \\
1, &\ell'=t'=s'=0, \\
0, &\text{else}.
\end{cases}
\end{align*}
Furthermore, $\Msum(t',\ell',\nmmin',s')$ fulfills the following recursive relation.
\begin{align*}
\Msum(t',\ell',\nmmin',s') = \sum_{t_1 = \lceil \tfrac{t'}{\ell'} \rceil}^{\min\{\mu',t'\}}
\sum_{\delta = \max\{t'-\ell' (t_1-1), 1\} }^{\max\{\delta \, : \, \delta \leq \ell', \, t_1\delta \leq t'\}}
&\left( \tfrac{1}{\delta!}\prod_{i=1}^{\delta} \frac{\qbinomial{\ambdim}{t_1}}{\qbinomial{\scomp([\overbrace{t_1,\dots,t_1}^{\text{$\delta$ times}}],\min\{s'-(t'-\delta t_1),\delta\ambdim\})_i}{t_1}}\right) \\
&\quad \cdot \Msum\Big(t'-\delta t_1,\ell'-\delta,t_1-1,s'-\min\{s'-(t'-\delta t_1),\delta\ambdim\}\Big). %
\end{align*}
\end{lemma}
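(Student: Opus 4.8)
The plan is to prove the two assertions in turn: first the closed form $\Q=\ell!\,\Msum(t,\ell,\nmmin,s)$, and then the displayed recursion for $\Msum$.

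\textbf{The closed form.} The key point is that $\rhoprobt(\t)^{-1}=\prod_{i=1}^{\ell}\qbinomial{\ambdim}{t_i}\,\qbinomial{\scomp(\t,s)_i}{t_i}^{-1}$ and the weight $\prod_i\delta_i(\t)!$ are both symmetric functions of the coordinates of $\t$. For $\rhoprobt$ this is because, by \cref{lem:scomp_maximizes_rho}, $\rhoprobt(\t)=\max_{\s\in\Tsets}\rhoprob(\s,\t)$, and under a coordinate permutation $\pi$ the feasible set $\Tsets$ is invariant while the objective satisfies $\rhoprob(\pi(\s),\pi(\t))=\rhoprob(\s,\t)$, so the maximum is unchanged. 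Each non-increasing $\t\in\Tsettord$ therefore has $\ell!/\prod_{i=0}^{\mu}\delta_i(\t)!$ distinct orderings in $\Tsett$, all contributing the same value of $\rhoprobt(\cdot)^{-1}$, and hence
\begin{align*}
\Q=\sum_{\t\in\Tsett}\rhoprobt(\t)^{-1}
&=\sum_{\t\in\Tsettord}\frac{\ell!}{\prod_{i=0}^{\mu}\delta_i(\t)!}\,\rhoprobt(\t)^{-1}\\
&=\ell!\sum_{\t\in\Tsettord}\Big(\prod_{i=0}^{\nmmin}\delta_i(\t)!\Big)^{-1}\prod_{i=1}^{\ell}\frac{\qbinomial{\ambdim}{t_i}}{\qbinomial{\scomp(\t,s)_i}{t_i}}
=\ell!\,\Msum(t,\ell,\nmmin,s),
\end{align*}
where the last equality is the definition of $\Msum$.

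\textbf{Structure of $\scomp$ on an ordered input.} The technical heart is to understand \cref{alg:scomp} when its first argument is non-increasing. Write such a vector as $\t=[\,\underbrace{t_1,\dots,t_1}_{\delta}\mid\t''\,]$, where $t_1$ is the largest part, $\delta$ its multiplicity, and $\t''$ is non-increasing with all parts $<t_1$ (or empty, if $\delta=\ell$). I claim: the greedy loop first deposits budget exclusively on the leading $\delta$ coordinates -- round-robin, until they are all saturated at $\ambdim$ or the budget is exhausted -- and only afterwards touches $\t''$. This follows by induction on loop iterations from the definitions of $\Jset_1,\Jset_2,\Jset_3$: while any leading coordinate is unsaturated it attains the strictly largest remaining $t$-value and hence lies in $\Jset_2$, within the leading block the least-incremented coordinates form $\Jset_3$, and a saturated coordinate leaves $\Jset_1$ for good. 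Consequently the total $s$-mass deposited on the leading block equals $s_{\mathrm{loc}}:=\min\{s-(t-\delta t_1),\,\delta\ambdim\}$ (it starts at $\delta t_1$ and absorbs $\min\{s-t,\,\delta(\ambdim-t_1)\}$ more), the restriction of $\scomp(\t,s)$ to the leading coordinates agrees, as a multiset of values, with $\scomp([t_1,\dots,t_1],s_{\mathrm{loc}})$, and its restriction to $\t''$ agrees with $\scomp(\t'',\,s-s_{\mathrm{loc}})$ -- the same statement applied to the residual instance, using $s-s_{\mathrm{loc}}=\max\{t-\delta t_1,\,s-\delta\ambdim\}$ for the leftover budget plus the residual sum. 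The degenerate cases ($t_1=\mu$, so the leading block is already at $\ambdim$; $\delta=\ell$, so $\t''$ is empty; the budget running out inside the block, so $s-s_{\mathrm{loc}}$ equals the sum of $\t''$ and the residual $\scomp$ does nothing) are each checked to be consistent with these formulas.

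\textbf{Assembling the recursion.} Given the structural claim, I would derive the recursion for $\Msum(t',\ell',\mu',s')$ by splitting its defining sum over the pair $(t_1,\delta)$ = (value of the largest part, its multiplicity). The base and ``else'' cases are immediate, since $\TsetordWITHARGUMENTS{t',\ell',\mu'}=\emptyset$ unless $t'=\ell'=s'=0$, where it consists only of the empty partition, with weight and product equal to $1$. For fixed admissible $(t_1,\delta)$, the vectors $\t$ with that leading block biject with $\t''\in\TsetordWITHARGUMENTS{t'-\delta t_1,\ \ell'-\delta,\ t_1-1}$; under this bijection the weight factors as $\prod_{i=0}^{\mu'}\delta_i(\t)!=\delta!\prod_{i=0}^{t_1-1}\delta_i(\t'')!$, and by the structural claim the Gaussian-binomial product factors into $\prod_{i=1}^{\delta}\qbinomial{\ambdim}{t_1}\qbinomial{\scomp([t_1,\dots,t_1],s_{\mathrm{loc}})_i}{t_1}^{-1}$ times the product belonging to $\t''$, with $s_{\mathrm{loc}}=\min\{s'-(t'-\delta t_1),\delta\ambdim\}$. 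Summing the $\t''$-factors together with $\prod_{i=0}^{t_1-1}\delta_i(\t'')!^{-1}$ over all $\t''$ reconstitutes $\Msum(t'-\delta t_1,\ \ell'-\delta,\ t_1-1,\ s'-s_{\mathrm{loc}})$; one checks $0\le s'-s_{\mathrm{loc}}$ always, and that the recursive parameters land in the nontrivial range precisely when $(t_1,\delta)$ admits a completion, so that indices outside the stated ranges merely contribute the value $0$ of $\Msum$ and may be dropped. Finally I would verify that the displayed summation limits are exactly the inequality form of ``$t'$ has a non-increasing partition into $\le\ell'$ parts of size $\le\mu'$ with largest part $t_1$ of multiplicity $\delta$'': $t_1\ge\lceil t'/\ell'\rceil$ since a maximal part is at least the mean; $t_1\le\min\{\mu',t'\}$ trivially; $\delta\ge\max\{t'-\ell'(t_1-1),1\}$ since the $\le\ell'-\delta$ remaining parts of size $\le t_1-1$ must still sum to $t'-\delta t_1$; and $\delta\le\max\{\delta:\delta\le\ell',\,t_1\delta\le t'\}$ trivially. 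The step I expect to be the main obstacle is the structural claim about \cref{alg:scomp}: carefully arguing, through the $\Jset_1,\Jset_2,\Jset_3$ selection and with the random tie-breaking present, that the greedy process peels off the block of maximal parts and restricts to it exactly as $\scomp$ would run on the sub-instance, and that all degenerate cases comply; once this is in hand, the symmetrization and the bijective bookkeeping are routine.
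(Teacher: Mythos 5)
Your proposal is correct and follows essentially the same route as the paper: group the sum over $\Tsett$ into sorted vectors via permutation invariance (giving the $\ell!/\prod_i\delta_i(\t)!$ multiplicities), then peel off the leading block of maximal parts $(t_1,\delta)$, using exactly the paper's observation that a valid output of $\scomp(\t,s')$ splits into $\scomp$ run on the constant block with budget $\min\{s'-(t'-\delta t_1),\delta\ambdim\}$ and on the tail with the remainder, together with the same bounds on $t_1$ and $\delta$. Your induction-on-iterations justification of the structural claim about \cref{alg:scomp} is merely a more explicit version of what the paper asserts ``by definition of $\scomp$'', so there is no substantive difference in approach.
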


\begin{proof}
Equation \eqref{eq:sum_rho} holds since, by definition, we have
\begin{align*}
\Q = \sum_{\t \in \Tsett} \rhoprobt(\t)^{-1} = \sum_{\t \in \Tsett}\prod_{i=1}^{\ell}\frac{\qbinomial{\ambdim}{t_i}}{\qbinomial{\scomp(\t,s)_i}{t_i}}.
\end{align*}
Furthermore, the term $\prod_{i=1}^{\ell}\qbinomial{\ambdim}{t_i} \qbinomial{\scomp(\t,s)_i}{t_i}^{-1}$ is invariant under permutations of $\t$, so we can group these summands into those that belong to a unique sorted vector $\t \in \Tsettord$. The number of these summands belonging to the same sorted $\t$ equals the number of permutations of $\t$, which is $\tfrac{\ell!}{\prod_{i=0}^{\nmmin} \delta_i(\t)!}$. This proves \eqref{eq:sum_rho}.

The recursion formula is correct by the following argument.
The restrictions on the choice of $t_1$ and $\delta$ are as follows (which directly yield the limits of the sums):
\begin{itemize}
\item $t_1 \leq \min\{t',\mu'\}$ by definition of $\TsetordWITHARGUMENTS{t',\ell',\mu}$.
\item $t_1 \geq \tfrac{t'}{\ell'}$ since for a given $t_1$, the entire vector $\t$ may only sum up to at most $t_1\ell'$ (since $\delta\leq\ell'$ and $t_i \leq t_1$). On the other hand, the entries of the vector must sum to $t'$, which is impossible for $t_1 \ell' < t'$.
\item $1 \leq \delta \leq \ell'$ since $t_1$ may appear between $1$ and $\ell'$ times.
\item $t_1\delta \leq t'$ since $\t$ sums to $t'$ and thus we must have $t_1\delta \leq t'$.
\item $\delta \geq t'-\ell'(t_1-1)$ since the remaining entries of $\t$ have values $\leq t_1-1$ and must nevertheless sum to $t'$. This is only possible for $(\ell'-\delta)(t_1-1) \geq t'-t_1' \delta$, which is equivalent to $\delta \geq t'-\ell'(t_1-1)$.
\end{itemize}

For fixed $t_1$ and $\delta<\ell'$, a vector $\t \in \TsetordWITHARGUMENTS{t',\ell',\mu}$ whose first $\delta$ positions equal $t_1$ and whose remaining positions are $\leq t_1-1$ can be split into two parts $\t = \big[\t^{(1)} \mid \t^{(2)} \big]$, where
\begin{align*}
\t^{(1)} &:= [t_1,\dots,t_1] \in \ZZ^{\delta}, \\
\t^{(2)} &:= [t_{\delta+1},\dots,t_{\ell'}] \in \{0,\dots,t_1-1\}^{\ell'-\delta}.
\end{align*}
In particular, we have
\begin{align*}
\t^{(2)} \in \TsetordWITHARGUMENTS{t'-t_1\delta, \ell'-\delta, t_1-1}.
\end{align*}
Hence, we can split up the product
\begin{align*}
\prod_{i=0}^{\nmmin'} \delta_i(\t)! = \delta! \cdot \prod_{i=0}^{t_1-1} \delta_i\big(\t^{(2)}\big)!.
\end{align*}
Furthermore, note that by definition of $\scomp$, we have that $\big[\s^{(1)} \mid \s^{(2)} \big]$ is a valid output of $\scomp(\t, s')$,
where
\begin{align*}
\s^{(1)} &= \scomp\big(\t^{(1)}, \min\{s'-(t'-\delta t_1),\delta\ambdim\}\big), \\ %
\s^{(2)} &= \scomp\big(\t^{(2)}, s'-\min\{s'-(t'-\delta t_1),\delta\ambdim\}\big). %
\end{align*}
In particular, $\s^{(1)}$ and $\s^{(2)}$ only depend on $\t^{(1)}$ and $\t^{(2)}$, respectively, and on the parameters $s'$, $t_1$, and $\delta$.
Hence, we can also split the product
\begin{align*}
&\prod_{i=1}^{\ell'}\frac{\qbinomial{\ambdim}{t_i}}{\qbinomial{\scomp(\t,s')_i}{t_i}} = \left(\prod_{i=1}^{\delta}\frac{\qbinomial{\ambdim}{t_1}}{\qbinomial{\scomp(\t^{(1)}, \min\{s'-(t'-\delta t_1),\delta\ambdim\})_i}{t_1}}\right)
\cdot \left( \prod_{i=1}^{\ell'-\delta}\frac{\qbinomial{\ambdim}{t_i^{(2)}}}{\qbinomial{\scomp(\t^{(2)}, s'-\min\{s'-(t'-\delta t_1),\delta\ambdim\})_i}{t_i^{(2)}}}\right)
\end{align*}

For $\delta=\ell'$, we have $\t = [t_1,\dots,t_1]$. Hence, we get
\begin{align*}
&\left(\prod_{i=0}^{\nmmin} \delta_i(\t)! \right)^{-1} \prod_{i=1}^{\ell'}\frac{\qbinomial{\ambdim}{t_i}}{\qbinomial{\scomp(\t,s')_i}{t_i}}
= \begin{cases}
\tfrac{1}{\delta!}\prod_{i=1}^{\delta} \frac{\qbinomial{\ambdim}{t_1}}{\qbinomial{\scomp(\t,s')_i}{t_1}}, &\text{if } t' = \delta t_1 \text{ and } s' \leq \delta \ambdim, \\
0, &\text{else}.
\end{cases}
\end{align*}
By definition of the base case,
\begin{align*}
\Msum\big(t',0,\mu',s'\big) :=
\begin{cases}
1, &\text{if } t'=0 \text{ and } s'=0, \\
0, &\text{else},
\end{cases}
\end{align*}
we get exactly this summand for $\delta=\ell'$.
This proves the recursion.
\end{proof}

\begin{algorithm*}
\caption{Fill table $\{\Msum(t',\ell',\nmmin',s')\}_{t'\leq t, \ell' \leq \ell}^{\nmmin' \leq \nmmin, s' \leq s}$}\label{alg:fill_M_table}
\SetKwInOut{Input}{Input}\SetKwInOut{Output}{Output}
\Input{Integers $t'\leq t,\ell' \leq \ell,\nmmin' \leq \mu,s' \leq s$, global table $\{\Msum(t',\ell',\nmmin',s')\}_{t'\leq t, \ell' \leq \ell}^{\nmmin' \leq \nmmin, s' \leq s}$, global parameters $q$, $\ambdim$}
\Output{$\Msum(t',\ell',\nmmin',s')$}
\If{$\Msum(t',\ell',\nmmin',s')=-1$}{
	\If{$\ell'=t'=s'=0$ \label{line:fill_M_table_start_computation}}{
		$res \gets 1$
	} \Else {
		\If{$\ell' \geq 1$ and $0 \leq t' \leq \min\{s',\ell' \mu'\}$ and $s' \leq \ell' \ambdim$} {
			$res \gets 0$ \\
			\For{$t_1 = \lceil \tfrac{t'}{\ell'} \rceil, \dots, \min\{\mu',t'\}$} {
				\For{$\delta=\max\{t'-\ell' (t_1+1), 1\}, \dots, {\max\{\delta \, : \, \delta \leq \ell', \, t_1\delta \leq t'\}}$} {
					$s^{(1)} \gets \scomp([t_1,\dots,t_1], \min\{s'-(t'-\delta t_1),\delta\ambdim\})$ \label{line:fill_M_table_scomp} \\
					$res \gets res + \Msum\Big(t'-\delta t_1,\ell'-\delta,t_1-1,s'-\min\{s'-(t'-\delta t_1),\delta\ambdim\}\Big) \cdot \delta!^{-1} \cdot \qbinomial{\ambdim}{t_1}^{\delta} \cdot \prod_{i=1}^{\delta} \qbinomial{s^{(1)}_i}{t_1}^{-1}$  \label{line:fill_M_table_recursion}
				}
			}
		} \Else {
			$res \gets 0$
		}
	}
	$\Msum(t',\ell',\nmmin',s') \gets res$  \label{line:fill_M_table_end_computation}\\
}
\Return $\Msum(t',\ell',\nmmin',s')$
\end{algorithm*}

\begin{proposition}\label{prop:sum_compute}
If we initialize a table $\{\Msum(t',\ell',\nmmin',s')\}_{t'\leq t, \ell' \leq \ell}^{\nmmin' \leq \nmmin, s' \leq s}$ with $\Msum(t',\ell',\nmmin',s')=-1$ for all entries and call Algorithm~\ref{alg:fill_M_table} with input $t,\ell,\nmmin,s$, then the algorithm computes the entry $\Msum(t,\ell,\nmmin,s)$ in %
\begin{align*}
O^\sim\!\left(tsn^3\mu \ambdim^2 \log_2(q)\right),
\end{align*}
bit operations.
In particular, we can compute $\Q$ from \eqref{eq:Q}
in this bit complexity.
\end{proposition}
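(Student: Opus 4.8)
The plan is to split the proof into a correctness part, which is essentially immediate from \cref{lem:recursion_sum_rhos}, and a complexity part, where the actual work lies.

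For correctness, I would observe that \cref{alg:fill_M_table} is exactly the top-down memoised evaluation of the recursion in \cref{lem:recursion_sum_rhos}. Three points need to be checked. First, each table entry $\Msum(t',\ell',\mu',s')$ is filled at most once, since the computation body is entered only when the stored value is the sentinel $-1$ and is immediately overwritten by a genuine value, and every subsequent call with the same arguments returns in constant time. Second, the recursion terminates: every recursive call in \cref{line:fill_M_table_recursion} decreases $\ell'$ by $\delta\geq 1$, and the base case $\ell'=t'=s'=0$ returns $1$. Third, the quantity written for each argument tuple coincides with the right-hand side of the recursion in \cref{lem:recursion_sum_rhos}; this is by inspection, once one notes that the $\delta$-loop in the algorithm begins slightly earlier than in the lemma, and that every such extra value of $\delta$ makes the corresponding recursive call satisfy $t'-\delta t_1 > (\ell'-\delta)(t_1-1)$, hence fall into the ``else'' branch of the definition of $\Msum$ and contribute $0$. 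Thus the returned value is $\Msum(t,\ell,\mu,s)$, and $\Q = \ell!\cdot\Msum(t,\ell,\mu,s)$ by \eqref{eq:sum_rho}, so one further multiplication by $\ell!$ (an $O(\ell\log\ell)$-bit integer, negligible) produces $\Q$.

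For the complexity I would multiply three factors. The table has at most $(t+1)(s+1)(\ell+1)(\mu+1)\in O(ts\ell\mu)$ entries, each filled once. Filling one entry runs a double loop over $t_1\leq\mu$ and $\delta\leq\ell$, i.e.\ $O(\mu\ell)$ iterations. One iteration needs: a single evaluation of $\scomp$ on the constant vector $[t_1,\dots,t_1]$ of length $\delta\leq\ell$, which by symmetry just spreads a given weight over $\delta$ coordinates as evenly as possible and costs $O(\ell)$ operations on $O(\log\ambdim)$-bit integers; the computation of $O(\ell)$ Gaussian binomials $\qbinomial{a}{b}$ with $a\leq\ambdim$, $b\leq\mu$, each an integer of bit size $O(\ambdim^2\log_2 q)$ obtained from $O(\mu)$ multiplications and exact divisions, hence $O^\sim(\mu\ambdim^2\log_2 q)$ bit operations apiece; and the products and accumulation in \cref{line:fill_M_table_recursion}. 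With quasi-linear integer arithmetic \cite{harvey2019faster}, collecting all factors and absorbing $\ell,\mu,\ambdim\leq n$ into powers of $n$ where convenient yields the claimed $O^\sim(tsn^3\mu\ambdim^2\log_2 q)$, and in particular $\Q$ is obtained in this bit complexity.

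The step I expect to be the genuine obstacle is bounding the bit sizes of the objects being manipulated, because $\Msum(t',\ell',\mu',s')$ and the partial products are rational rather than integral, so a naive size estimate is not obviously polynomial. The plan is to write each such rational as $p/r$, where the denominator $r$ divides a fixed integer $D$ that is a product of at most $\ell$ Gaussian binomials $\qbinomial{a}{b}$ with $a\leq\ambdim$ and $b\leq\mu$, so $\log_2 D\in O(\ell\ambdim^2\log_2 q)$; the numerator then satisfies $|p|\leq D\cdot\max\Msum$, and $\max\Msum$ over all occurring sub-parameters is bounded, by the argument of \cref{prop:simplified_bound_lemma}, by $\tbinom{\ell+t-1}{\ell-1}\gamma_q^\ell q^{t\ambdim}$. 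Hence every integer occurring in the computation has bit size $O^\sim(\ell\ambdim^2\log_2 q + t\ambdim\log_2 q)$, which is exactly what makes each arithmetic step cheap enough for the final bound. Verifying these bit-size bounds with the required care, and checking that the exponents of $n$, $\mu$ and $\ambdim$ coming out of combining all the factors do not exceed those in the statement, is the most delicate and tedious part of the argument.
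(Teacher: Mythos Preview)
Your approach is the paper's: correctness from \cref{lem:recursion_sum_rhos} plus memoisation, and complexity as (number of table entries) $\times$ (inner-loop iterations per entry) $\times$ (bit cost per iteration, governed by the size of the rationals). The one genuine gap is your bound on the common denominator $D$. You claim $D$ is a product of at most $\ell$ Gaussian binomials, hence $\log_2 D \in O(\ell\ambdim^2\log_2 q)$. This is true for a \emph{single} summand in the defining sum for $\Msum$, but not for the accumulated value $res$: different pairs $(t_1,\delta)$, and different $\t$ in the unrolled sum, contribute \emph{different} binomials $\qbinomial{s_i}{t_i}$ to their denominators, and once these fractions are added the common denominator can involve up to $O(\mu\ambdim)$ distinct binomials, each with multiplicity up to $\ell$ (since the same pair $(s_i,t_i)$ may repeat across the $\ell$ blocks). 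For instance, already with $\ell=2$ one finds ordered $\t$'s whose $\scomp$-images give disjoint binomial factors, so the lcm of the two denominators has four binomial factors, not two.

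The paper resolves this by fixing once and for all the universal denominator
\[
\iota \;=\; \ell!\left(\prod_{t'=0}^{\mu}\prod_{s'=t'}^{\ambdim}\qbinomial{s'}{t'}\right)^{\!\ell},
\]
of bit size $O(\ell\mu^2\ambdim^2\log_2 q)$, and carries out all arithmetic in $\iota^{-1}\ZZ$. Substituting this corrected bit size for your $O(\ell\ambdim^2\log_2 q)$ and using $\ell\mu\leq n$ in your own product of factors still yields $O^\sim(tsn^3\mu\ambdim^2\log_2 q)$, so your overall structure survives; but the stated denominator bound does not, and without this correction the cost of the accumulation in \cref{line:fill_M_table_recursion} is not controlled.
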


\begin{proof}
The correctness of the algorithm follows from Lemma~\ref{lem:recursion_sum_rhos}.
For the complexity, we observe the following:
Lines~\ref{line:fill_M_table_start_computation}--\ref{line:fill_M_table_end_computation} of the algorithm are only once called for each table index $[t',\ell',\mu',s']$. The number of table entries, and thus the calls of these expensive lines, is in $O(t s \mu \ell) \subseteq O(tsn)$.
It is negligible compared to the entire recursive call of the algorithm to pre-compute the products $\delta!^{-1} \cdot \qbinomial{\ambdim}{t_1}^\delta \cdot \prod_{i=1}^{\delta} \qbinomial{s^{(1)}([t_1,\dots,t_1], s')_i}{t_1}^{-1}$ for all $0 \leq \delta\leq \ell$, $0 \leq t_1 \leq \min\{t,\nmmin\}$, and $t_1\delta \leq s' \leq \ambdim \delta$.

The bottleneck of the algorithm is Line~\ref{line:fill_M_table_recursion}, where we multiply two rational numbers and add the result to another rational number. All these rational numbers are in $\iota^{-1} \ZZ$, where
\begin{align*}
\iota &= \ell! \left(\prod_{t'=0}^{\nmmin} \prod_{s'=t'}^{\ambdim} \qbinomial{s'}{t'}\right)^\ell
\leq \ell! 4^{\ell \nmmin \ambdim} q^{\sum_{t'=0}^{\nmmin} \sum_{s'=t'}^{\ambdim} t'(s'-t')}
\leq 2^{\ell \log_2(\ell) + 2 \ell \nmmin \ambdim + \ell \nmmin^2 \ambdim^2\log_2(q)}.
\end{align*}
Hence, we can implement all operations in $\iota^{-1} \ZZ$, and operations have a quasi-linear cost \cite{harvey2019faster} in the size of the numerators plus the size of $\iota$.
Furthermore, the numerator is \rev{upper} bounded by $\iota \Q$, %
which is again upper bounded by the bound in \cref{prop:simplified_bound_lemma}.
Thus, Line~\ref{line:fill_M_table_recursion} costs
\begin{align*}
&O^\sim\!\left( t(\ambdim-\tfrac{s}{\ell}) \log_2(q)+(\ell-1)\log_2(t+\ell-1)  + \ell \mu^2 \ambdim^2 \log_2(q)\right)
\subseteq O^\sim(n \mu \ambdim^2 \log_2(q))
\end{align*}
bit operations. %

Since Line~\ref{line:fill_M_table_recursion} is called $O(\ell \mu) \subseteq O(n)$ times for each table entry, the overall bit complexity of the entire recursion is
\begin{align*}
O^\sim\!\left(tsn^3\mu \ambdim^2 \log_2(q)\right),
\end{align*}
which proves the claim.
\end{proof}

\subsection{Efficiently Drawing Decomposition Vectors}\label{ssec:efficient_drawing}

With the help of Algorithm~\ref{alg:fill_M_table} and a bit of extra work, we can draw efficiently from the distribution $p_{\t}$ as in Algorithm~\ref{alg:draw_random_support}.
The idea of the method (see Algorithm~\ref{alg:draw_efficiently_from_tprime_distribution} below) is based on enumerative encoding \cite{cover1973enumerative}.
To formalize the idea, we need the following notation.
We denote by $\t \leq \t'$ for $\t,\t' \in \ZZ^{\ell}$ the lexicographical (total) ordering on $\ZZ^{\ell}$.
For $\t \in \ZZ^{\ell}$, $\t' \in \ZZ^{\ell'}$, and $i \leq \min\{\ell,\ell'\}$, we define the preorder $\t \leq_i \t'$ as
\begin{align*}
[t_1,\dots,t_i] \leq [t_1',\dots,t_i'].
\end{align*}
Further, we write $\t =_i \t'$ if $\t \leq_i \t'$ and $\t' \leq_i \t$, as well as $\t <_i \t'$ if $\t \leq_i \t'$, but not $\t =_i \t'$.
The following lemma shows how to compute the sum
\begin{align*}
\sum_{\substack{\tilde{\t} \in \Tsettord \\ \tilde{\t} =_{\ell'} \t}} \tfrac{\ell!}{\prod_{i=0}^{\mu} \delta_i(\tilde{\t})!} \rhoprobt(\tilde{\t})^{-1}
\end{align*}
efficiently, where we only sum over those vectors $\tilde{\t} \in \Tsettord$ who have a given prefix $\t$ of length $\ell'$.
This is a key ingredient for the enumerative-coding-based drawing method presented below.

\begin{lemma}\label{lem:sum_with_prefix}
Let $1\leq \ell' \leq \ell$ and $\t \in \TsetWITHARGUMENTS{t,\ell',\nmmin}^{(\mathsf{ord})}$.
Denote by $t_{\ell'}$ the $\ell'$-th entry of $\t$ and by $1 \leq \delta \leq \ell'$ the number of times $t_{\ell'}$ occurs in $\t$.
Thus, we can split $\t$ into
\begin{align*}
\t := \left[ \t^{(1)}, \t^{(2)} \right],
\end{align*}
where $\t^{(1)} \in \{t_{\ell'}+1,\dots, \nmmin\}^{\ell'-\delta}$ and $\t^{(2)} = \left[t_{\ell'},\dots,t_{\ell'}\right] \in \ZZ^{\delta}$.
Write $t^{(1)} := \sum_{i=1}^{\ell'-\delta} t^{(1)}_i$ and $s^{(1)} := \min\{s-t+t^{(1)},(\ell'-\delta)\ambdim\}$.
Then,
\begin{align*}
\sum_{\substack{\tilde{\t} \in \Tsettord \\ \tilde{\t} =_{\ell'} \t}} \tfrac{\ell!}{\prod_{i=0}^{\mu} \delta_i(\tilde{\t})!} \rhoprobt(\tilde{\t})^{-1}
&= \tfrac{\ell!}{\prod_{i=t_{\ell'}+1}^{\mu} \delta_i\left(\t^{(1)}\right)!} \left(\prod_{j=1}^{\ell'-\delta}\frac{\qbinomial{\ambdim}{t_i^{(1)}}}{\qbinomial{\scomp(\t^{(1)}, s^{(1)})_i}{t_i^{(1)}}}\right) \\
&\cdot 
\sum_{\delta' = \max\{t-t^{(1)}-(t_{\ell'}-1)(\ell-\ell'+\delta), \delta \} }^{\max\{\delta' \, : \, \delta' \leq \ell-\ell'+\delta, \, t_{\ell'}\delta' \leq t-t^{(1)}\}}
\left( \tfrac{1}{\delta'!}\prod_{i=1}^{\delta'} \frac{\qbinomial{\ambdim}{t_{\ell'}}}{\qbinomial{\scomp([\overbrace{t_{\ell'},\dots,t_{\ell'}}^{\text{$\delta'$ times}}],s^{(2)}(\delta'))_i}{t_{\ell'}}}\right) \\
&\cdot \Msum\Big(t-\delta' t_{\ell'}-t^{(1)},\ell-(\ell'-\delta+\delta'),t_{\ell'}-1,s-s^{(1)}-s^{(2)}(\delta')\Big),
\end{align*}
where $s^{(2)}(\delta') := \min\{s-s^{(1)}-(t-\delta' t_{\ell'}-t^{(1)}), \delta'\ambdim\}$ and $\Msum(t',\ell',\nmmin',s')\}$ is defined as in Lemma~\ref{lem:recursion_sum_rhos}.

In particular, if the table $\{\Msum(t',\ell',\nmmin',s')\}_{t'\leq t, \ell' \leq \ell}^{\nmmin' \leq \nmmin, s' \leq s}$ is pre-computed, we can compute $\sum_{\substack{\tilde{\t} \in \Tsettord \\ \tilde{\t} =_{\ell'} \t}} \tfrac{\ell!}{\prod_{i=0}^{\mu} \delta_i(\tilde{\t})!} \rhoprobt(\tilde{\t})^{-1}$
in
\begin{align*}
O^\sim(n^2 \ambdim^2 \log_2(q))
\end{align*}
bit operations.
\end{lemma}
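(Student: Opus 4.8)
The claim is essentially a structural decomposition of the sum $\sum_{\tilde{\t} \in \Tsettord,\ \tilde{\t} =_{\ell'} \t} \tfrac{\ell!}{\prod_i \delta_i(\tilde{\t})!} \rhoprobt(\tilde{\t})^{-1}$ into a fixed ``prefix'' factor times a sum of ``suffix completions'' expressed via the already-computed table $\Msum$. I would prove it in the same spirit as \cref{lem:recursion_sum_rhos}, by carefully tracking how a completion $\tilde{\t}$ of the prefix $\t$ splits, and how both the multinomial factor $\tfrac{\ell!}{\prod_i \delta_i(\tilde{\t})!}$ and the product $\prod_i \qbinomial{\ambdim}{\tilde t_i}\qbinomial{\scomp(\tilde\t,s)_i}{\tilde t_i}^{-1}$ factor across the split. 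The first step is to set up the decomposition of an arbitrary $\tilde{\t} \in \Tsettord$ with $\tilde{\t} =_{\ell'} \t$: its first $\ell'-\delta$ entries agree with $\t^{(1)}$ (values $> t_{\ell'}$), its next $\delta$ entries equal $t_{\ell'}$ (matching $\t^{(2)}$), and then there is a tail $\tilde{\t}^{(3)} \in \{0,\dots,t_{\ell'}\}^{\ell-\ell'}$ which is free except for being non-increasing and summing to the remaining budget $t - t^{(1)} - \delta t_{\ell'}$. The value $t_{\ell'}$ may recur in $\tilde{\t}^{(3)}$, so I introduce $\delta'$ as the \emph{total} multiplicity of $t_{\ell'}$ in all of $\tilde{\t}$ beyond the prefix part $\t^{(1)}$; this matches the summation range on $\delta'$ in the statement (it runs from $\max\{\delta, t - t^{(1)} - (t_{\ell'}-1)(\ell - \ell' + \delta)\}$ to $\max\{\delta' : \delta' \le \ell - \ell' + \delta,\ t_{\ell'}\delta' \le t - t^{(1)}\}$, exactly the constraints forcing a valid non-increasing completion).

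\textbf{Factoring the two products.} Next I would invoke the key invariance property already used in \cref{lem:recursion_sum_rhos}: when $\tilde{\t}$ is non-increasing and $j$ is chosen at a ``drop'' ($\tilde t_j > \tilde t_{j+1}$), the partial product $\prod_{i=1}^{j}\qbinomial{\scomp(\tilde\t,s)_i}{\tilde t_i}$ depends only on $\tilde t_1,\dots,\tilde t_j$ and is invariant under the internal randomness of $\scomp$, because $\scomp$ fills the first-coordinate ``ambient'' slots greedily among the largest $\tilde t_i$. Applying this at the drop after position $\ell'-\delta$ and again after the block of $t_{\ell'}$'s lets me split $\scomp(\tilde\t,s)$ into $\scomp(\t^{(1)}, s^{(1)})$ on the first block — where $s^{(1)} = \min\{s - t + t^{(1)}, (\ell'-\delta)\ambdim\}$ is the budget $\scomp$ allocates to that block — then $\scomp([t_{\ell'},\dots,t_{\ell'}], s^{(2)}(\delta'))$ on the middle block of length $\delta'$, and then whatever $\scomp$ does on the tail, which by the recursion in \cref{lem:recursion_sum_rhos} is exactly bundled into $\Msum(t - \delta' t_{\ell'} - t^{(1)},\ \ell - (\ell'-\delta+\delta'),\ t_{\ell'}-1,\ s - s^{(1)} - s^{(2)}(\delta'))$. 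Simultaneously, the multinomial factor splits as $\tfrac{\ell!}{\prod_{i=t_{\ell'}+1}^{\mu}\delta_i(\t^{(1)})!} \cdot \tfrac{1}{\delta'!} \cdot \tfrac{1}{\prod_{i=0}^{t_{\ell'}-1}\delta_i(\tilde{\t}^{(3)})!}$, where again the last factor gets absorbed into the definition of $\Msum$ (which carries exactly such a $(\prod_i \delta_i)^{-1}$ weight). Matching these term-by-term against the stated right-hand side finishes the identity; the $\prod_{j=1}^{\ell'-\delta}\qbinomial{\ambdim}{t_i^{(1)}}\qbinomial{\scomp(\t^{(1)},s^{(1)})_i}{t_i^{(1)}}^{-1}$ factor pulls out front because it is constant over all completions $\tilde{\t}$.

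\textbf{Complexity claim.} For the $O^\sim(n^2 \ambdim^2 \log_2 q)$ bound, I would argue: given the precomputed $\Msum$ table, the displayed expression is a single outer sum over $\delta'$ with at most $\ell \le n$ terms; each term requires one table lookup (free), one evaluation of $\scomp$ on a constant vector (which runs in $O(\ell\,\polylog)$ steps), a product of at most $\delta' \le \ell$ Gaussian-binomial ratios, and some factorials. Each Gaussian binomial $\qbinomial{\ambdim}{t_1}$ is an integer bounded by $\gamma_q q^{(\ambdim - t_1)t_1} \le \gamma_q q^{\ambdim^2/4}$, so it has $O(\ambdim^2 \log_2 q)$ bits; multiplying $O(\ell)$ of them, and combining with the fixed prefix factor, yields numerators/denominators of bit size $O(\ell \ambdim^2 \log_2 q) \subseteq O(n \ambdim^2 \log_2 q)$ — note we work over the common denominator $\iota^{-1}\ZZ$ as in the proof of \cref{prop:sum_compute}, whose bit size is also within this bound up to the $\Msum$-table contribution which is already accounted for. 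Using quasi-linear integer arithmetic \cite{harvey2019faster}, each of the $O(n)$ terms costs $O^\sim(n\ambdim^2 \log_2 q)$ bit operations, for a total of $O^\sim(n^2 \ambdim^2 \log_2 q)$. The main obstacle in the whole argument is purely bookkeeping: getting the exact ranges on $\delta'$ and the precise definitions of $s^{(1)}$, $s^{(2)}(\delta')$, and the shifted $\Msum$ arguments to line up with what $\scomp$ actually does — i.e., verifying that the greedy budget allocation of $\scomp$ across the three blocks matches the nested $\min\{\cdot,\cdot\}$ expressions in the statement. This requires a small lemma (or an explicit appeal to the structure of \cref{alg:scomp}) stating that $\scomp(\t,s)$ restricted to a non-increasing prefix ending at a drop equals $\scomp$ of that prefix with the ``locally available'' budget, which is the same fact underpinning \cref{lem:recursion_sum_rhos}.
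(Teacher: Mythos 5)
Your proposal is correct and follows essentially the same route as the paper's proof: you split the sum over completions $\tilde{\t}$ according to the total multiplicity $\delta'$ of $t_{\ell'}$, factor both the multinomial weight and the product $\prod_j\qbinomial{\ambdim}{\tilde t_j}\qbinomial{\scomp(\tilde\t,s)_j}{\tilde t_j}^{-1}$ across the three blocks (prefix, $t_{\ell'}$-block, tail) using the fact that the stated budget split $s^{(1)},s^{(2)}(\delta')$ makes the concatenation a valid output of $\scomp(\tilde\t,s)$, absorb the tail into $\Msum$, and bound the cost via $O(\ell)$ operations on rationals in $\iota^{-1}\ZZ$ with quasi-linear integer arithmetic, exactly as in \cref{lem:recursion_sum_rhos} and \cref{prop:sum_compute}. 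No gaps to report.
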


\begin{proof}
The statement follows by the same arguments as the recursive formula for $\Msum(\cdot,\cdot,\cdot,\cdot)$ in \cref{lem:recursion_sum_rhos}.
The only difference is that we split the sum (only) into those subsets of $\{\tilde{\t} \in \Tsettord \, : \, \tilde{\t} =_{\ell'} \t\}$ in which the value $t_{\ell'}$ occurs exactly the same number of times $\delta'$. %
Since we know that $t_{\ell'}$ is contained $\delta$ times in the last positions of the prefix vector, it must occur $\delta'\geq \delta$ times in $\tilde{\t}$.
Furthermore, $\delta'$ must be chosen large enough such that
\begin{equation*}
t-t^{(1)} \leq t_{\ell'}\delta'+(t_{\ell'}-1)(\ell-(\ell'-\delta+\delta')),
\end{equation*}
which gives the other lower bound (and sum limit) on $\delta'$.
On the other hand, we must have $\delta' \leq \ell-\ell'+\delta$ since the length of $\tilde{\t}$ is $\ell$ and the length of $\t^{(1)}$ is $\ell'-\delta$.
After subtracting the sum of the entries $>t_{\ell'}$ of the prefix vector, the remaining part of the vector $\tilde{\t}$ can only sum up to $t-t^{(1)}$. In particular, we must have $t_{\ell'} \delta' \leq t-t^{(1)}$. This gives the upper bound (and sum limit) on $\delta'$.

The formula follows by splitting the product (w.r.t.~$j$) in
\begin{align*}
\tfrac{\ell!}{\prod_{i=0}^{\mu} \delta_i(\tilde{\t})!} \rhoprobt(\tilde{\t})^{-1} = \tfrac{\ell!}{\prod_{i=0}^{\mu} \delta_i(\tilde{\t})!} \prod_{j=1}^{\ell} \frac{\qbinomial{\ambdim}{\tilde{t}_j}}{\qbinomial{\scomp(\tilde{\t},s)_j}{\tilde{t}_j}}
\end{align*}
into the following subsets of positions $j$:
\begin{itemize}
\item the positions of the prefix vector with values $t_j > t_{\ell'}$,
\item the part of $\tilde{\t}$ in which $t_{\ell'}$ is repeated $\delta'$ times, and
\item the remaining part of $\tilde{\t}$.
\end{itemize}
The sum over the latter part is given by $\Msum\Big(t-\delta' t_{\ell'}-t^{(1)},\ell-\ell'+\delta'-\delta,t_{\ell'}-1,s-s^{(1)}-s^{(2)}(\delta')\big\}\Big)$ since this part of $\tilde{\t}$ must sum up to $t-\delta' t_{\ell'}-t^{(1)}$, it is a vector of length $\ell-(\ell'-\delta+\delta')$, we have $\tilde{t}_i < t_{\ell'}$ for these entries of $\tilde{\t}$.
The choices of $s^{(1)}$, $s^{(2)}(\delta')$, and $s-s^{(1)}-s^{(2)}(\delta')$ are to ensure that
\begin{align*}
&\Big[\scomp(\t^{(1)}, s^{(1)}) \mid \scomp([\overbrace{t_{\ell'},\dots,t_{\ell'}}^{\text{$\delta'$ times}}],s^{(2)}(\delta')) \mid
\scomp([\tilde{t}_{\ell'-\delta+\delta'+1},\dots,\tilde{t}_{\ell}], s-s^{(1)}-s^{(2)}(\delta') \Big]
\end{align*}
is a valid output of $\scomp(\tilde{\t},s)$ (i.e., independent of $\scomp$'s randomness, we can split the product $\prod_{i=1}^{\ell}\qbinomial{\scomp(\tilde{\t},s)_i}{t_i}$ into the given three parts).

Complexity-wise, the bottleneck are at most $\ell$ multiplications and additions of rational numbers in $\iota^{-1} \ZZ$, where $\iota$ is the same as in the proof of \cref{prop:sum_compute}. Also the numerators of all involved rational numbers are bounded as in \cref{prop:sum_compute}.
Hence, computing $\sum_{\tilde{\t} \in \Tsettord, \tilde{\t} =_{\ell'} \t} \tfrac{\ell!}{\prod_{i=0}^{\mu} \delta_i(\tilde{\t})!} \rhoprobt(\tilde{\t})^{-1}$
costs $O^\sim(\ell n \mu \ambdim^2 \log_2(q)) \subseteq O^\sim(n^2 \ambdim^2 \log_2(q))$ bit operations.
\end{proof}

\begin{algorithm}
\caption{Draw Efficiently from Distribution $p_{\t}$ as in Algorithm~\ref{alg:draw_random_support}}\label{alg:draw_efficiently_from_tprime_distribution}
\SetKwInOut{Input}{Input}\SetKwInOut{Output}{Output}
\Input{Parameters $q,\ambdim,t,s,\ell,\mu$, precomputed table $\{\Msum(t',\ell',\nmmin',s')\}_{t'\leq t, \ell' \leq \ell}^{\nmmin' \leq \nmmin, s' \leq s}$}
\Output{$\t \in \Tsett$, drawn at random from the distribution ($\Q$ as in \eqref{eq:Q})
\begin{align*}
p_{\t} = \rhoprobt(\t)^{-1} \Q^{-1} \quad \forall \, \t' \in \Tsett.
\end{align*}}
$\iota \gets \ell! \left(\prod_{t'=0}^{\nmmin} \prod_{s'=t'}^{\ambdim} \qbinomial{s'}{t'}\right)^\ell$ \label{line:draw_efficiently_from_tprime_distribution_draw_uniform_x_from_interval} \\
$x \gets$ uniformly at random from the set of non-negative integers $< \iota \sum_{\tilde{\t} \in \Tset} \rhoprobt(\tilde{\t})^{-1}$  \\
$x \gets x/\iota$ \\
\For{$i=1,\dots,\ell$} {
	$t_i \gets \max\left\{ t'' \, : \, \displaystyle \sum_{t'=0}^{t''-1} \sum_{\substack{\tilde{\t} \in \Tsettord \\ \tilde{\t} =_{i} [t_1,\dots,t_{i-1},t']}} \tfrac{\ell!}{\prod_{i=0}^{\mu} \delta_i(\tilde{\t})!} \rhoprobt(\tilde{\t})^{-1} \leq x \right\}$ \\
	$x \gets x - \displaystyle \sum_{t'=0}^{t_{i-1}} \sum_{\substack{\tilde{\t} \in \Tsettord \\ \tilde{\t} =_{i} [t_1,\dots,t_{i}]}}  \tfrac{\ell!}{\prod_{i=0}^{\mu} \delta_i(\tilde{\t})!}  \rhoprobt(\tilde{\t})^{-1}$
}
$\t \gets \left[t_1,\dots,t_\ell\right]$ \label{line:draw_efficiently_from_tprime_distribution_sorted_vector_t_obtained} \\
$\pi \gets $ permutation drawn uniformly from the permutations of a multiset with set multiplicities $\delta_0(\t), \delta_1(\t), \dots, \delta_\mu(\t)$  \label{line:draw_efficiently_from_tprime_distribution_choose_random_permutation} \\
\Return $\pi(\t)$ \label{line:draw_efficiently_from_tprime_distribution_return_permuted_vector}
\end{algorithm}

\begin{proposition}\label{prop:drawing_efficiently_from_tprime_distribution}
Algorithm~\ref{alg:draw_efficiently_from_tprime_distribution} is correct and has complexity %
\begin{align*}
O^\sim(n^3 \ambdim^2 \log_2(q))
\end{align*}
bit operations. In particular, Line~\ref{line:draw_t_vector} of Algorithm~\ref{alg:draw_random_support} can be implemented with this complexity.
\end{proposition}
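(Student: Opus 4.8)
The plan is to regard \cref{alg:draw_efficiently_from_tprime_distribution} as an enumerative-coding procedure (cf.\ \cite{cover1973enumerative}) and to check separately that the integer grid on which it operates is fine enough and that the nested sums it evaluates genuinely tile the interval $[0,\Q)$; the running-time claim then reduces to counting calls to \cref{lem:sum_with_prefix}. As preparation I would record two facts. First, by \eqref{eq:Q} and \cref{lem:recursion_sum_rhos}, $\Q=\sum_{\tilde\t\in\Tsett}\rhoprobt(\tilde\t)^{-1}=\sum_{\tilde\t\in\Tsettord}\tfrac{\ell!}{\prod_j\delta_j(\tilde\t)!}\rhoprobt(\tilde\t)^{-1}=\ell!\,\Msum(t,\ell,\mu,s)$ (group $\Tsett$ by sorted representative), so $\Q$ is available from the precomputed table, and $\rhoprobt$ is permutation-invariant. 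Second, $\iota\,\rhoprobt(\t)^{-1}\in\ZZ_{\geq0}$ for every $\t\in\Tsett$: writing $\rhoprobt(\t)^{-1}=\prod_{i=1}^{\ell}\qbinomial{\ambdim}{t_i}\qbinomial{\scomp(\t,s)_i}{t_i}^{-1}$ with $0\le t_i\le\min\{\mu,\scomp(\t,s)_i\}$ and $\scomp(\t,s)_i\le\ambdim$, each denominator is one of the factors of $\prod_{t'=0}^{\mu}\prod_{s'=t'}^{\ambdim}\qbinomial{s'}{t'}$, so $\prod_{i=1}^{\ell}\qbinomial{\scomp(\t,s)_i}{t_i}$ divides $\big(\prod_{t',s'}\qbinomial{s'}{t'}\big)^{\ell}$, which divides $\iota$. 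Since each $\tfrac{\ell!}{\prod_j\delta_j(\tilde\t)!}$ is also a positive integer, every partial sum occurring in the loop lies in $\tfrac1\iota\ZZ$, and after \cref{line:draw_efficiently_from_tprime_distribution_draw_uniform_x_from_interval} and the rescaling $x\gets x/\iota$ the value $x$ is uniform on $\tfrac1\iota\ZZ\cap[0,\Q)$, a grid refining every interval endpoint the loop uses.

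For correctness I would prove by induction on $i$ that the prefix $[t_1,\dots,t_i]$ produced is non-increasing — a larger value at position $i$ is excluded because the corresponding sum over $\Tsettord$ is empty, while $\sum_{t'=0}^{\mu}\sum_{\tilde\t=_i[t_1,\dots,t_{i-1},t']}(\cdot)=\sum_{\tilde\t=_{i-1}[t_1,\dots,t_{i-1}]}(\cdot)>x$ — and that after iteration $i$ the residual $x$ lies in $\big[\,0,\ \sum_{\tilde\t\in\Tsettord,\,\tilde\t=_i[t_1,\dots,t_i]}\tfrac{\ell!}{\prod_j\delta_j(\tilde\t)!}\rhoprobt(\tilde\t)^{-1}\,\big)$; the sums here are precisely the quantities computed by \cref{lem:sum_with_prefix} on the (valid, non-increasing) prefix $[t_1,\dots,t_i]$. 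The induction starts from $[0,\Q)$ by the identity above, and at $i=\ell$ the interval collapses onto the single sorted vector $\t$, so the loop returns a fixed sorted $\t\in\Tsettord$ with probability $\tfrac{\ell!}{\prod_j\delta_j(\t)!}\rhoprobt(\t)^{-1}\Q^{-1}$. Because \cref{line:draw_efficiently_from_tprime_distribution_choose_random_permutation} applies a uniformly chosen one of the $\tfrac{\ell!}{\prod_j\delta_j(\t)!}$ distinct rearrangements of the multiset $\{t_1,\dots,t_\ell\}$, and each $\t''\in\Tsett$ with $\mathrm{sort}(\t'')=\t$ is the image of exactly one of them, the returned vector equals $\t''$ with probability $\rhoprobt(\t)^{-1}\Q^{-1}=\rhoprobt(\t'')^{-1}\Q^{-1}=p_{\t''}$, which is the desired distribution.

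For the complexity, $\iota$ and $\Q=\ell!\,\Msum(t,\ell,\mu,s)$ are obtained from the precomputed table, and by the bound on $\log_2\iota$ in the proof of \cref{prop:sum_compute} together with \cref{prop:simplified_bound_lemma} one gets $\log_2(\iota\Q)\in O(n\mu\ambdim^2\log_2 q)$, so forming $\iota\Q$, drawing $x$ and rescaling cost $O^\sim(n\mu\ambdim^2\log_2 q)$, which is dominated. In the loop, $t_i$ is found by accumulating the prefix sums of \cref{lem:sum_with_prefix} over $t'=0,1,\dots$ (stopping once the running total exceeds $x$, the summand being $0$ as soon as $t'>t_{i-1}\le\mu$), so each of the $\ell$ iterations triggers $O(\mu)$ evaluations of \cref{lem:sum_with_prefix}, hence $O(\ell\mu)\subseteq O(n)$ in all, each at $O^\sim(n^2\ambdim^2\log_2 q)$ bit operations; this yields the claimed $O^\sim(n^3\ambdim^2\log_2 q)$, dominating the remaining $O^\sim(n)$ work of forming $\t$ and sampling/applying $\pi$. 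I expect the main obstacle to be the correctness half — specifically, nailing the integrality/grid-refinement claim and verifying that the loop's nested sums really are the \cref{lem:sum_with_prefix} quantities on legitimate non-increasing prefixes, so that they tile $[0,\Q)$ exactly; once this and the $O(\ell\mu)$ bound on \cref{lem:sum_with_prefix} calls are in place, the time estimate is routine.
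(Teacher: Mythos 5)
Your proof is correct and takes essentially the same route as the paper's: it reduces to drawing a sorted vector from $\Tsettord$ by enumerative coding on the interval $[0,\Q)$, uses the fact that all interval endpoints lie in $\iota^{-1}\ZZ$ so the uniformly drawn grid point lands in each subinterval with exactly the right probability, determines the sorted vector prefix-by-prefix via the sums of \cref{lem:sum_with_prefix}, applies a uniform permutation at the end, and counts $O(\ell\mu)\subseteq O(n)$ calls to \cref{lem:sum_with_prefix} at $O^\sim(n^2\ambdim^2\log_2 q)$ each. Your explicit integrality check ($\iota\,\rhoprobt(\t)^{-1}\in\ZZ$) and the induction showing the prefixes stay non-increasing merely spell out details the paper's proof leaves implicit.
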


\begin{proof}
Since $p_{\t'} = p_{\t''}$ for two vectors $\t'$ and $\t''$ that are permutationally equivalent, we can simply draw a sorted vector from $\Tsettord$ using the probability mass function
\begin{align*}
\tilde{p}_{\t'} &:= \frac{\frac{\ell!}{\prod_{i=0}^{\mu} \delta_i(\tilde{\t})!} \rhoprob(\t')^{-1}}{\sum_{\tilde{\t} \in \Tset} \rhoprobt(\tilde{\t})^{-1}} = \frac{\frac{\ell!}{\prod_{i=0}^{\mu} \delta_i(\tilde{\t})!} \rhoprob(\t')^{-1}}{\sum_{\tilde{\t} \in \Tsettord} \frac{\ell!}{\prod_{i=0}^{\mu} \delta_i(\tilde{\t})!} \rhoprobt(\tilde{\t})^{-1}}
\end{align*}
for all $\t' \in \Tsettord$ (recall that $\tfrac{\ell!}{\prod_{i=0}^{\mu} \delta_i(\tilde{\t})!}$ is the number of permutations of the vector $\t'$). This is done in Lines~\ref{line:draw_efficiently_from_tprime_distribution_draw_uniform_x_from_interval}--\ref{line:draw_efficiently_from_tprime_distribution_sorted_vector_t_obtained}. Subsequently, we randomly permute this vector and obtain a vector that is drawn according to the distribution $p_{\t'}$ (see Lines~\ref{line:draw_efficiently_from_tprime_distribution_choose_random_permutation} and \ref{line:draw_efficiently_from_tprime_distribution_return_permuted_vector}).

The idea of Lines~\ref{line:draw_efficiently_from_tprime_distribution_draw_uniform_x_from_interval}--\ref{line:draw_efficiently_from_tprime_distribution_sorted_vector_t_obtained} is to partition the interval
\begin{align*}
\Iset := \left[0,\sum_{\tilde{\t} \in \Tsett} \rhoprobt(\tilde{\t})^{-1}\right) = \left[0,\sum_{\tilde{\t} \in \Tsettord} \tfrac{\ell!}{\prod_{i=0}^{\mu} \delta_i(\tilde{\t})!} \rhoprobt(\tilde{\t})^{-1}\right)
\end{align*}
into the intervals
\begin{align*}
\Iset_\t := \left[ \sum_{\substack{\tilde{\t} \in \Tsettord \\ \tilde{\t} < \t}} \tfrac{\ell!}{\prod_{i=0}^{\mu} \delta_i(\tilde{\t})!} \rhoprobt(\tilde{\t})^{-1}, \, \sum_{\substack{\tilde{\t} \in \Tsettord \\ \tilde{\t} \leq \t}} \tfrac{\ell!}{\prod_{i=0}^{\mu} \delta_i(\tilde{\t})!} \rhoprobt(\tilde{\t})^{-1} \right).
\end{align*}
for all $\t \in \Tsettord$.
Then, we draw a random rational number $x$ from $\Iset$.
Since the intervals $\Iset_\t$ form a partition of $\Iset$, there is a unique $\t \in \Tsettord$ with $x \in \Iset_\t$.
As all the interval borders are rational numbers whose denominators divide $\iota$, it follows from the way of choosing $x$, that the probability that $x \in \Iset_\t$ is exactly the ratio of \rev{the lengths of the intervals $\Iset_\t$ and $\Iset$}---hence, $\t$ is drawn from the distribution $\tilde{p}_{\t'}$.

The remaining question is how to determine \rev{which vector $\t$ is such that $x\in I_\t$}.
Algorithm~\ref{alg:draw_efficiently_from_tprime_distribution} computes $\t$ efficiently using a technique similar to enumerative coding \cite{cover1973enumerative}.
The idea is that we iteratively compute for which prefix of $\t$ of length $i$, the real number $x$ is contained in the interval
\begin{align*}
&\Iset_\t^{(i)} = \left[ I_\t^{(i,\mathsf{l})}, I_\t^{(i,\mathsf{r})} \right)
:= \left[ \sum_{\substack{\tilde{\t} \in \Tsettord \\ \tilde{\t} <_i \t}}  \tfrac{\ell!}{\prod_{i=0}^{\mu} \delta_i(\tilde{\t})!} \rhoprobt(\tilde{\t})^{-1}, \, \sum_{\substack{\tilde{\t} \in \Tsettord \\ \tilde{\t} \leq_i \t}}  \tfrac{\ell!}{\prod_{i=0}^{\mu} \delta_i(\tilde{\t})!}\rhoprobt(\tilde{\t})^{-1} \right),
\end{align*}
Note that
\begin{align*}
I_\t^{(i,\mathsf{l})} \leq I_\t^{(j,\mathsf{l})} < I_\t^{(j,\mathsf{r})} \leq I_\t^{(i,\mathsf{r})}
\end{align*}
for all $1\leq i \leq j \leq \ell$ and
\begin{align*}
\Iset_\t = \Iset_\t^{(\ell)}.
\end{align*}
Note that if $x \in \Iset_{[t_1,\dots,t_{i-1}]}^{(i-1)}$, then there is exactly one $t_i$ such that $x \in \Iset_{[t_1,\dots,t_{i}]}^{(i)}$, and we can compute it as
\begin{align}
t_i = \max\left\{ t'' \, : \, \Iset_{[t_1,\dots,t_{i-1},t'']}^{(i,\mathsf{l})} \leq x \right\} \label{eq:drawing_efficiently_from_tprime_distribution_iterative_ti_computation_1}
\end{align}
Furthermore, we have
\begin{align}
\Iset_{[t_1,\dots,t_{i-1},t'']}^{(i,\mathsf{l})} &= \overbrace{\sum_{\substack{\tilde{\t} \in \Tsettord \\ \tilde{\t} <_{i-1} [t_1,\dots,t_{i-1}]}} \tfrac{\ell!}{\prod_{i=0}^{\mu} \delta_i(\tilde{\t})!} \rhoprobt(\tilde{\t})^{-1}}^{= \, \Iset_{[t_1,\dots,t_{i-1}]}^{(i-1,\mathsf{l})}}
+ \sum_{t'=0}^{t_i-1} \sum_{\substack{\tilde{\t} \in \Tset \\ \tilde{\t} =_{i} [t_1,\dots,t_{i-1},t']}} \tfrac{\ell!}{\prod_{i=0}^{\mu} \delta_i(\tilde{\t})!} \rhoprobt(\tilde{\t})^{-1}. \label{eq:drawing_efficiently_from_tprime_distribution_iterative_ti_computation_2}
\end{align}
Equations \eqref{eq:drawing_efficiently_from_tprime_distribution_iterative_ti_computation_1} and \eqref{eq:drawing_efficiently_from_tprime_distribution_iterative_ti_computation_2} combined prove that Lines~\ref{line:draw_efficiently_from_tprime_distribution_draw_uniform_x_from_interval}--\ref{line:draw_efficiently_from_tprime_distribution_sorted_vector_t_obtained} indeed compute the ``index'' $\t$ for which $x \in \Iset_\t$.
This concludes the correctness proof.

The complexity follows since we need to compute $\sum_{\tilde{\t} \in \Tsettord, \tilde{\t} =_{\ell'} \t} \tfrac{\ell!}{\prod_{i=0}^{\mu} \delta_i(\tilde{\t})!} \rhoprobt(\tilde{\t})^{-1}$ for at most $\ell \mu \leq n$ different vectors $\t$, and the cost to compute each of these sums from the precomputed table $\{\Msum(t',\ell',\nmmin',s')\}_{t'\leq t, \ell' \leq \ell}^{\nmmin' \leq \nmmin, s' \leq s}$ as derived in Lemma~\ref{lem:sum_with_prefix}.
The cost of drawing $x$ corresponds to drawing uniformly at random a non-negative integer smaller than $\iota \sum_{\tilde{\t} \in \Tset} \rhoprobt(\tilde{\t})^{-1}$, i.e., of bit size $\in O(n\mu \zeta^2 \log_2(q))$.
This cost, as well as the cost of drawing a random permutation of $\t$, is negligible.
\end{proof}

\subsection{Proof of the Main Statement}\label{ssec:proof_main_statement}

The following proof summarizes the statements shown in this section, which all together imply the main statement, \cref{thm:generic_decoder_simple}.

\begin{proof}[Proof of \cref{thm:generic_decoder_simple}]
First note that for $\eta \leq m$, the algorithm sets $\ambdim = \eta$ and draws a random row support. For $\eta > m$, we do the same for the column case.
Correctness follows since if a suitable $\e$ exists, there is a non-zero probability that a (row or column) super-support of $\e$ is drawn, and erasure decoding has a unique result for a super-support of weight $s <d$ (cf.~\cref{thm:column_erasure_decoding} and \cref{thm:row_erasure_decoding}).

The expected complexity is given by the product of the cost of one iteration $\Wperiteration$ (erasure decoding plus random support drawing) and the expected number of iterations.
The first value can be lower-bounded by $1$ and upper-bounded by $O\!\left(n^3 m^3 \log_2(q)\right)$ due to \cref{thm:column_erasure_decoding}, \cref{thm:row_erasure_decoding}, and \cref{prop:drawing_efficiently_from_tprime_distribution}. The bounds on the expected number of iterations directly follow from \cref{prop:bounds_success_probability} and \cref{prop:simplified_bound_lemma}.

The claim that the bounds \eqref{eq:WF_good_lower_bound} and \eqref{eq:WF_good_upper_bound} can be computed efficiently follows directly from \cref{prop:sum_compute}.
\end{proof}

\section{Comparison to Other Generic Decoders}\label{sec:comparison}

We compare the new generic decoder to other (naive) generic decoding strategies, as well as to existing generic decoders in the extreme cases $\ell=1$ (rank metric) and $\ell=n$ (Hamming metric).

\subsection{Comparison to Extreme Cases: Hamming and Rank Metric}\label{ssec:comparison_Hamming_rank}

In the Hamming-metric case ($\ell=n$), the set $\Tsett$ consists of all permutations of the vector $[1,\dots,1,0,\dots,0]$, where the number of ones equals $t$.
In particular, we have $|\Tsett| = \binom{n}{t}$.
For $\t \in \Tsett$ and $t \leq s \leq n-k$, the function $\scomp(\t,s)$ (\cref{alg:scomp}) returns a random vector $\s \in \{0,1\}^{n}$ with exactly $s$ ones and whose support contains the support of $\t$. In particular, $\rhoprobt(\t) = 1$ for all $\t \in \Tsett$.
Hence, \cref{alg:generic_decoder} uniformly at random selects a subset of $s$ positions in a vector of length $n$, and succeeds if and only if these $s$ positions contain the error positions of an error corresponding an error $\e'$ with $\r -\e' \in \Code$, where $\r$ is the received word.
Although the bounds in \cref{thm:generic_decoder_simple} are---as expected---quite bad for this case (we get $0 \leq W_{\mathrm{new}} \leq \Wperiteration\binom{n}{t}$), this algorithm equals exactly Prange's information-set decoder \cite{prange1962use}, which has expected runtime
\begin{align*}
W_\mathrm{Prange} = \Wperiteration \frac{\binom{n}{t}}{\binom{s}{t}}.
\end{align*}
where $\Wperiteration$ denotes the (polynomial-time) cost of one iteration.

In the rank-metric case ($\ell=1$), the set $\Tsett$ contains only one element: $[t] \in \ZZ^{1}$.
\cref{alg:generic_decoder} thus chooses uniformly at random a row or column space of dimension $s$, and row- or column-erasure decodes in the rank metric.
This method is exactly the rank-syndrome decoder by Gaborit, Ruatta, and Schrek \cite{gaborit2016rsd}.
The complexity bound \eqref{eq:WF_simplified_upper_bound} in \cref{thm:generic_decoder_simple} simplifies to
\begin{align*}
W_\mathrm{GRS} = \Wperiteration q^{t(\min\{n,m\}-s)},
\end{align*}
where $t \leq s \leq \min\!\left\{n-k, \lfloor\tfrac{m}{n}(n-k)\rfloor\right\}$ and $\Wperiteration$ denotes the (polynomial-time) cost of one iteration.
This coincides exactly with Gaborit, Ruatta, and Schrek's complexity bound.

For arbitrary $\ell$ and $t \leq s \leq \min\!\left\{n-k, \lfloor\tfrac{m}{\npr}(n-k)\rfloor\right\}$, the simple upper complexity bound \eqref{eq:WF_simplified_upper_bound} in \cref{thm:generic_decoder_simple} is
\begin{align*}
\WnewUBsimple &= \Wperiteration \tbinom{\ell+t-1}{\ell-1}\gamma_q^\ell q^{t(\ambdim-\frac{s}{\ell})} \\
&\leq \Wperiteration \tbinom{\ell+t-1}{\ell-1}\gamma_q^\ell q^{t \tfrac{\min\{n,\ell m\}-s}{\ell}}.
\end{align*}
For constant $\ell$, the factor $\tbinom{\ell+t-1}{\ell-1}\gamma_q^\ell$ is polynomial in the code length, and can be neglected compared to the exponential term.
Hence, the exponent of the sum-rank-metric generic decoder is roughly a factor $\ell$ smaller than in the rank-metric case ($\ell=1$).
\rev{Note that the bound $\WnewUBsimple$ appears to be a loose approximation of the actual work factor for large $\ell$ (cf. Figure~\ref{fig:example_m=20_n=60}, \ref{fig:example_m=60_n=60}, and \ref{fig:example_m=25_n=60}). Therefore, we refrain from a discussion of $\WnewUBsimple$ for $\ell \in \Omega(n) $ as this does not necessarily give a good intuition about the work factor}.

Overall, the new generic decoding algorithm smoothly interpolates two generic decoding principles known for the extreme cases: Prange's information-set decoder \cite{prange1962use} for the Hamming metric and Gaborit, Ruatta, and Schrek's decoder \cite{gaborit2016rsd} for the rank metric.
The bounds on the work factor in \cref{thm:generic_decoder_simple} are, in a rough sense, good for $\ell$ not too large.
For constant $\ell$, the logarithm of the work factor of our generic $\ell$-sum-rank decoder is roughly a factor $\ell$ smaller than Gaborit, Ruatta, and Schrek's rank-metric decoder.

\subsection{Comparison to Naive Generic Sum-Rank Decoders}\label{ssec:compareNaive}

We compare the new generic decoder to other possible generic decoding strategies.
One naive strategy for generic decoding is given by brute-forcing the codewords, which has a complexity $\Wcode = q^{mk}m^2kn$, where $m^2kn$ is the cost of encoding. Another naive approach is brute-forcing the errors with complexity $\Werror$ as in \eqref{eq:Werror}  (see~\cref{cor:naive_generic_decoder_Werror} in \cref{sec:enumerating_sr_vectors}).
For the extreme cases $\ell=1$ and $\ell=n$, we compare the bounds on the work factor of the new decoder with the Gaborit--Ruatta--Schrek decoder ($\Wnaive$) and Prange's information-set decoder ($W_\mathrm{Prange}$), respectively, cf.~Section~\ref{ssec:comparison_Hamming_rank}.

In Figures~\ref{fig:example_m=20_n=60},~\ref{fig:example_m=60_n=60}, and~\ref{fig:example_m=25_n=60},
we compare the expected complexities of these generic decoding algorithms with the algorithm that we propose. We plot all bounds on the work factor of the new generic decoder that we present in the main statement, \cref{thm:generic_decoder_simple}, as well as the work factor of the ``optimal choice'' for $\ps$ as derived in \cref{app:optimal_support_finding_via_linear_programming}.
We show the $\log_2$ of the number of operations required. %

For all considered parameters, we observe that the difference of the derived upper and lower bound $\WnewLB$ and $\WnewUB$ is small, which indicates that the bounds must be tight on the true work factor. Furthermore, for small values of $\ell$, the simplified upper bound $\WnewUBsimple$ is very close to $\WnewUB$ and becomes loose only for large values of $\ell$. We note that the optimal solution derived in \cref{app:optimal_support_finding_via_linear_programming} is almost exactly on the accurate upper bound $\WnewUB$, for all cases in which we can compute it. Further, the work factor of Prange's algorithm (case $\ell = n$) and the generic rank-metric decoder (case $\ell=1$) are close to the upper bound $\WnewUB$.

\begin{figure}[ht!]
\begin{center}
\includegraphics{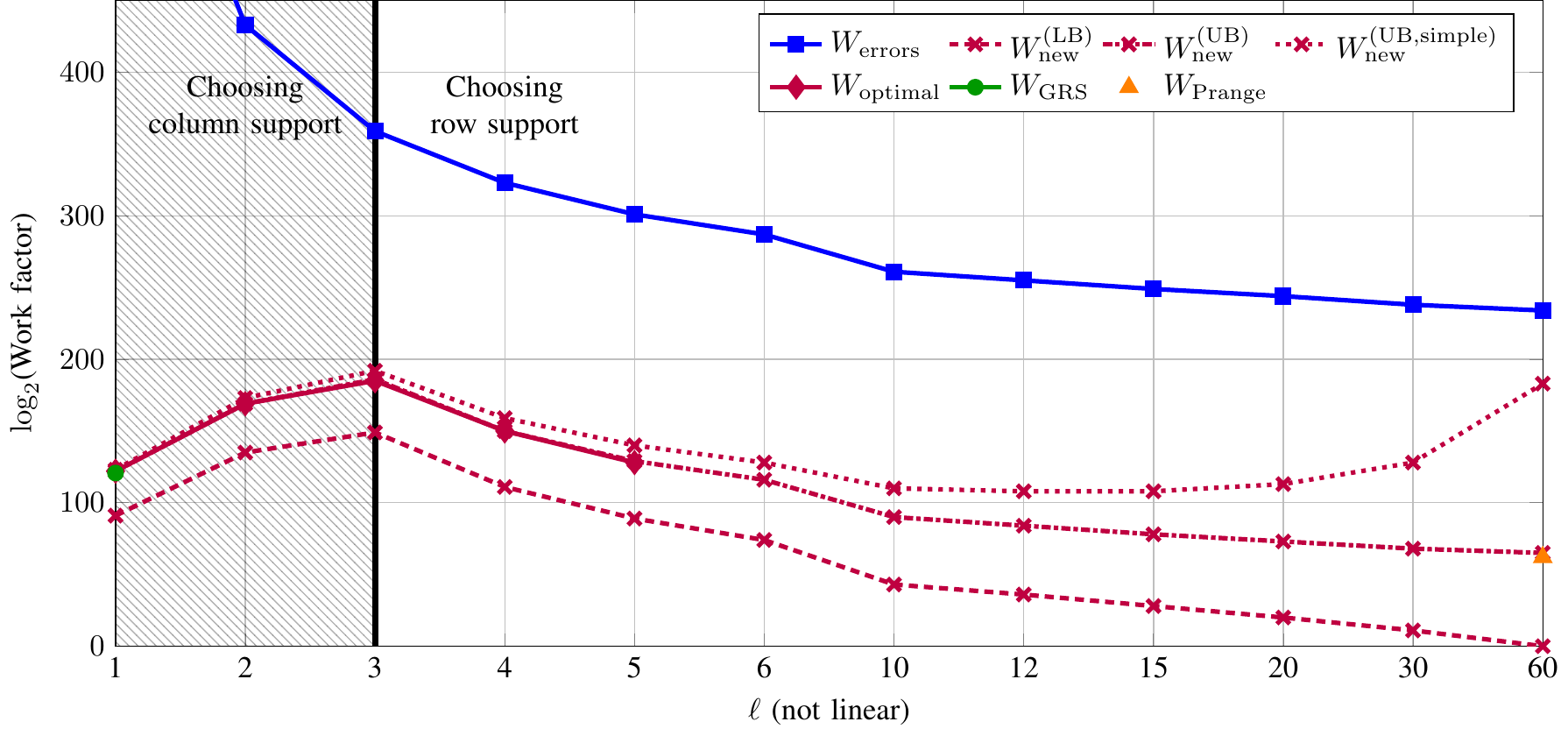}
\end{center}
\caption{Comparison of different generic decoding strategies for $q=2$, $m=20$, $n=60$, $k=30$, $t=9$, $s=10$. The work factor $\Wcode$ is $2^{620}$ for all values of $\ell$ and $\Werror$ is equal to $2^{661}$ for $\ell = 1$.
}
    \label{fig:example_m=20_n=60}
\end{figure}

\begin{figure}[ht!]
\begin{center}
\includegraphics{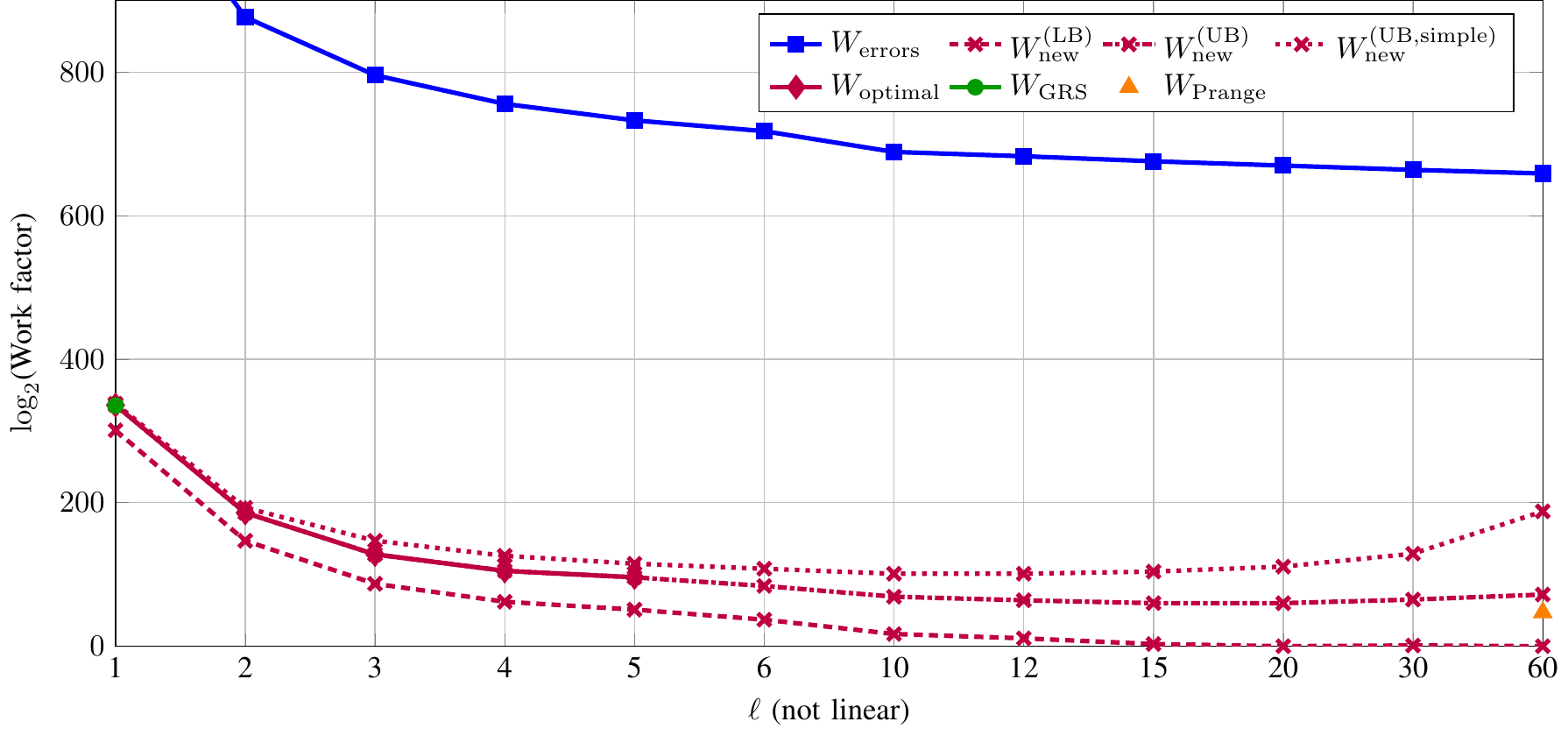}
\end{center}
\caption{Comparison of different generic decoding strategies for $q=2$, $m=60$, $n=60$, $k=30$, $t=10$, $s=30$, where we chose the row support for all values of $\ell$ in the proposed algorithm. The work factor $\Wcode$ is $2^{1823}$ for all values of $\ell$ and $\Werror$ is equal to $2^{1125}$ for $\ell = 1$.
}
\label{fig:example_m=60_n=60}
\end{figure}

\begin{figure}[ht!]
\begin{center}
\includegraphics{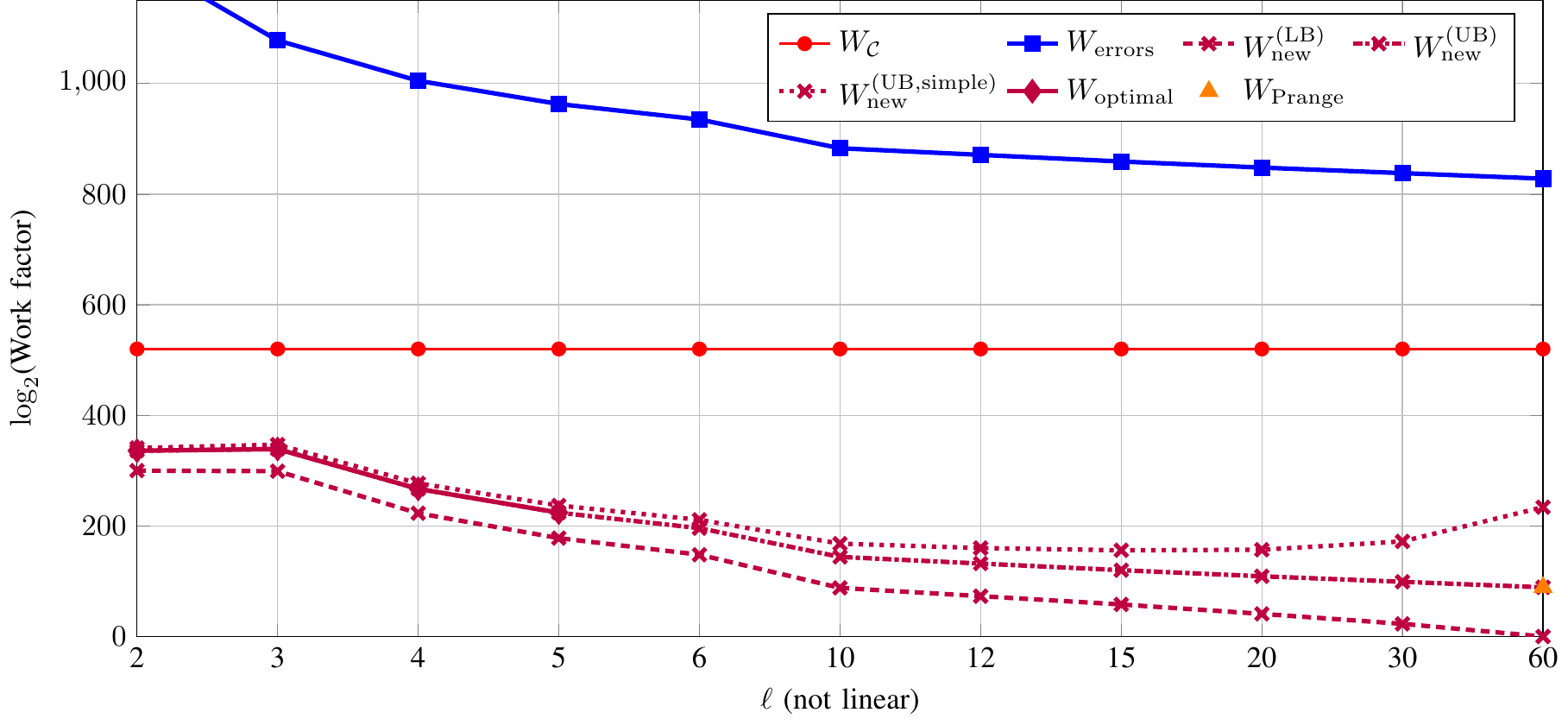}
\end{center}
\vspace{-0.5cm}
\caption{Comparison of different generic decoding strategies for $q=2$, $m=25$, $n=60$, $k=20$, $t=30$, $s=30$. The work factor $\Werror$ is equal to $2^{1225}$ for $\ell = 2$. The case $\ell=1$ is not feasible since the condition $t\leq s\leq\min\!\left\{n-k, \lfloor\tfrac{m}{n}(n-k)\rfloor\right\}$ is not fulfilled.
}
\label{fig:example_m=25_n=60}
\end{figure}

\section{Formal Hardness Proof}\label{sec:hardness}

In this section, we formally prove the hardness of the decisional version of the generic decoding problem in the sum-rank metric.
We adapt the approach of Gaborit and Z\'emor \cite{gaborit2016hard}, who probabilistically reduced the decisional Hamming syndrome decoding problem to the decisional rank syndrome decoding problem over a sufficiently large field extension.
We generalize the method from $\ell=1$ to arbitrary $\ell$, where the size of the extension field can be chosen smaller than in \cite{gaborit2016hard} for $\ell>1$.

\subsection{Complexity Classes}
Let $\mathcal{A}$ be an algorithm that gets as input a sequence of random bits $r$ and the input $x$ of a problem. Then $\mathcal{A}$ is called probabilistic polynomial time ($\PPT$) algorithm if the size of the random sequence $|r|$ (number of bits) is polynomial in the input $|x|$ and $\mathcal{A}$ runs in time polynomial in $|x|$.

{
We make use of the following complexity classes (see, e.g., \cite{trevisan2004complexity}).
Here, $L$ is a decision problem, $0 \leq \const < 1$ is any constant:
\begin{itemize}
\item $L \in \Pclass$ (polynomial time):
  there is a $\PPT$ algorithm $\Ap$ with output $\True$, $\False$ such that $\forall x \in L$ we have $\forall r \,\, \Ap(x,r)=\True$; and $\forall x \not\in L$ we have $\forall r \, \, \Ap(x,r)=\False$.
\item $L \in \RP$ (randomized polynomial-time):
  there is a $\PPT$ algorithm $\Arp$ with output $\True$, $\False$ such that $\forall x \in L$ then $ \Pr[\Arp(x,r)= \True] \geq \const$; and $\forall x \not\in L$ we have $\forall r \, \, \Arp(x,r)=\False$.
  Note that the probability is over the randomness of the bits $r$, and the input $x$ is considered fixed.

  \item $L \in \coRP$ (co-randomized polynomial-time):
  there is a $\PPT$ algorithm $\Acorp$ with output $\True$, $\False$ such that $\forall x \in L$ then $\forall r \,\, \Acorp(x,r)=\True$; and $\forall x \not\in L$ then $\Pr[\Acorp(x,r)=\False] \geq \const$.

  \item $L \in \ZPP$ (zero-error probabilistic polynomial time):
  there is a $\PPT$ algorithm $\Azpp$ with output $\True$, $\False$ or $\mathsf{fail}$ such that the following two are satisfied:
  1) For all $x$ then $\Pr[\Azpp(x,r)=\mathsf{fail}] \leq \const$; and 2) for all $x$ and $r$ then $\Azpp(x,r) = \True \implies x \in L$ and $\Azpp(x, r) = \False \implies x \notin L$.
  Note that $\ZPP = \RP \cap \coRP$.

  \item $L \in \NP$ (non-deterministic polynomial time):
  there is a $\PPT$ algorithm $\Anp$ such that $x \in L$ exactly when there exists an $r$ such that $\Anp(x,r) = \True$.
\end{itemize}

}

We have that $\mathsf{P} \subseteq \ZPP \subseteq \RP \subseteq \NP$.
Assuming that the widely believed conjecture $\ZPP\neq\NP$ was true, then our hardness reduction below would imply that the decisional generic decoding problem in the sum-rank metric was in $\NP \setminus \mathsf{P}$, Hence, it appears likely that the problem is hard to solve.

\subsection{Decoding Problems}

We relate the complexity classes of the following decision problems to each other.

\begin{problem}[Decisional Hamming Syndrome Decoding (\DHSD) Problem]\hfill
\begin{description}	
	\item \textbf{Given:}
\begin{itemize}
\item Parity-check matrix $\H \in \Fq^{(n-k) \times n}$ of a code $\Code$
\item Syndrome $\s \in \Fq^{n-k}$
\item Integer $0 \leq t \leq n$
\end{itemize}
\item \textbf{Question:} Is there an $\e \in \Fq^n$ with $\wtH(\e)\leq t$ such that $\s = \e\H^\top$?
\end{description}
\end{problem}

The \DHSD problem was proven to be $\NP$-complete in \cite{berlekamp1978inherent}.

\begin{problem}[Decisional $\ell$-Sum-Rank Syndrome Decoding (\DSRSD) Problem]\hfill
	\begin{description}
	\item 	 \textbf{Given:}
\begin{itemize}
\item Parameter $\ell \mid n$
\item Parity-check matrix $\H \in \Fqm^{(n-k) \times n}$ of a code $\Code$
\item Syndrome $\s \in \Fqm^{n-k}$
\item Integer $0 \leq t \leq n$
\end{itemize}
\item \textbf{Question:} Is there an $\e \in \Fq^n$ with $\wtSR(\e)\leq t$ such that $\s = \e\H^\top$?
\end{description}

\end{problem}

Note that the \DSRSDnoell{n} problem and the \DHSD problem are the same. The \DSRSDnoell{1} problem is the decisional rank-syndrome decoding problem, which was shown to be hard in the following way \cite{gaborit2016hard}: If the \DSRSDnoell{1} problem is in $\ZPP = \RP \cap \coRP$, then $\NP=\ZPP$. The next subsection generalizes this statement to arbitrary $\ell$.

\subsection{Hardness Reduction}

The following statements constitute the formal hardness proof, which is summarized in \cref{thm:hardness_main_statement}.
The proof strategy is similar to the proof of the probabilistic reduction of the ``decisional minimum rank weight problem'' in \cite{gaborit2016hard}
(note that Gaborit and Z\'emor prove the reduction for the \DSRSDnoell{1} by referring to the analogy to the latter problem).
Compared to the original statement in the case $\ell=1$, we can improve the tightness of the reduction (for $\ell>1$) using the bound on the sum-rank-metric sphere size derived in \cref{thm:bound_number_fixed-weight-words} in \cref{sec:enumerating_sr_vectors}.
We start with a technical lemma, which we will use to bound some probabilities in our probabilistic reductions.

\begin{lemma}\label{lem:hardness_probability}
Let $\varepsilon>0$ be fixed and $m,n,\ell$ be positive integers with $m \geq \tfrac{n^2}{\ell}+n\log_q(8n)+ \log_q(2\varepsilon^{-1})$.
Let $\H \in \Fq^{(n-k)\times n}$, $\s \in \Fq^{n-k}$ and $\x \in \Fq^n$, where $\x$ is a vector of minimum Hamming weight $\tH$ such that $\x \H^{\top} = \s$. Further let $\ve{\beta}$ be chosen uniformly at random from $(\Fqm^\ast)^n$ %
and let then $\x' \in \Fqm^n$ be a vector of minimum sum-rank weight such that $\x' \big(\H \diag(\ve{\beta})\big)^{\top} = \s$.
Then, the probability that $\wtSR(\x') < \tH$ is at most $\varepsilon$.
\end{lemma}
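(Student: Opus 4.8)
I would prove this by a union bound over "bad" vectors $\x'$. The key observation is that $\x$ itself satisfies $\x(\H\diag(\ve\beta))^\top = \s$ only after rescaling — actually, since $\x\H^\top = \s$ means $\sum_i x_i \h_i^\top = \s$ where $\h_i$ are the columns of $\H$, and $\H\diag(\ve\beta)$ has columns $\beta_i \h_i$, the vector $\x'' := (x_1/\beta_1, \dots, x_n/\beta_n)$ satisfies $\x''(\H\diag(\ve\beta))^\top = \s$. This $\x''$ has the same support as $\x$, hence Hamming weight $\tH$, hence sum-rank weight at most $\tH$ (since sum-rank weight is bounded above by Hamming weight). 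So the \emph{minimum} sum-rank weight solution $\x'$ satisfies $\wtSR(\x') \le \tH$ always; the only question is whether it can be strictly smaller.

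**Main steps.** First I would set up the bad event: $\wtSR(\x') < \tH$ holds only if there exists \emph{some} vector $\v \in \Fqm^n$ with $\wtSR(\v) \le \tH - 1$ and $\v(\H\diag(\ve\beta))^\top = \s$. Equivalently, writing $\ve\beta^{-1}$ for the coordinate-wise inverse, $\w := (\v_1\beta_1, \dots, \v_n\beta_n)$ satisfies $\w\H^\top = \s$. Now fix any vector $\w_0 \in \Fqm^n$ with $\w_0\H^\top = \s$; I would bound the probability, over the random choice of $\ve\beta$, that the rescaled vector $\v_0 := (w_{0,1}/\beta_1, \dots, w_{0,n}/\beta_n)$ has sum-rank weight at most $\tH - 1$. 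The trick is that as $\ve\beta$ ranges uniformly over $(\Fqm^\ast)^n$, the rescaled vector $\v_0$ ranges uniformly over the set of vectors with the \emph{same support pattern} as $\w_0$ (coordinate-wise, each nonzero entry of $\w_0$ gets multiplied by a uniform element of $\Fqm^\ast$). Since $\w_0$ has Hamming weight at least $\tH$ (as $\x$ was a minimum-Hamming-weight solution), the number of such vectors is at least $(q^m - 1)^{\tH}$, and the number of them with sum-rank weight $\le \tH - 1$ is at most $\mathcal{N}_{q,\npr,m}(\tH-1, \ell)$ — or more precisely the number of weight-$\le \tH-1$ vectors supported within a fixed set of $\tH$ coordinates, which is at most the sum-rank sphere-size bound. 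Then apply \cref{thm:bound_number_fixed-weight-words} (or its simplified corollary \eqref{eq:simple_bound_sphere_size}) to bound this by roughly $\gamma_q^\ell \binom{\ell + \tH - 2}{\ell - 1} q^{(\tH-1)(m + \npr)}$, giving a per-vector probability of at most something like $\gamma_q^\ell \binom{\ell+\tH-2}{\ell-1} q^{(\tH-1)(m+\npr)} / (q^m-1)^{\tH}$.

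**The union bound.** Next I would take the union bound over a set of representative vectors $\w_0$. The solutions of $\w\H^\top = \s$ form a coset of the code $\Code$ of size $q^{mk} \le q^{mn}$; but actually only the \emph{support class} of $\w_0$ matters, so I can union-bound over the at most $\binom{n}{\tH}\cdot(\text{something}) \le 2^n$ support patterns — wait, more carefully, over all $\binom{n}{\le \tH}$ supports, and for each support the analysis above applies uniformly. Actually the cleanest route is: the bad event occurs iff the coset $\x + \Code$ contains a vector of sum-rank weight $< \tH$ \emph{after the random diagonal twist}, and I bound $\Pr[\exists \w \in \x+\Code : \wtSR(\text{twist of }\w) < \tH]$ by $|\Code| \cdot \max_{\w} \Pr[\wtSR(\text{twist of }\w) < \tH]$. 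Since $|\Code| = q^{mk}$ and $k \le n$, and each individual probability is at most $\gamma_q^\ell \binom{\ell+\tH-2}{\ell-1}q^{(\tH-1)(m+\npr)}/(q^m-1)^{\tH}$, the product is bounded; plugging in the hypothesis $m \ge \tfrac{n^2}{\ell} + n\log_q(8n) + \log_q(2\varepsilon^{-1})$ and using $\tH \le n$, $\npr \le n$, $(q^m-1)^{\tH} \ge q^{m\tH}(1 - \tH q^{-m}) \ge \tfrac12 q^{m\tH}$, together with $\binom{\ell+\tH-2}{\ell-1} \le (something) $ and $\gamma_q^\ell \le (8n)^{\text{?}}$ — I would check that the exponent of $q$ works out: we need $mk + (\tH-1)(m+\npr) + (\text{poly corrections}) - m\tH \le \log_q\varepsilon - O(1)$, i.e. $m(k - 1) + (\tH-1)\npr + \text{corrections} \le -\log_q(2/\varepsilon)$ — hmm, this does not obviously hold since $m(k-1)$ is positive.

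**The real obstacle.** The step I expect to be the crux — and where I would need to be careful — is getting the counting argument to line up so the $q^{mk}$ from the code size is absorbed. The resolution must be that one does \emph{not} union-bound over all of $\Code$, but only over support patterns: the bad event "$\wtSR(\x') < \tH$" is equivalent to "the $\tH$ (or fewer) columns $\{\beta_i \h_i : i \in \mathrm{supp}\}$ satisfy some low-sum-rank linear dependence producing $\s$", and crucially the number of \emph{relevant} candidate supports is $\binom{n}{\tH} \le 2^n \le q^{n}$ (when $q \ge 2$), not $q^{mk}$. So the final bound is roughly $q^n \cdot \gamma_q^\ell \binom{\ell+\tH-2}{\ell-1} q^{(\tH-1)(m+\npr)} / q^{m\tH} \le q^{n + \text{poly}} \cdot q^{-m} \cdot q^{(\tH-1)\npr}$; with $\npr \le n$ and $\tH \le n$ this is at most $q^{n + n^2/\ell \cdot(\text{check}) - m + \text{poly}}$, and the hypothesis on $m$ is exactly engineered to make this $\le \varepsilon$. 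So concretely the plan is: (1) reduce the bad event to the existence of a low-sum-rank vector in a fixed coset under the diagonal twist; (2) for each of the $\le \binom{n}{\le \tH}$ support patterns, bound the probability using the uniformity of the twist and \cref{thm:bound_number_fixed-weight-words}; (3) union-bound and plug in the hypothesis on $m$ to get $\le \varepsilon$. The delicate bookkeeping — making sure the support-pattern count (and not $|\Code|$) is what enters, and that $\npr$ vs.\ $m$ and the $n^2/\ell$ term in the hypothesis match the $(\tH-1)\npr$ and the per-block rank-deficiency loss — is where I would spend the most care.
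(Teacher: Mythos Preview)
Your plan has a genuine gap, and the attempted fix via ``support patterns'' does not work.

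The issue is the direction of the union bound. You parameterize by solutions $\w_0$ of $\w\H^\top=\s$ over $\Fqm$ and then bound, for each such $\w_0$, the probability over $\ve\beta$ that the twist $(w_{0,i}/\beta_i)_i$ has small sum-rank weight. That per-$\w_0$ probability does indeed depend only on the Hamming support of $\w_0$, but the \emph{events} themselves are different for different $\w_0$ with the same support: for a fixed $\ve\beta$, two distinct $\w_0,\w_0'$ with identical support map to distinct twisted vectors, and one may land in the low-sum-rank ball while the other does not. So you cannot collapse the union to one representative per support class; you are stuck with $q^{mk}$ terms, and as you noticed the exponent $m(k-1)+(\tH-1)\npr$ is positive. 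Your alternative reformulation, that a low-sum-rank witness must involve ``$\tH$ (or fewer) columns'', is simply false: a vector of sum-rank weight $\tH-1$ can have all $n$ coordinates nonzero, so there is no bound on its Hamming support and no reduction to $\binom{n}{\le\tH}$ support patterns.

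The paper runs the union bound the other way: it fixes a candidate low-sum-rank vector $\x'$ \emph{first} and bounds $\Pr_{\ve\beta}\big[\x'(\H\diag(\ve\beta))^\top=\s\big]$. The key input you are missing is a structural fact from \cite[Lemma~4]{gaborit2016hard}: if any $\ve\beta$ makes this equation hold, then the Hamming support of $\x'$ contains an index set $\mathcal W$ of size $\tH$ on which the columns of $\H$ are $\Fq$-linearly independent. Fixing $\beta_i$ for $i\notin\mathcal W$, the map $(\beta_i)_{i\in\mathcal W}\mapsto \x'(\H\diag(\ve\beta))^\top$ is then injective, so $\Pr_{\ve\beta}[\mathcal E_{\x'}]\le (q^m-1)^{-\tH}$. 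Now the union is over the sum-rank ball $\{\x':\wtSR(\x')<\tH\}$, whose size is controlled by \cref{thm:bound_number_fixed-weight-words}, and the product $|\text{ball}|\cdot(q^m-1)^{-\tH}$ has exponent roughly $-m+(\tH-1)\npr-(\tH-1)^2/\ell$, which the hypothesis on $m$ is engineered to make at most $\log_q\varepsilon$. The essential trick is that this direction of the union bound exploits the $\tH$ independent columns inside \emph{any} potential witness's support, which is precisely what lets the denominator $(q^m-1)^{\tH}$ beat the sphere size without ever seeing the factor $q^{mk}$.
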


\begin{IEEEproof}
Let $\H$, $\s$ and $\tH$ be fixed.
We define $P$ as the probability (randomness in $\ve{\beta}$)
\begin{align*}
 P := \Pr\left\{ \exists \x' \in \Fqm^n \, : \, \x' \big(\H \diag(\ve{\beta})\big)^{\top} = \s \, \land\, \wtSR(\x') < \tH \right\}.
\end{align*}

For randomly chosen $\ve{\beta}\sample (\Fqm^\ast)^n$, let $\mathcal{E}_{\a}$ be the event that for a fixed vector $\a \in \Fqm^n$, the equality $\a  \big(\H \diag(\ve{\beta})\big)^{\top} = \s$ holds.
Further define the set $\mathcal{X}(\tH-1) := \{\a \in \Fqm^n: \wtSR(\a)< \tH\}$. Then,
\begin{align*}
\Pr \left\{ \exists \x' \in \Fqm^n: \x'  \big(\H \diag(\ve{\beta})\big)^{\top} = \s \, \land \, \wtSR(\x') < \tH \right\} = \Pr \Bigg[ \bigcup_{\x' \in \mathcal{X}(\tH)}  \mathcal{E}_{\x'} \Bigg] \leq \sum_{\x' \in \mathcal{X}(\tH)} \Pr [ \mathcal{E}_{\x'} ].
\end{align*}

Next, we bound $\Pr [ \mathcal{E}_{\x'}]$ for a given $\x' \in \Fqm^n$.
If there exists a $\ve{\beta} \in (\Fqm^\ast)^n$ such that $\x'  \big(\H \diag(\ve{\beta})\big)^{\top} = \s$, then, by \cite[Lemma 4]{gaborit2016hard}, there exists a subset $\mathcal{W} \subseteq \{i \, : \, x_i' \neq 0\}$ of cardinality $|\mathcal{W}|=\tH$ such that the columns $\h_i$ of $\H$ indexed by $i \in \mathcal{W}$ are linearly independent.
Hence, if we fix $\beta_i$ for all $i \notin \mathcal{W}$, then the set of vectors $\x'  \big(\H \diag(\ve{\beta})\big)^\top$ obtained by choosing the remaining $\beta_i \in \Fqm^\ast$ for $i\in\mathcal{W}$ has cardinality $(q^m-1)^{\tH}$.
Hence, for $\ve{\beta}\sample (\Fqm^\ast)^n$, we have
\begin{align*}
\Pr [ \mathcal{E}_{\x'} ] \leq \frac{1}{(q^m-1)^{\tH}}.
\end{align*}
Otherwise, if there is no $\ve{\beta} \in (\Fqm^\ast)^n$ such that $\x' \big(\H \diag(\ve{\beta})\big)^{\top} = \s$, then we have $\Pr [ \mathcal{E}_{\x'} ] = 0$, which is obviously $\leq \frac{1}{(q^m-1)^{\tH}}$.

Define $\Gamma(q,m,\tH) := \frac{q^{m\tH}}{(q^m-1)^{\tH}} = \frac{1}{(1-q^{-m})^{\tH}}$. Since $m \geq \tH$ by assumption, we have
\begin{align*}
\Gamma(q,m,\tH) \leq \frac{1}{(1-q^{-m})^{m}} = \frac{1}{\sum_{i=0}^{m} \binom{m}{i}(-q^{-m})^i} \overset{(\ast)}{\leq} \frac{1}{1-m q^{-m}} \leq 2,
\end{align*}
where we use $mq^{-m} \leq \tfrac{1}{2}$, and $(\ast)$ follows from the fact that the terms in the sum in the second line are alternating and their absolute values are strictly monotonically decreasing, i.e.,
\begin{align*}
\binom{m}{i+1}q^{-(i+1) m} = q^{-m}\tfrac{m-i}{i+1} \binom{m}{i}q^{-i m} \leq \tfrac{1}{2}\binom{m}{i}q^{-i m}.
\end{align*}
Combining the arguments above, we get
\begin{align*}
  P &\leq \frac{1}{(q^m-1)^{\tH}} |\mathcal{X}(\tH-1)|\\
    &= \Gamma(q,m,\tH)\frac{1}{q^{m\tH}} \sum_{i=1}^{\tH-1} \mathcal{N}_{q,\npr,m}(i,\ell)\\
    &\leq 2\frac{1}{q^{m\tH}} (\tH-1) \max_{i \in [1,\tH-1]}\mathcal{N}_{q,\npr,m}(i,\ell) \\
   &\leq 2\frac{1}{q^{m\tH}} \frac{1}{q^{m\tH}}(\tH-1) \binom{\ell+\tH-2}{\ell-1} 4^\ell  q^{(\tH-1)(m+\npr-\frac{\tH-1}{\ell})} \\
    &= 2(\tH-1) \binom{\ell+\tH-2}{\ell-1} 4^\ell  q^{-m+(\tH-1){\npr}-\frac{(\tH-1)^2}{\ell}} \\
    &\leq 2(\tH-1) \binom{\ell+\tH-2}{\ell-1} 4^\ell  q^{-m+\frac{n^2}{\ell}-\frac{(\tH-1)^2}{\ell}}. \\
    &\leq 2\underbrace{(\tH-1)}_{\leq \, \ell+\tH-2} (\ell+\tH-2)^{\ell-1} 4^\ell  q^{-m+\frac{n^2}{\ell}-\frac{(\tH-1)^2}{\ell}} \\
	&\leq 2[4(\ell+\tH-2)]^{\ell}  q^{-m+\frac{n^2}{\ell}-\frac{(\tH-1)^2}{\ell}} \\
	&\leq 2q^{-m+\frac{n^2}{\ell}-\frac{(\tH-1)^2}{\ell} + \ell \log_{q}[4(\ell+\tH-2)]} \\
	&\leq 2q^{-m+\frac{n^2}{\ell} + \ell \log_{q}[4(\ell+\tH-2)]} \\
	&\leq 2q^{-m+\frac{n^2}{\ell} + n \log_{q}(8n)} \\
	&\leq \varepsilon. \qedhere
\end{align*}
\end{IEEEproof}

We first show, that if there is a $\coRP$-algorithm for $\DSRSD$, then we can make a $\coRP$-algorithm for $\DHSD$ (\cref{alg:ahco} below).
  The idea is simple: The algorithm transforms the input into an instance of the $\DSRSD$ problem via a random linear map, and simply calls the $\coRP$-algorithm for $\DSRSD$.
Using \cref{lem:hardness_probability}, we can show that the solution to this problem will usually project back to a solution to the $\DSRSD$ instance.

\begin{lemma}\label{lem:hardness_coRP}
For any $\ell<n$ and $m > \tfrac{n^2}{\ell}+n\log_q(8n)+ \log_q(2)$, if the \DSRSD problem is in $\coRP$, then the \DHSD problem is in $\coRP$.
\end{lemma}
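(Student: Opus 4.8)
The plan is the standard probabilistic many-one reduction, realized by the algorithm $\Ahco$. Given a \DHSD instance $(\H,\s,t)$ with $\H\in\Fq^{(n-k)\times n}$ and $\s\in\Fq^{n-k}$, the algorithm reads from its random tape a vector $\ve{\beta}\sample(\Fqm^\ast)^n$, forms the randomized parity-check matrix $\H':=\H\diag(\ve{\beta})\in\Fqm^{(n-k)\times n}$, regards $\s$ as an element of $\Fqm^{n-k}$, invokes the assumed $\coRP$-oracle $\Arco$ for \DSRSD on the instance $(\ell,\H',\s,t)$ using fresh random bits, and returns its answer. Here $m$ is any integer meeting the hypothesis; since $m=O(n^2)$, elements of $\Fqm$ have bit size polynomial in $n$ and every step is linear-algebraic, so $\Ahco$ is $\PPT$.

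\textbf{Accepted inputs.} Suppose $(\H,\s,t)\in\DHSD$, so some $\e\in\Fq^n$ satisfies $\wtH(\e)\le t$ and $\e\H^\top=\s$. For \emph{every} $\ve{\beta}\in(\Fqm^\ast)^n$ the vector $\x'\in\Fqm^n$ with $x'_i:=e_i\beta_i^{-1}$ satisfies $\x'(\H')^\top=\x'\diag(\ve{\beta})\H^\top=\e\H^\top=\s$, and scaling coordinates by nonzero elements does not change the support, whence $\wtSR(\x')\le\wtH(\x')=\wtH(\e)\le t$. Thus $(\ell,\H',\s,t)$ is a YES-instance of \DSRSD for every $\ve{\beta}$; by the $\coRP$ property $\Arco$ then outputs $\True$ on all random tapes, hence so does $\Ahco$, exactly as required for inputs in the language.

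\textbf{Rejected inputs.} Suppose $(\H,\s,t)\notin\DHSD$. If $\s$ is not in the $\Fq$-row space of $\H$ it is not in its $\Fqm$-row space either, so $(\ell,\H',\s,t)$ is a NO-instance for every $\ve{\beta}$. Otherwise let $\tH\le n$ be the smallest Hamming weight among solutions of $\x\H^\top=\s$ over $\Fq$; by assumption $\tH\ge t+1>t$. Pick a constant $\varepsilon\in(0,1)$ with $m\ge\tfrac{n^2}{\ell}+n\log_q(8n)+\log_q(2\varepsilon^{-1})$, which exists because the hypothesis on $m$ is strict (and for establishing membership in $\coRP$ one may take $m$ slightly larger, keeping $\varepsilon$ a fixed constant, say $\varepsilon=\tfrac12$). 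By \cref{lem:hardness_probability} applied to this $\tH$, with probability at least $1-\varepsilon$ over $\ve{\beta}$ a minimum-sum-rank-weight solution $\x'$ of $\x'(\H')^\top=\s$ has $\wtSR(\x')\ge\tH>t$; in that case $(\ell,\H',\s,t)$ has no solution of sum-rank weight at most $t$ and is a NO-instance of \DSRSD, on which $\Arco$ outputs $\False$ with probability at least $\const$ over its own randomness. Since $\ve{\beta}$ is chosen before the oracle's random bits, the two events are independent, so $\Ahco$ outputs $\False$ with probability at least $(1-\varepsilon)\const$, a fixed positive constant. Therefore \DHSD is in $\coRP$.

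\textbf{Main obstacle.} The easy-direction verification (diagonal randomization preserves solvability) and the hard-direction application of \cref{lem:hardness_probability} (it eliminates spuriously low-weight solutions) are routine. The delicate point is the accounting of the two independent randomness sources so that the combined one-sided error $(1-\varepsilon)\const$ is a genuine constant bounded away from $1$; this is precisely why $m$ is taken a touch above the bare threshold of \cref{lem:hardness_probability}, which costs nothing since $m$ only needs to be polynomial in $n$.
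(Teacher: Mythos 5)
Your proposal is correct and follows essentially the same route as the paper: the identical randomized reduction $\H' = \H\diag(\ve{\beta})$ with $\ve{\beta}\sample(\Fqm^\ast)^n$ (the algorithm $\Ahco$), invariance of YES-instances under the diagonal twist, and \cref{lem:hardness_probability} to rule out spuriously low sum-rank-weight solutions on NO-instances, giving one-sided error $(1-\varepsilon)(1-\tilde\varepsilon)$. Your extra remarks (the case $\s$ outside the row space, and the explicit witness $x_i'=e_i\beta_i^{-1}$) are harmless refinements of the same argument.
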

\begin{IEEEproof}
  Let $\Arco$ be a hypothesised $\coRP$-algorithm for the \DSRSD problem, i.e.~it inputs an instance $(\H'\in\Fqm^{(n-k)\times n}, \s\in\Fqm^{n-k}, t \in \ZZ_{>0})$ and outputs $\True$ whenever $\tSR \leq t$, while it outputs $\False$ with probability at least $1 - \tilde \varepsilon$ if $\tSR > t$, where $\tSR$ is the minimum sum-rank weight of the vectors $\x'\in\Fqm^{n}$ such that $\x' \H'^{\top} = \s$, and $\tilde \varepsilon \geq 0$ is some fixed constant.

  Then \cref{alg:ahco} details an $\coRP$-algorithm $\Ahco$ for the \DHSD problem that inputs an instance $(\H \in \Fq^{(n-k)\times n}, \s\in\Fq^{n-k}, t \in \ZZ_{> 0})$.
  We should show that $\Ahco$ outputs $\True$ whenever $\tH \leq t$, while it outputs $\False$ with at least some non-zero constant probability if $\tH > t$, where $\tH$ denotes the minimum Hamming weight of the vectors $\x \in \Fq^n$ such that $\x \H^{\top} = \s$.

  Observe first that if $\tH \leq t$, it follows that $\tSR \leq t$, so $\Ahco$ outputs $\True$.
  Consider now the case $\tH > t$.
  By the definition of $m$, we may choose a non-negative constant $\varepsilon < 1$ such that $m \geq \tfrac{n^2}{\ell}+n\log_q(8n)+ \log_q(2\varepsilon^{-1})$.
  Hence by \cref{lem:hardness_probability}, with probability $\geq 1-\varepsilon$, we have $\tSR = \tH > t$, and so $\Ahco$ outputs $\False$ with probability at least $(1 - \varepsilon)(1-\tilde \varepsilon)$, which is again a constant.
\end{IEEEproof}

\begin{algorithm}
	\caption{$\Ahco$}\label{alg:ahco}
	\SetKwInOut{Input}{Input}\SetKwInOut{Output}{Output}
	\Input{$\H \in \Fq^{(n-k)\times n}$, $\s \in \Fq^{n-k}$, integer $t$}
	\Output{$\True$ or $\False$} %
	$\ve{\beta} \sample (\Fqm^\ast)^n$ \\
  $\H' \gets \H \diag(\ve{\beta}) \in \Fqm^{(n-k)\times n}$ \\
  \Return $\Arco(\H',\s)$
\end{algorithm}

The $\RP$ reduction is more involved: in order to solve the $\DHSD$ problem, we solve its related search problem by calling a (hypothetical) $\RP$-algorithm for $\DSRSD$ at most $n$ times on certain punctured and randomly transformed parity-check matrices.

\begin{lemma}\label{lem:hardness_RP}
For any $\ell<n$ and $m \geq \tfrac{n^2}{\ell}+n\log_q(8n)+ \log_q(4n)$, if the \DSRSD problem is in $\RP$, then the \DHSD problem is in $\RP$.
\end{lemma}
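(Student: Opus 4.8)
The plan is to adapt the $\RP$-reduction of Gaborit and Z\'emor: given an instance of the \DHSD problem I will solve the corresponding \emph{search} problem for Hamming syndrome decoding by making roughly $n$ calls to a hypothetical $\RP$-oracle for \DSRSD, on randomly field-extended and successively ``punctured'' parity-check matrices, and then run a deterministic verification on the reconstructed candidate. It is this verification that keeps the reduction genuinely one-sided, since it never accepts a non-solution; this is why the decision instance cannot simply be fed to the oracle directly, even though that sufficed for $\coRP$ in \cref{lem:hardness_coRP}.

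Concretely, assume the \DSRSD problem is in $\RP$ via a $\PPT$ algorithm $\Arrp$ with completeness constant $c>0$ and perfect soundness. I would first amplify: running $\Arrp$ independently $\lambda = \Theta(c^{-1}\log n)$ times and answering $\True$ iff some run does yields a $\PPT$ procedure $\Arrp^{(\lambda)}$ that still has perfect soundness but fails to return $\True$ on a yes-instance only with probability $O(1/n)$. The reduction, on input $(\H\in\Fq^{(n-k)\times n},\s,t)$, maintains a set $S\subseteq\{1,\dots,n\}$ of ``excluded'' coordinates, initially empty, and for $i=1,\dots,n$ proceeds as follows: form $\H_i$ from $\H$ by zeroing the columns indexed by $S\cup\{i\}$ (this keeps the length $n$ and blocking $\ell$, so $\H_i$ is a legal \DSRSD instance), draw a fresh $\ve{\beta}\sample(\Fqm^\ast)^n$, and call $\Arrp^{(\lambda)}$ on $(\H_i\diag(\ve{\beta}),\s,t)$; if it returns $\True$, add $i$ to $S$, otherwise keep $i$. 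After the loop, set $W:=\{1,\dots,n\}\setminus S$, find by Gaussian elimination an $\e\in\Fq^n$ with $\suppH(\e)\subseteq W$ and $\e\H^\top=\s$ if one exists, and finally output $\True$ iff such an $\e$ was found and $\wtH(\e)\le t$. This is plainly a $\PPT$ algorithm.

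The soundness is immediate: if $\tH>t$ there is no $\e\in\Fq^n$ at all with $\e\H^\top=\s$ and $\wtH(\e)\le t$, so regardless of the set $W$ produced the verification fails and the algorithm outputs $\False$ with certainty. For completeness, assume $\tH\le t$. For each $i$, the event ``$\tH(\H_i)\le t$'' is equivalent to ``there is a weight-$\le t$ solution of $\e\H^\top=\s$ whose support avoids $S\cup\{i\}$''; in that case $\tSR(\H_i\diag(\ve{\beta}))\le\tH(\H_i)\le t$ for every $\ve{\beta}$, so $\Arrp^{(\lambda)}$ answers $\True$ with probability $1-O(1/n)$, whereas otherwise $\tH(\H_i)>t$ and, since $\ve{\beta}$ is fresh and therefore independent of $\H_i$, \cref{lem:hardness_probability} applied to $\H_i$ gives $\tSR(\H_i\diag(\ve{\beta}))=\tH(\H_i)>t$ except with probability $\varepsilon\le\tfrac{1}{8n}$ — and this last bound is exactly what the hypothesis $m\ge\tfrac{n^2}{\ell}+n\log_q(8n)+\log_q(4n)$ provides — so perfect soundness of $\Arrp^{(\lambda)}$ forces the answer $\False$. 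A union bound over the $n$ iterations then shows that with probability $\ge\tfrac{3}{4}$ every iteration returns the correct answer (namely $\True$ iff $\tH(\H_i)\le t$); conditioned on this event the standard loop invariant ``there is a weight-$\le t$ solution with support disjoint from $S$'' is maintained throughout, so at the end $|W|\le t$ (a coordinate of $W$ lying outside the support of some weight-$\le t$ solution would have been excluded) and the final linear system has a solution $\e$ with $\suppH(\e)\subseteq W$, whence $\wtH(\e)\le|W|\le t$ and the verification succeeds. Therefore the output is $\True$ with probability $\ge\tfrac{3}{4}$ on every yes-instance, and the \DHSD problem is in $\RP$.

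The main obstacle I anticipate is exactly the failure of one-sidedness coming from the field-extension step: the map $\H_i\mapsto\H_i\diag(\ve{\beta})$ controls the minimum sum-rank weight only from below and only with high probability (\cref{lem:hardness_probability}), so the composed oracle can yield ``false positives'' in the search phase when $\tH(\H_i)>t$. The resolution is to solve the search problem and then verify deterministically: a false positive can only corrupt the reconstructed support $W$, never produce an $\e$ that truly verifies, so it costs completeness (absorbed by the amplification and the union bound) but never soundness. A secondary technical point, already used above, is to keep the intermediate instances legal by zeroing columns rather than deleting them, so that the length and blocking stay at $(n,\ell)$ and \cref{lem:hardness_probability} applies verbatim with the stated relation between $m$ and $n,\ell$.
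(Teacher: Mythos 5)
Your proposal is correct and follows essentially the same route as the paper's proof: a coordinate-by-coordinate search for a Hamming super-support using an amplified \RP-oracle for \DSRSD on $\diag(\ve{\beta})$-scaled restricted parity-check matrices, an appeal to \cref{lem:hardness_probability} with $\varepsilon$ of order $1/n$ to handle the case $\tHtilde > t$, a union bound over the $n$ iterations, and a final deterministic verification that preserves one-sidedness; zeroing columns instead of deleting them is only a cosmetic variation. One small correction: the hypothesis $m\ge\tfrac{n^2}{\ell}+n\log_q(8n)+\log_q(4n)$ yields $\varepsilon=\tfrac{1}{2n}$ in \cref{lem:hardness_probability} (since it only guarantees $\log_q(2\varepsilon^{-1})\le\log_q(4n)$), not $\tfrac{1}{8n}$, so the overall success probability comes out as roughly $\tfrac{1}{2}$ rather than $\tfrac{3}{4}$ --- still a positive constant, so the conclusion is unaffected.
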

\begin{IEEEproof}
Let $\Arrp$ be a hypothesised $\RP$-algorithm for $\DSRSD$, i.e.~it inputs an instance $(\H'\in \Fqm^{(n-k)\times n'}, \s \in \Fq^{n-k}, t \in \ZZ_{>0})$, and outputs $\False$ whenever $\tSR > t$ and outputs $\True$ with probability $1 - \tilde\varepsilon$ if
$\tSR \leq t$, where $\tSR$ is the minimum sum-rank weight of the vectors $\x'\in\Fqm^{n'}$ such that $\x' \H'^{\top} = \s$, and $\tilde\varepsilon > 0$ is some constant \rev{smaller} than $1$.
By iterating $\Arrp$ at most $O(\log n)$ times, we may assume $\tilde\varepsilon < \frac 1 {2n}$.

Then \cref{alg:ahrp} details an $\RP$-algorithm $\Ahrp$ for the \DHSD problem that inputs an instance $(\H \in \Fq^{(n-k)\times n}, \s\in\Fq^{n-k}, t \in \ZZ_{> 0})$.
We should show that $\Ahrp$ outputs $\False$ whenever $\tH > t$, while it outputs $\True$ with at least some constant non-zero probability if $\tH \leq t$, where $\tH$ denotes the minimum Hamming weight of the vectors $\x \in \Fq^n$ such that $\x \H^{\top} = \s$.

The idea of the algorithm is to determine a Hamming super-support $\mathcal{S}$ of cardinality at most $t$ of a vector $\x \in \Fq^n$ such that $\x \H^{\top} = \s$.
The function $\col(\H, \mathcal{T})$ returns the sub-matrix of $\H$ consisting of the columns indexed by the index set $\mathcal T$.
Note that each line runs in polynomial time: in particular, \cref{algahrp:solvex} is simply solving a linear system.
From Lines~\ref{algahrp:line14}--\ref{algahrp:line16}, we observe that the algorithm outputs $\True$ whenever a super-support is found, and outputs $\False$ otherwise.
Hence $\Ahrp$ outputs $\False$ whenever $\tH > t$, and we need to show that if $\tH \leq t$ then $\Ahrp$ returns $\True$ with some non-zero constant probability.

So assume $\tH \leq t$. The purpose of Lines~\ref{algahrp:col}--\ref{algahrp:line12} is to answer the following question:
\begin{center}
$\mathrm{(Q)}$ Is $\mathcal{S} \setminus \{i\}$ a super-support of a vector $\x \in \Fq$ with Hamming weight $\wtH(\x) \leq t$ and syndrome $\s = \H \x^\top$? %
\end{center}
Since we start with $\mathcal{S} = \{1,\dots,n\}$, it is clear that if we always get the correct answer to this question, at termination, the set $\mathcal{S}$ will be the support of a vector $\x$ of Hamming weight $\wtH(\x) \leq t$ and syndrome $\s = \H \x^\top$. If we get an incorrect answer in only one of the $\leq n$ loops, then we are not guaranteed that $\mathcal{S}$ has this property, but we can detect this event by Lines~\ref{algahrp:line14}--\ref{algahrp:line16}.

We show that the probability that Lines~\ref{algahrp:col}--\ref{algahrp:line12} answer the question $\mathrm{(Q)}$ correctly in \emph{all} iterations of the loop is at least a constant.
Note that there are two types of randomness in these lines, which both can influence the answer that we get: the choice of $\ve{\beta}$ and the randomness in the algorithm $\Arrp$.
We distinguish two cases and denote for given $\s \in \Fq^{n-k}$, $\mathcal{S}$, $i$, $\ve{\beta} \in \Fqm^{|\mathcal{S}|-1}$, the smallest Hamming weight of a vector $\tilde{\x} \in \Fq^{|\mathcal{S}|-1}$ such that $\tilde{\x} \bar{\H} = \s$ by $\tH$ and the smallest $\ell$-sum-rank weight of a vector $\tilde{\x}' \in \Fqm^{|\mathcal{S}|-1}$ with $\tilde{\x}' \bar{\H}' = \s$ as $\tSRtilde$. Note that the answer to $\mathrm{(Q)}$ is true if and only if $\tHtilde \leq t$.
\begin{itemize}
\item Case 1: The answer to $\mathrm{(Q)}$ is $\True$ (i.e., $\tHtilde\leq t$):
Independent of how $\ve{\beta}$ is chosen, we always have $\tSRtilde\leq \tHtilde \leq t$, so $\Arrp(\bar{\H}',\s,t)$ returns $\True$ (the correct answer) with probability at least $1-\tilde{\varepsilon}> 1-\tfrac{1}{2n}$  (randomness in $\Arrp$) and $\False$ (the incorrect answer) with probability at most $\tilde{\varepsilon}<\tfrac{1}{2n}$.
\item Case 2: The answer to $\mathrm{(Q)}$ is $\False$ (i.e., $\tHtilde> t$):
\begin{itemize}
\item With probability $>1-\tfrac{1}{2n}$ (randomness in the choice of $\ve{\beta}$), the vector $\ve{\beta}$ is chosen such that $\tSRtilde = \tHtilde$ due to Lemma~\ref{lem:hardness_probability} where we set $\varepsilon=\tfrac{1}{2n}$, which is permissible with our restriction on $m$.
In this case, we thus have $\tSRtilde>t$, and $\Arrp(\bar{\H}',\s,t)$ outputs always $\False$ (the true answer).
\item The counter-event of the above occurs with probability $<\tfrac{1}{2n}$: the vector $\ve{\beta}$ is chosen such that $\tSRtilde \leq t < \tHtilde$. In this case, $\Arrp(\bar{\H}',\s,t)$ may return $\True$ (the wrong answer) or $\False$ (the correct answer).
\end{itemize}
\end{itemize}
Hence, in both cases, Lines~\ref{algahrp:col}--\ref{algahrp:line12} answer the question $\mathrm{(Q)}$ correctly with probability greater than $1-\tfrac{1}{2n}$. Since the question is asked at most $n$ times, we get the correct answer to $\mathrm{(Q)}$ in \emph{all} iterations with probability at least $1-\tfrac{n}{2n} = \tfrac{1}{2}$ by the union bound.
\end{IEEEproof}

\begin{algorithm}
	\caption{$\Ahrp$}\label{alg:ahrp}
	\SetKwInOut{Input}{Input}\SetKwInOut{Output}{Output}
	\Input{$\H \in \Fq^{(n-k)\times n}$, $\s \in \Fq^{n-k}$, integer $t$}
	\Output{$\True$ or $\False$}
  $\mathcal{S} = \{1,\hdots,n\} $ \label{algahrp:line6}\\
  \For{$i =1,\hdots,n$ }{
    $\bar{\H} \gets \col(\H, \mathcal{S} \setminus \{i\})\in \Fq^{(n-k)\times (|\mathcal{S}|-1)}$ \label{algahrp:col}\\
    $\ve{\beta} \sample (\Fqm^\ast)^{|\mathcal{S}|-1}$ \\
    $\bar{\H}' \gets \bar{\H} \diag(\ve{\beta}) \in \Fqm^{(n-k)\times (|\mathcal{S}|-1)}$ \\
    \If{$\Arrp(\bar{\H}',\s, t) = \True$}{
      $\mathcal{S} \gets \mathcal{S} \setminus \{i\}$ \\
      \label{algahrp:line12}
    }
  }
  $\bar{\H} \gets \col(\H, \mathcal{S})\in \Fq^{(n-k)\times |\mathcal{S}|}$\\
  \If{$1\leq |\mathcal{S}|\leq t$ \normalfont{\textbf{and}} $\exists \x \in \Fq^{|\mathcal{S}|}$ s.t. $\x\bar{\H}^{\top} = \s$ \label{algahrp:line14}}{ \label{algahrp:solvex}
    \Return{$\True$} \label{algahrp:line15}
  }
  \Else{
    \Return{$\False$}\label{algahrp:line16}
  }
\end{algorithm}

The lemmas above imply the main statement of this section.

\begin{theorem}\label{thm:hardness_main_statement}
For $\ell<n$ and $m \geq \tfrac{n^2}{\ell}+n\log_q(8n)+ \log_q(4n)$, if the \DSRSD problem is in $\ZPP=\RP \cap \coRP$, then we have $\NP= \ZPP$.
\end{theorem}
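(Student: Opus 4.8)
The plan is to combine the three ingredients already established: the $\coRP$-reduction in \cref{lem:hardness_coRP}, the $\RP$-reduction in \cref{lem:hardness_RP}, and the $\NP$-completeness of \DHSD proved in \cite{berlekamp1978inherent}.

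First I would verify that the hypothesis $m \geq \tfrac{n^2}{\ell}+n\log_q(8n)+\log_q(4n)$ meets the field-size requirements of both lemmas simultaneously: it is exactly the bound demanded by \cref{lem:hardness_RP}, and since $4n \geq 2$ for every $n \geq 1$ it also implies $m > \tfrac{n^2}{\ell}+n\log_q(8n)+\log_q(2)$, the bound required by \cref{lem:hardness_coRP}. Moreover $m$ is polynomial in $n$ for fixed $q$, so elements of \Fqm{} have polynomial bit-size and the constructions used inside the two lemmas are bona fide polynomial-time reductions.

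Now suppose the \DSRSD problem lies in $\ZPP = \RP \cap \coRP$. Then in particular it lies in $\RP$ and in $\coRP$, so \cref{lem:hardness_RP} and \cref{lem:hardness_coRP} give that \DHSD is in $\RP$ and in $\coRP$, i.e.\ that \DHSD is in $\ZPP$. Because \DHSD is $\NP$-complete, every language $L \in \NP$ reduces to it by a polynomial-time many-one (Karp) reduction; and since $\ZPP$ is closed under polynomial-time many-one reductions, it follows that $L \in \ZPP$. Thus $\NP \subseteq \ZPP$, and combined with the standard inclusion $\ZPP \subseteq \NP$ this yields $\NP = \ZPP$.

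The proof is essentially bookkeeping once the two reductions of \cref{lem:hardness_RP} and \cref{lem:hardness_coRP} are available; the only points needing a moment's care are the compatibility of the two field-size hypotheses and the (standard) closure of $\ZPP$ under Karp reductions, which is what lets us lift the single $\NP$-complete problem \DHSD to all of $\NP$. I do not anticipate any real obstacle here.
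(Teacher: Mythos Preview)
Your proposal is correct and follows essentially the same route as the paper's proof, which simply invokes the $\NP$-hardness of \DHSD together with \cref{lem:hardness_coRP} and \cref{lem:hardness_RP} to obtain $\NP \subseteq \ZPP$, plus the standard inclusion $\ZPP \subseteq \NP$. Your version is in fact more careful than the paper's, explicitly verifying that the single hypothesis on $m$ satisfies the bounds of both lemmas and noting the closure of $\ZPP$ under Karp reductions.
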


\begin{proof}
It is well-known that $\ZPP \subseteq \NP$.
The other inclusion, $\ZPP \supseteq \NP$, follows from the NP-hardness of \DHSD, \cref{lem:hardness_coRP}, and \cref{lem:hardness_RP}.
\end{proof}

\begin{remark}
In the special case of \cref{thm:hardness_main_statement} for the rank metric ($\ell=1$), which was shown in \cite{gaborit2016hard}, the restriction on the extension degree is $m > n^2$.
It can be seen that our assumption, $m \geq \tfrac{n^2}{\ell}+n\log_q(8n)+ \log_q(4n)$, is less restrictive for $\ell>1$.
An interesting special case is $\ell \in \Omega(n)$, i.e.~a sum-rank metric close to the Hamming metric, for which we can choose $m \in O(n \log(n))$.
\end{remark}

\section{Conclusion}

We have proposed the first generic decoder in the ($\ell$-)sum-rank metric, which combines
known generic decoding algorithms in the Hamming metric ($\ell=n$) and rank metric ($\ell=1$).
For $\ell=n$, the algorithm resembles the information-set decoder by Prange \cite{prange1962use} and for $\ell=1$, it coincides with the generic decoder for the rank metric by Gaborit, Ruatta, and Schrek \cite{gaborit2016rsd}.

We have derived lower and upper bounds on the runtime of our generic decoding algorithm, which can be computed in small-degree polynomial time in the code parameters.
Furthermore, we derived a simple upper bound on the complexity of the new generic decoding algorithm.
For a constant number of blocks $\ell$, the bound shows that the exponent of our algorithm's work factor is roughly a factor $\ell$ smaller than for the generic rank-metric decoder by Gaborit, Ruatta, and Schrek \cite{gaborit2016rsd}.
Our formal hardness proof in \cref{sec:hardness} extends a result by Gaborit and Z\'emor \cite{gaborit2016hard} from the rank metric, and provides evidence that generic decoding in the sum-rank metric is a hard problem.

Besides being of theoretical interest, the results open up the possibility to study sum-rank-metric codes in code-based cryptosystems.
We have also derived results on the cardinality of sum-rank-metric spheres, which can, among others, be used to efficiently compute bounds on code parameters (cf.~\cref{rem:further_applications_sphere_size}).
Furthermore, the new notion of column support and the erasure decoding algorithms can be of more general interest.

The article can be seen as a proof-of-concept that ideas for generic decoding in the extreme cases, Hamming and rank metric, can be adapted to the family of sum-rank metrics.
An obvious open problem is the study of the many improvements of \cite{prange1962use} in the Hamming and \cite{gaborit2016rsd} in the rank metric.
In particular, it would be interesting to adapt the very recent significant improvement of generic decoding in the rank metric based on algebraic methods \cite{bardet2020algebraic} to the sum-rank metric.
As for the rank metric, it is an open problem whether there is a deterministic reduction from an NP-hard problem to the decisional sum-rank syndrome decoding problem.

\bibliographystyle{IEEEtran}
\bibliography{main}

\appendix
\rev{
\subsection{Generating Uniformly at Random Errors of a Given Sum-Rank Weight}\label{app:drawerrors}

\begin{algorithm}[ht!]
	\caption{$\textsf{Drawing Uniformly at Random an Error of Given Sum-Rank Weight}$}\label{alg:uniform_drawing_sum-rank_errors}
	\SetKwInOut{Input}{Input}\SetKwInOut{Output}{Output}
	\Input{Parameters $q,m,k,n,\ell,t$
	}
	\Output{Vector $\e \sample \{\e' \in \Fqm^n: \wtSR(\e') = t\}$}
	$D^{(1)} \sample \{1,\hdots,\mathcal{N}_{q,\npr,m}(t,\ell) \} $ \label{line:drawD}\\
	$t^{(1)} \gets t$ \label{line:mappingStart}\\
	\For{$j\in\{1,\hdots,\ell\}$}
	{
		$t_j \gets \max\left\{ t'' \in \left\{0,\hdots,t^{(j)}\right\}: \sum_{t'=t^{(j)}-\nmmin(\ell-j)}^{{t''-1}} \NM{q}{m,\npr,t'} \cdot \mathcal{N}_{q,\npr,m}({t^{(j)}-t'},\ell-j) < D^{(j)}\right\} $\\
		$D^{(j+1)} \gets D^{(j)} - \sum_{t'=t^{(j)}-\nmmin(\ell-j)}^{{t_j-1}} \NM{q}{m,\npr,t'} \cdot \mathcal{N}_{q,\npr,m}({t^{(j)}-t'},\ell-j)$ \\
		$t^{(j+1)} \gets t^{(j)} - t_j$\label{line:mappingStop}}
	\For{$j\in\{1,\hdots,\ell\}$\label{line:drawingStart} }
	{
		$\a_j \sample\{\a \in \Fqm^{t_j} : \rk_{\Fq}\!\left(\a\right) = t_j\}$ \\
		$\B_j \sample\{\B \in \Fq^{t_j \times \npr} : \rk_{\Fq} \!\left(\B\right) = t_j\}$	
	}
	$\e \gets [\a_1\B_1 \mid \a_2\B_2 \mid \dots \mid \a_{\ell}\B_{\ell}] \in \Fqm^n$ \label{line:drawingStop} \\
	\Return{$\e$}
\end{algorithm}

The recursion in Lemma~\ref{lem:number_fixed-weight-words_recursion} can be turned into a variant of enumerative coding \cite{cover1973enumerative} to efficiently draw uniformly at random from the set of sum-rank vectors of weight $t$. Such an algorithm is outlined in \cref{alg:uniform_drawing_sum-rank_errors}, and its correctness is proven in the following proposition:
\begin{proposition}
	Let $q,m,k,n,\ell$, and $t$ be integers such that $\ell \mid n$ and $t\leq \nmmin\ell$. Then, \cref{alg:uniform_drawing_sum-rank_errors} outputs a vector $\e\in\Fqm^n$ drawn uniformly at random from $\{\e' \in \Fqm^n: \wtSR(\e') = t\}$. 
\end{proposition}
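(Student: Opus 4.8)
The plan is to analyze the two phases of \cref{alg:uniform_drawing_sum-rank_errors} separately: Line~\ref{line:drawD} together with the loop in Lines~\ref{line:mappingStart}--\ref{line:mappingStop} samples a weight decomposition $\t=[t_1,\dots,t_\ell]\in\Tsett$, while the loop in Lines~\ref{line:drawingStart}--\ref{line:drawingStop} independently fills in each block $\e_i=\a_i\B_i$. I would prove (i) that the first phase produces $\t$ with probability $\Pr[\t]=\big(\prod_{i=1}^{\ell}\NM{q}{m,\npr,t_i}\big)\big/\mathcal{N}_{q,\npr,m}(t,\ell)$, and (ii) that, conditioned on $\t$, the blocks $\e_1,\dots,\e_\ell$ are independent with $\e_i$ uniform over $\{\e'\in\Fqm^{\npr}:\rk_{\Fq}(\e')=t_i\}$. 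Granting these, for any fixed $\e=[\e_1\mid\cdots\mid\e_\ell]$ of sum-rank weight $t$ (which has a well-defined weight decomposition $\t$) the output probability is $\big(\prod_i\NM{q}{m,\npr,t_i}\big)\big/\mathcal{N}_{q,\npr,m}(t,\ell)\cdot\prod_i\NM{q}{m,\npr,t_i}^{-1}=\mathcal{N}_{q,\npr,m}(t,\ell)^{-1}$, which is independent of $\e$; since there are exactly $\mathcal{N}_{q,\npr,m}(t,\ell)$ vectors of sum-rank weight $t$, the output is uniform.

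For (i) I would run the standard enumerative-decoding argument, in the spirit of \cite{cover1973enumerative} and the proof of \cref{prop:drawing_efficiently_from_tprime_distribution}. The recursion of \cref{lem:number_fixed-weight-words_recursion} gives $\mathcal{N}_{q,\npr,m}(t^{(j)},\ell-j+1)=\sum_{t'}\NM{q}{m,\npr,t'}\,\mathcal{N}_{q,\npr,m}(t^{(j)}-t',\ell-j)$, so the set $\{1,\dots,\mathcal{N}_{q,\npr,m}(t^{(j)},\ell-j+1)\}$ of remaining indices splits into consecutive blocks indexed by the value $t'$ of the $j$-th coordinate, the block for $t'$ having size $\NM{q}{m,\npr,t'}\,\mathcal{N}_{q,\npr,m}(t^{(j)}-t',\ell-j)$ (the lower summation limit $t^{(j)}-\nmmin(\ell-j)$ only trims blocks of size zero). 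Setting $t_j$ to the $t'$ whose block contains $D^{(j)}$ and updating $D^{(j+1)}$ to point inside that block, one shows by induction on $j$ that, conditioned on the prefix $t_1,\dots,t_j$, the value $D^{(j+1)}$ is again uniform on the sub-interval singled out by that prefix, so the $j$-th coordinate equals $t'$ with conditional probability $\NM{q}{m,\npr,t'}\,\mathcal{N}_{q,\npr,m}(t^{(j)}-t',\ell-j)\big/\mathcal{N}_{q,\npr,m}(t^{(j)},\ell-j+1)$. Telescoping over $j=1,\dots,\ell$ yields the formula in (i).

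For (ii), fix $\t$; the blocks are drawn independently, so it is enough to look at one block. Since $\B_i$ has entries in $\Fq$, expanding over $\Fq$ commutes with the product and gives $\extFqmFqVector{\npr}(\a_i\B_i)=\extFqmFqVector{t_i}(\a_i)\,\B_i$, a product of an $m\times t_i$ matrix of full column rank and a $t_i\times\npr$ matrix of full row rank (recall $t_i\leq\nmmin=\min\{m,\npr\}$). I would then appeal to the classical full-rank factorization count: every $M\in\Fq^{m\times\npr}$ of rank $t_i$ can be written $M=AB$ with $A$ of full column rank and $B$ of full row rank, and the set of such pairs for a fixed $M$ has cardinality $|\mathrm{GL}_{t_i}(\Fq)|$ independent of $M$, since $P\mapsto(AP,P^{-1}B)$ acts freely and transitively on it. As $\extFqmFqVector{t_i}(\a_i)$ is uniform over the full-column-rank $m\times t_i$ matrices and $\B_i$ uniform over the full-row-rank $t_i\times\npr$ matrices, independently, the pair $(\extFqmFqVector{t_i}(\a_i),\B_i)$ is uniform over all factorization pairs, so $\extFqmFqVector{\npr}(\e_i)=AB$ is uniform over the rank-$t_i$ matrices in $\Fq^{m\times\npr}$ (each such matrix arising from exactly $|\mathrm{GL}_{t_i}(\Fq)|$ of the equiprobable pairs); a count confirms $\prod_{r=0}^{t_i-1}(q^m-q^r)\prod_{r=0}^{t_i-1}(q^{\npr}-q^r)\big/|\mathrm{GL}_{t_i}(\Fq)|=\NM{q}{m,\npr,t_i}$ in accordance with \eqref{eq:num_matrices_of_given_rank}, and pulling back through the bijection $\extFqmFqVector{\npr}$ gives uniformity of $\e_i$ over $\{\e'\in\Fqm^{\npr}:\rk_{\Fq}(\e')=t_i\}$.

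Part (ii) is routine linear algebra once the factorization count is invoked. I expect the main obstacle to be part (i): setting up the correct inductive invariant on the ranges of the intermediate integers $D^{(j)}$, and carefully checking that the partial sums used to choose $t_j$, together with the update of $D^{(j+1)}$, respect the block decomposition from \cref{lem:number_fixed-weight-words_recursion} in all edge cases (in particular when some $\NM{q}{m,\npr,\cdot}$ or $\mathcal{N}_{q,\npr,m}(\cdot,\cdot)$ factors vanish and are silently dropped by the summation limits).
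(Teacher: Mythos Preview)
Your proposal is correct and follows essentially the same two-phase strategy as the paper: use the recursion of \cref{lem:number_fixed-weight-words_recursion} as an enumerative code to sample the weight decomposition $\t$ with the right probability, then fill each block independently by drawing $\a_i$ and $\B_i$ uniformly. In fact your argument is more complete than the paper's, which simply asserts that $\a_j\B_j$ is uniform over vectors of rank $t_j$ without supplying the $\mathrm{GL}_{t_i}(\Fq)$-orbit factorization count you give in part~(ii).
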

\begin{proof}
	The set $\{\e' \in \Fqm^n: \wtSR(\e') = t\}$ %
	has cardinality $\mathcal{N}_{q,\npr,m}(t,\ell)$.
	Let $\varphi: \{1,\hdots,\mathcal{N}_{q,\npr,m}(t,\ell)\} \rightarrow  \{\e' \in \Fqm^n: \wtSR(\e') = t\}$ be a bijective mapping. If we know an efficient algorithm to realize the mapping $\varphi$, then the drawing could be simply realized by choosing uniformly at random $D^{(1)}$ from $\{1,\hdots,\mathcal{N}_{q,\npr,m}(t,\ell) \}$ and outputting $\varphi(D^{(1)})$. However, the drawing algorithm can also be realized with a different method. 
		
	Let $\phi: \{\e \in \Fqm^n: \wtSR(\e) = t\} \rightarrow \Tsett$, $\e \mapsto [\rk_{\Fq}(\e_1),\hdots,\rk_{\Fq}(\e_\ell)]$. Then, the drawing can be conducted by computing $\t = (\phi \circ \varphi) (D^{(1)})$ and sampling $\a_j \sample\{\a \in \Fqm^{t_j} : \rk_{\Fq}\!\left(\a\right) = t_j\}$ and
	$\B_j \sample\{\B \in \Fq^{t_j \times \npr} : \rk_{\Fq} \!\left(\B\right) = t_j\}$, for $j\in\{1,\hdots,\ell\}$. Since $\e_j = \a_j \B_j \in \Fqm^{\npr}$ is a vector drawn uniformly at random from $\{\e' \in \Fqm^{\npr}: \rk_{\Fq}(\e') = t_j\}$, it follows that $\e = [\a_1\B_1 \mid \dots \mid \a_{\ell}\B_{\ell}]$ is a vector drawn uniformly at random from $\{\e' \in \Fqm^n: \wtSR(\e') = t\}$.
		
	To derive the mapping $\phi \circ \varphi  : \{1,\hdots,\mathcal{N}_{q,\npr,m}(t,\ell)\} \rightarrow \Tsett$ suppose that $t\leq \nmmin$. Then, the number of vectors that have a weight decomposition $[0,\hdots,0,t]$ is equal to $\NM{q}{m,\npr,t}$, and therefore, we map
	\begin{equation*}
	D\in\{1,\hdots,\NM{q}{m,\npr,t}\} \mapsto [0,\hdots,0,t].
	\end{equation*}
	Furthermore, the number of vectors that have a weight decomposition $[0,\hdots,0,1,t-1]$ is equal to $\NM{q}{m,\npr,1}\NM{q}{m,\npr,t-1}$, which means that we map
	\begin{align*}
	D\in\{\NM{q}{m,\npr,t}+1,\hdots,\NM{q}{m,\npr,t}+\NM{q}{m,\npr,1}\NM{q}{m,\npr,t-1}\}
	\mapsto [0,\hdots,1,t-1].
	\end{align*}	
	It follows by induction that we map
	\begin{equation*}
	D\in\left\{
		\sum_{t'=0}^{\rev{t_j-1}} \NM{q}{m,\npr,t'} \cdot \mathcal{N}_{q,\npr,m}(\rev{t-t'},\ell-j)+1
		,\hdots,
		\sum_{t'=0}^{\rev{t_j}} \NM{q}{m,\npr,t'} \cdot \mathcal{N}_{q,\npr,m}(\rev{t-t'},\ell-j)
		\right\} \mapsto [0,\hdots,0,t_{j},\hdots,t_{\ell}],
		\end{equation*}
	where $\sum_{i=j+1}^{\ell} t_{i} = t-t_j$. 
		
	\cref{alg:uniform_drawing_sum-rank_errors} performs this routine. In \cref{line:drawD}, the integer $D^{(1)}$ is drawn uniformly at random from $\{1,\hdots,\mathcal{N}_{q,\npr,m}(t,\ell) \}$, and in Lines~\ref{line:mappingStart} to \ref{line:mappingStop}, the respective weight distribution vector $(\phi \circ \varphi) (D^{(1)})$ is determined (the cases of $t > \nmmin$ are taken into account by starting to sum from $t^{(j)}-\nmmin(\ell-j)$ instead of $0$). The method to compute $(\phi \circ \varphi) (D^{(1)})$ is illustrated in Figure~\ref{fig:drawingerrors}. In Lines~\ref{line:drawingStart} to \ref{line:drawingStop}, the vectors $\e_j\in\Fqm^{\npr}$ are drawn uniformly at random from the set of vectors of rank weight $t_j$, and the vector $[\e_1 \mid \hdots \mid \e_{\ell}]$ is returned.
\end{proof}

\begin{figure}
	\center
\includegraphics{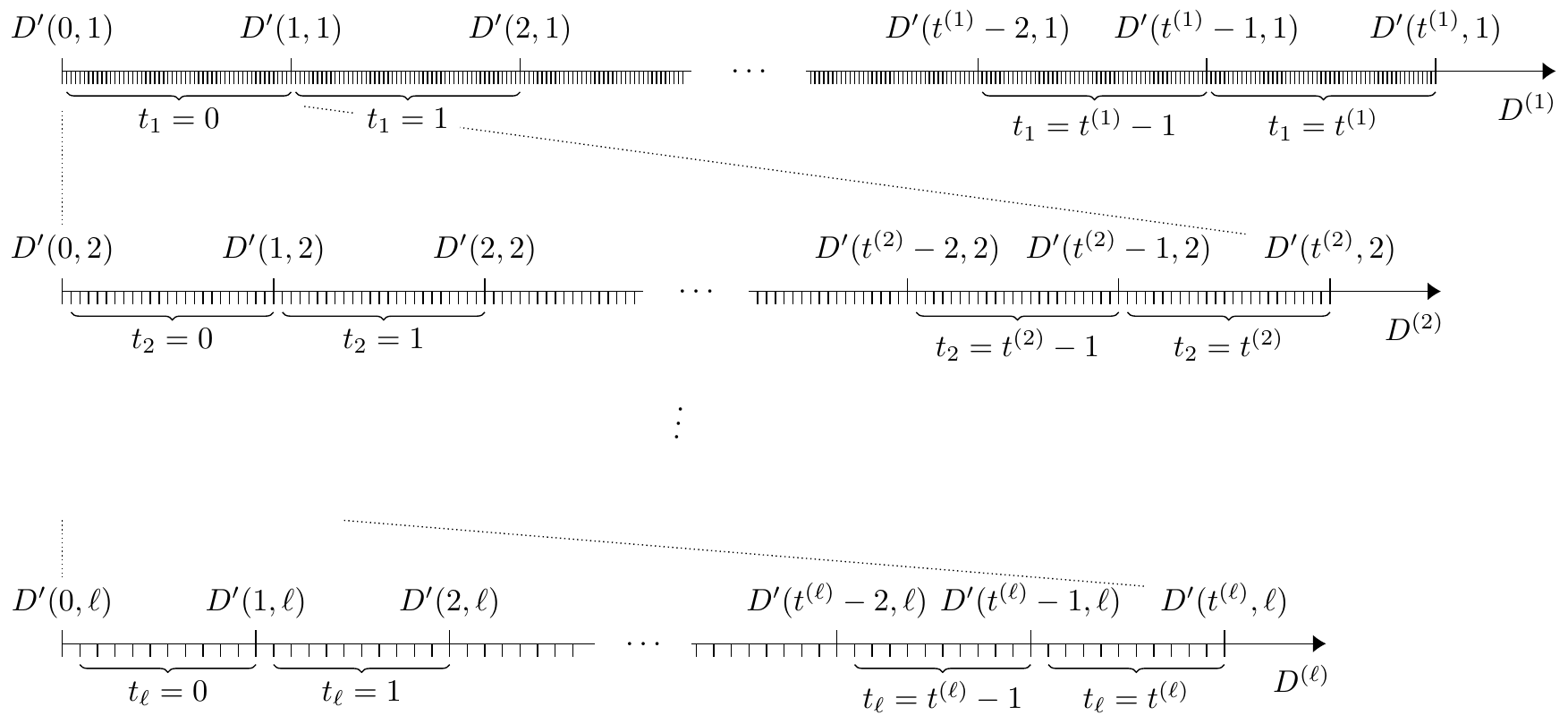}
\caption{\rev{Illustration of the mapping $\phi \circ \varphi  : \{1,\hdots,\mathcal{N}_{q,\npr,m}(t,\ell)\} \rightarrow \Tsett, D^{(1)} \mapsto \t$. The variables are defined as in \cref{alg:uniform_drawing_sum-rank_errors}, and the function $D'(t'',j):=\sum_{t'=t^{(j)}-\nmmin(\ell-j)}^{{t''-1}} \NM{q}{m,\npr,t'} \, \mathcal{N}_{q,\npr,m}({t^{(j)}-t'},\ell-j) $}.}
\label{fig:drawingerrors}
\end{figure}

}

\subsection{Optimal Support-Drawing Algorithm}\label{app:optimal_support_finding_via_linear_programming}

In \cref{sec:super_support_finding}, we saw that the worst-case expected number of iterations of a super-support drawing algorithm that first draws a vector $\s \in \Tsets$ according to a probability distribution $\ps$ and then $\Fset \sample \SupportSet(\s)$, can be given as (cf.~\eqref{eq:worst_case_num_iterations_exact_optimization})
\begin{align*}
\NumItWorstCase = \max_{\t \in \Tsett} \left(\sum_{\s \in \Tsets} \ps \rhoprob(\s,\t)\right)^{-1}.
\end{align*}
\cref{sec:super_support_finding} presented a scalable method to design $\ps$ that can be implemented in polynomial time, but does not guarantee to minimize \eqref{eq:worst_case_num_iterations_exact_optimization}.
Here, we show how to achieve an optimal solution, at the cost of a super-polynomial complexity.
The following theorem reformulates the optimization problem into a linear programming instance.

\begin{theorem}\label{thm:optimal_ps_linear_program}
Fix arbitrary orders $\s_1,\dots,\s_{|\Tsets|}$ and $\t_1,\dots,\t_{|\Tsett|}$ of elements in $\Tsets$ and $\Tsett$, respectively.
Let %
\begin{align*}
\c &= \begin{bmatrix}
0 \\
0 \\
\vdots \\
0 \\
1
\end{bmatrix} \in  \mathbb{R}^{(|\Tsets|+1) \times 1}, \quad \quad
\b = \begin{bmatrix}
0 \\
0 \\
\vdots \\
0 \\
1 \\
-1
\end{bmatrix} \in \mathbb{R}^{|\Tsett| \times 1}, \quad \text{and}  \\
\A &= \begin{bmatrix}
-\rhoprob(\s_1,\t_1) & -\rhoprob(\s_2,\t_1) & \dots & -\rhoprob(\s_{|\Tsets|},\t_1) & 1 \\
-\rhoprob(\s_1,\t_2) & -\rhoprob(\s_2,\t_2) & \dots & -\rhoprob(\s_{|\Tsets|},\t_2) & 1 \\
\vdots & \vdots & \ddots & \vdots & \vdots \\
-\rhoprob(\s_1,\t_{|\Tsett|}) & -\rhoprob(\s_2,\t_{|\Tsett|}) & \dots & -\rhoprob(\s_{|\Tsets|},\t_{|\Tsett|}) & 1 \\
1 & 1 & \dots & 1 & 0 \\
-1 & -1 & \dots & -1 & 0
\end{bmatrix} \in \mathbb{R}^{(|\Tsett|+2) \times (|\Tsets|+1)}.
\end{align*}
If $\x \in \mathbb{R}^{(|\Tsets|+1) \times 1}$ is a solution to the linear program
\begin{align}
&\mathrm{Maximize} &&\c^\top \x \notag \\
&\mathrm{subject \; to} && \A \x \leq \b \label{eq:linear_program} \\
&\mathrm{and} && \x \geq 0, \notag
\end{align}
then $\psnumber = x_i$, for all $i=1,\dots, |\Tsets|$, is a distribution that maximizes \eqref{eq:worst_case_num_iterations_exact_optimization}, and we have
\begin{align}
x_{|\Tsets|+1}^{-1} = \min\!\left\{ \max_{\t \in \Tsett} \left(\sum_{\s \in \Tsets} \ps \rhoprob(\s,\t)\right)^{-1} \, : \, \ps \in [0,1] \, \forall \s \in \Tsets, \, \sum_{\s \in \Tsets} \ps = 1\right\}. \label{eq:x_last_entry_solution}
\end{align}
\end{theorem}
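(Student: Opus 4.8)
The plan is to recognise \eqref{eq:linear_program} as the standard epigraph-form linearisation of the maximin problem $\max_{\ps}\,\min_{\t\in\Tsett}\sum_{\s\in\Tsets}\ps\rhoprob(\s,\t)$, and then to transfer its optimum through the strictly decreasing map $x\mapsto x^{-1}$ to obtain the minimax value on the right-hand side of \eqref{eq:x_last_entry_solution}.

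First I would unpack the constraints $\A\x\le\b$, $\x\ge 0$ coordinate-wise, writing $z:=x_{|\Tsets|+1}$ and $p_{\s_j}:=x_j$. The last two rows of $\A$, together with the corresponding entries $1,-1$ of $\b$, force $\sum_{j}p_{\s_j}=1$; combined with $\x\ge0$ this says the first $|\Tsets|$ coordinates of a feasible $\x$ form a probability distribution on $\Tsets$. Each of the first $|\Tsett|$ rows reads $-\sum_j\rhoprob(\s_j,\t_i)\,p_{\s_j}+z\le 0$, i.e.\ $z\le\sum_{\s}\ps\rhoprob(\s,\t)$ for every $\t\in\Tsett$. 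Conversely, any distribution $\ps$ on $\Tsets$ together with $z:=\min_{\t}\sum_{\s}\ps\rhoprob(\s,\t)$ gives a feasible point. Since the objective is $\c^\top\x=z$, and since $\rhoprob(\s,\t)\le 1$ always (the numerators $\qbinomial{s_i}{t_i}$ do not exceed the denominators $\qbinomial{\ambdim}{t_i}$), the LP is feasible (e.g.\ via the uniform distribution), bounded above by $1$, and
\[
\max\{\c^\top\x:\A\x\le\b,\ \x\ge0\}=\max_{\ps}\,\min_{\t\in\Tsett}\,\sum_{\s\in\Tsets}\ps\rhoprob(\s,\t),
\]
where the maximum on the right is attained by the distribution $(x_1,\dots,x_{|\Tsets|})$ read off from any LP optimiser.

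Next I would pin down $z$ at an optimiser $\x^\ast=(\ps^\ast,z^\ast)$. The first $|\Tsett|$ constraints give $z^\ast\le\min_{\t}\sum_{\s}p^\ast_{\s}\rhoprob(\s,\t)$, and if this inequality were strict one could raise $z^\ast$ while staying feasible and strictly increase the objective, a contradiction; hence $z^\ast=\min_{\t}\sum_{\s}p^\ast_{\s}\rhoprob(\s,\t)$. To take reciprocals I must check $z^\ast>0$: for each $\t\in\Tsett$ the vector $\scomp(\t,s)\in\Tsets$ satisfies $\scomp(\t,s)\succeq\t$ (using $t\le s\le\ell\mu$, so that $\scomp$ is well defined), whence $\rhoprob(\scomp(\t,s),\t)>0$ by \cref{lem:success_fixed_s_vector}; therefore $\min_{\t}\sum_{\s}\ps\rhoprob(\s,\t)>0$ for any full-support $\ps$, and so $z^\ast>0$. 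Applying $x\mapsto x^{-1}$ then yields
\[
(z^\ast)^{-1}=\Big(\max_{\ps}\,\min_{\t}\sum_{\s}\ps\rhoprob(\s,\t)\Big)^{-1}=\min_{\ps}\,\max_{\t}\Big(\sum_{\s}\ps\rhoprob(\s,\t)\Big)^{-1},
\]
which is exactly \eqref{eq:x_last_entry_solution}; and the distribution $\ps^\ast=(x_1,\dots,x_{|\Tsets|})$ attains this minimax value, i.e.\ it minimises the worst-case expected number of iterations \eqref{eq:worst_case_num_iterations_exact_optimization}.

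I do not anticipate a real obstacle here: this is the textbook conversion of a minimax objective into a linear program, combined with monotonicity of the reciprocal. The only points that require genuine care are checking that the two equality-encoding rows together with the sign constraints really do cut out the probability simplex with $z$ isolated in the last column, and justifying $z^\ast>0$ so that the final reciprocal step is legitimate — for which the existence of $\scomp(\t,s)$ with $\rhoprob(\scomp(\t,s),\t)>0$, guaranteed by \cref{lem:success_fixed_s_vector}, is precisely the ingredient needed.
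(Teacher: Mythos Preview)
Your proposal is correct and follows essentially the same route as the paper: unpack the constraints to see that the last two rows cut out the probability simplex, the first $|\Tsett|$ rows are the epigraph constraints $z\le\sum_{\s}\ps\rhoprob(\s,\t)$, and then pass to the reciprocal to obtain \eqref{eq:x_last_entry_solution}. If anything, you are more careful than the paper, which does not spell out feasibility and boundedness of the LP, the tightness argument $z^\ast=\min_{\t}\sum_{\s}p^\ast_{\s}\rhoprob(\s,\t)$, or the positivity $z^\ast>0$ needed to take the reciprocal; your use of $\scomp(\t,s)$ together with \cref{lem:success_fixed_s_vector} to secure the last point is exactly right.
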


\begin{proof}
We write $\psnumber = x_i$ and $\xi = x_{|\Tsets|+1}$ for a solution $\x$ of the linear program.
The last two rows of $\A$ are equivalent to
\begin{align*}
\sum_{i=1}^{|\Tsets|} \psnumber = 1,
\end{align*}
Together with $\x\geq 0$, we get that the $\psnumber$ form a valid discrete probability mass function.
The first $|\Tsett|$ rows of $\A$ correspond to the constraints
\begin{align*}
\sum_{i=1}^{|\Tsets|} \psnumber \rhoprob(\s_i,\t_j) \geq \xi \quad \forall \, j=1,\dots,|\Tsett|.
\end{align*}
Since $\xi$ is the maximal positive value for which this constaint is fulfilled for all $j=1,\dots,|\Tsett|$ and solutions $\psnumber$,
we have
\begin{align*}
\xi = \max \left\{ \min_{j=1,\dots,|\Tsett|}\!\left\{ \sum_{i=1}^{|\Tsets|} \psnumber \rhoprob(\s_i,\t_j) \right\} \, : \, \psnumber \in [0,1] \, \forall i=1,\dots,|\Tsets|, \sum_{i=1}^{|\Tsets|} \psnumber = 1 \right\}
\end{align*}
which is equivalent to \eqref{eq:x_last_entry_solution}. This proves the claim.
\end{proof}

Using standard methods, the linear program \eqref{eq:linear_program} in \cref{thm:optimal_ps_linear_program} can be solved in polynomial time in the number of variables, $|\Tsets|+1$ (note that the number of linear constraints is in $O(|\Tsets|)$). As, depending on the relative growth of $s$, $\mu$, and $\ell$, this number may grow super-polynomially in $s$, it is usually not possible to solve the linear program efficiently for large code parameters. Furthermore, even if a solution $\x$ is found or pre-computed, it is not apparent how to draw efficiently from the distribution $\psnumber = x_i$ (for all $i=1,\dots, |\Tsets|$).

Nevertheless, we include this ``optimal'' solution to the design of $\ps$ in the discussion in \cref{sec:comparison} for all values of $\ell, \mu,s$ for which we can retrieve a solution in short time (and ignore the issue of efficient drawing).
For these computations, we apply a trick that reduces the number of variables and constraints: We assume that the restriction to those solutions $\x$ such that $x_i = x_j$ for all $i,j$ with permutationally equivalent $\s_i \sim \s_j$. Hence, we can reduce the number of variables to $|\TsetordWITHARGUMENTS{s,\ell,\mu}|+1$ (which may still be super-polynomially in $s$, though) and the number of constraints to $|\Tsettord|+2 \leq |\TsetordWITHARGUMENTS{s,\ell,\mu}|+2$.
The complexity of this generic decoding approach is roughly given by
  \begin{equation*}
   \Woptimal := \Wperiteration \, x_{|\Tsets|+1}^{-1},
  \end{equation*}
where $x_1,\hdots,x_{|\Tsets|+1}$ is a solution vector to the optimization problem in \cref{thm:optimal_ps_linear_program} and $\Wperiteration$ is the cost of one iteration.
The latter value is at least the cost of erasure decoding, which is in $O^\sim(n^3m^3\log(q))$, but the real cost might be larger since we need to be able to efficiently draw from the distribution $\ps$. In the plots in \cref{sec:comparison}, we use the same $\Wperiteration$ as for the other algorithms, which is an optimistic estimate.

It can be seen that in all cases in which we can compute the expected runtime of a generic decoder that draws $\s$ according to such an optimal distribution $\ps$, the ``optimal'' runtime is only insigificantly smaller than the practical solution presented in \cref{sec:super_support_finding}.

\end{document}